% SIAM Article Template
\documentclass[review,onefignum,onetabnum]{siamart171218}

% Information that is shared between the article and the supplement
% (title and author information, macros, packages, etc.) goes into
% ex_shared.tex. If there is no supplement, this file can be included
% directly.

\usepackage[a4paper,top=2cm,bottom=2cm,left=2.5cm,right=2.5cm,marginparwidth=1.75cm]{geometry}
\usepackage{tikz, circuitikz}
\usetikzlibrary{decorations.pathmorphing, positioning, snakes}
\usetikzlibrary{calc}
\usetikzlibrary{graphs,graphs.standard}
\usepackage{amsmath, amssymb}

\usepackage{ifthen}
% \newboolean{showtodos}
% \setboolean{showtodos}{false} % Set to false to hide all todo comments

\newcommand{\note}[1]{%
  \ifthenelse{\boolean{showtodos}}%
    {\todo[inline]{#1}}%
    {}%
}

% "complete sines" code from user "Jake" on StackExchange, https://tex.stackexchange.com/questions/60733/tikz-decorated-paths-without-straight-segment/60757#60757
\newif\ifstartcompletesineup
\newif\ifendcompletesineup
\pgfkeys{
    /pgf/decoration/.cd,
    start up/.is if=startcompletesineup,
    start up=true,
    start up/.default=true,
    start down/.style={/pgf/decoration/start up=false},
    end up/.is if=endcompletesineup,
    end up=true,
    end up/.default=true,
    end down/.style={/pgf/decoration/end up=false}
}
\pgfdeclaredecoration{complete sines}{initial}
{
    \state{initial}[
        width=+0pt,
        next state=upsine,
        persistent precomputation={
            \ifstartcompletesineup
                \pgfkeys{/pgf/decoration automaton/next state=upsine}
                \ifendcompletesineup
                    \pgfmathsetmacro\matchinglength{
                        0.5*\pgfdecoratedinputsegmentlength / (ceil(0.5* \pgfdecoratedinputsegmentlength / \pgfdecorationsegmentlength) )
                    }
                \else
                    \pgfmathsetmacro\matchinglength{
                        0.5 * \pgfdecoratedinputsegmentlength / (ceil(0.5 * \pgfdecoratedinputsegmentlength / \pgfdecorationsegmentlength ) - 0.499)
                    }
                \fi
            \else
                \pgfkeys{/pgf/decoration automaton/next state=downsine}
                \ifendcompletesineup
                    \pgfmathsetmacro\matchinglength{
                        0.5* \pgfdecoratedinputsegmentlength / (ceil(0.5 * \pgfdecoratedinputsegmentlength / \pgfdecorationsegmentlength ) - 0.4999)
                    }
                \else
                    \pgfmathsetmacro\matchinglength{
                        0.5 * \pgfdecoratedinputsegmentlength / (ceil(0.5 * \pgfdecoratedinputsegmentlength / \pgfdecorationsegmentlength ) )
                    }
                \fi
            \fi
            \setlength{\pgfdecorationsegmentlength}{\matchinglength pt}
        }] {}
    \state{downsine}[width=\pgfdecorationsegmentlength,next state=upsine]{
        \pgfpathsine{\pgfpoint{0.5\pgfdecorationsegmentlength}{0.5\pgfdecorationsegmentamplitude}}
        \pgfpathcosine{\pgfpoint{0.5\pgfdecorationsegmentlength}{-0.5\pgfdecorationsegmentamplitude}}
    }
    \state{upsine}[width=\pgfdecorationsegmentlength,next state=downsine]{
        \pgfpathsine{\pgfpoint{0.5\pgfdecorationsegmentlength}{-0.5\pgfdecorationsegmentamplitude}}
        \pgfpathcosine{\pgfpoint{0.5\pgfdecorationsegmentlength}{0.5\pgfdecorationsegmentamplitude}}
}
    \state{final}{}
}

% SIAM Shared Information Template
% This is information that is shared between the main document and any
% supplement. If no supplement is required, then this information can
% be included directly in the main document.

% Packages and macros go here
\usepackage{lipsum}
\usepackage{amsfonts}
\usepackage{graphicx}
\usepackage{epstopdf}
\usepackage{algorithmic}
\ifpdf
  \DeclareGraphicsExtensions{.eps,.pdf,.png,.jpg}
\else
  \DeclareGraphicsExtensions{.eps}
\fi

% Add a serial/Oxford comma by default.

% Used for creating new theorem and remark environments
\newsiamremark{remark}{Remark}
\newsiamremark{hypothesis}{Hypothesis}
\crefname{hypothesis}{Hypothesis}{Hypotheses}
\newsiamthm{claim}{Claim}

% Sets running headers as well as PDF title and authors
\headers{Spanning Trees of SP Graphs up to Automorphism}{Mithra Karamchedu and Lucas Bang}

% Title. If the supplement option is on, then "Supplementary Material"
% is automatically inserted before the title.
\title{Generating the Spanning Trees of Series-Parallel Graphs up to Graph Automorphism\thanks{Posted 08/18/2025.\funding{This work was supported in part by National Science Foundation Grant No.\ 1950885, during the summer Mithra Karamchedu participated in a computer science REU program at Harvey Mudd College.}}}

% Authors: full names plus addresses.
\author{Mithra Karamchedu\thanks{Department of Computer Science and Department of Mathematics, Harvey Mudd College, Claremont, CA 
  (\email{mkaramchedu@hmc.edu}).}
\and Lucas Bang\thanks{Department of Computer Science, Harvey Mudd College, Claremont, CA 
  (\email{bang@cs.hmc.edu}).}
}

\usepackage{amsopn}

%% Added on Overleaf: enabling xr
\makeatletter
\newcommand*{\addFileDependency}[1]{% argument=file name and extension
  \typeout{(#1)}% latexmk will find this if $recorder=0 (however, in that case, it will ignore #1 if it is a .aux or .pdf file etc and it exists! if it doesn't exist, it will appear in the list of dependents regardless)
  \@addtofilelist{#1}% if you want it to appear in \listfiles, not really necessary and latexmk doesn't use this
  \IfFileExists{#1}{}{\typeout{No file #1.}}% latexmk will find this message if #1 doesn't exist (yet)
}
\makeatother

\newcommand*{\myexternaldocument}[1]{%
    \externaldocument{#1}%
    \addFileDependency{#1.tex}%
    \addFileDependency{#1.aux}%
}
%%% END HELPER CODE
%%% Local Variables: 
%%% mode:latex
%%% TeX-master: "ex_article"
%%% End: 

% Optional PDF information
\ifpdf
\hypersetup{
  pdftitle={Generating the Spanning Trees of Series-Parallel Graphs up to Graph Automorphism},
  pdfauthor={Mithra Karamchedu and Lucas Bang}
}
\fi

% The next statement enables references to information in the
% supplement. See the xr-hyperref package for details.

%% Use \myexternaldocument on Overleaf
\myexternaldocument{ex_supplement}

% FundRef data to be entered by SIAM
%<funding-group>
%<award-group>
%<funding-source>
%<named-content content-type="funder-name"> 
%</named-content> 
%<named-content content-type="funder-identifier"> 
%</named-content>
%</funding-source>
%<award-id> </award-id>
%</award-group>
%</funding-group>

\begin{document}
\nolinenumbers

\maketitle

% REQUIRED
\begin{abstract}
  In this paper, we investigate the problem of generating the spanning trees of a graph $G$ up to the automorphisms or ``symmetries\rq\rq\ of $G$. After introducing and surveying this problem for general input graphs, we present algorithms that fully solve the case of \emph{series-parallel graphs}, under two standard definitions. We first show how to generate the nonequivalent spanning trees of a \emph{oriented} series-parallel graph $G$ in output-linear time, where both terminals of $G$ have been individually distinguished (i.e.\ applying an automorphism that exchanges the terminals produces a different series-parallel graph). Subsequently, we show how to adapt these oriented algorithms to the case of \emph{semioriented} series-parallel graphs, where we still have a set of two distinguished terminals but neither has been designated as a source or sink. Finally, we discuss the case of \emph{unoriented} series-parallel graphs, where no terminals have been distinguished and present a few observations and open questions relating to them. The algorithms we present generate the nonequivalent spanning trees of $G$ but never explicitly compute the automorphism group of $G$, revealing how the recursive structure of $G$'s automorphism group mirrors that of its spanning trees.
\end{abstract}

% REQUIRED
\begin{keywords}
  graph automorphisms, graph algorithms, combinatorial graph theory
\end{keywords}

% REQUIRED
\begin{AMS}
    05C60, 05C85, 68R10
\end{AMS}

\section{Introduction}

In this paper, we examine the algorithmic aspects of generating a graph's spanning trees under the actions of graph automorphism. Specifically, given a input graph $G$, we ask whether we can efficiently generate the nonequivalent spanning trees of $G$ up to $G$'s automorphisms---where two spanning trees $T_1$ and $T_2$ are equivalent when $T_1 = \sigma T_2$ for some $\sigma \in \text{Aut}(G)$. In addition to surveying this problem over general input graphs, we present algorithms that solve the case of series-parallel graphs.

Spanning tree algorithms are of fundamental importance in computer science and discrete mathematics, and numerous combinatorial problems related to counting and generating spanning trees have been extensively studied \cite{chakraborty, 10.5555/1984890, rubey}. In this paper, we build upon existing problems by also considering how the spanning trees of a graph $G$ relate to the symmetric structure of $G$, their ambient graph.

Several researchers have explored questions closely related to those we study here. In their 1990 paper, Kano and Sakamoto prove necessary and sufficient criteria for a labeled, undirected graph $G$ to have a spanning tree fixed by a subgroup of $G$'s automorphisms \cite{DBLP:journals/dm/KanoS90}. In their respective 2008 and 2009 master's theses, Mohr and Van den Boomen study the problem of generating the equivalence classes of $G$'s spanning trees up to \emph{tree isomorphism}---where the spanning trees $T_1$ and $T_2$ are equivalent when there exists an \emph{isomorphism} $\sigma$ such that $T_1 = \sigma T_2$ \cite{mohr, vanDenBoomen}. We take the same problem that Mohr and Van den Boomen examine but place the additional constraint that the isomorphism $\sigma$ is also an automorphism of $G$.

To illustrate this problem, let us consider the diamond graph $D$ below:
\vspace{5pt}
\[
D\hspace{0.5em}=\hspace{0.5em}\begin{tikzpicture}[scale = 0.75, baseline=($(3.base)!.5!(2.base)$)]
\begin{scope}[every node/.style={circle,thick,draw}]
    \node (1) at (0,0) {1};
    \node (2) at (2,1) {2};
    \node (3) at (2,-1) {3};
    \node (4) at (4,0) {4};
\end{scope}

\draw[-] (1) -- (2);
\draw[-] (1) -- (3);
\draw[-] (3) -- (4);
\draw[-] (2) -- (3);
\draw[-] (2) -- (4);
\end{tikzpicture}
\vspace{5pt}\]
We find that $D$ has the following eight spanning trees:
\vspace{10pt}

\begin{center}
    \begin{tabular}{cccc}
    \begin{tikzpicture}[scale = 0.65, baseline=($(3.base)!.5!(2.base)$)]
\begin{scope}[every node/.style={scale = 0.65, circle,thick,draw}]
    \node (1) at (0,0) {1};
    \node (2) at (2,1) {2};
    \node (3) at (2,-1) {3};
    \node (4) at (4,0) {4};
\end{scope}

\draw[-, red, ultra thick] (1) -- (2);
\draw[-, red, ultra thick] (1) -- (3);
\draw[-, red, ultra thick] (3) -- (4) node[midway, below right, black] {$T_1$};
\draw[-] (2) -- (3);
\draw[-] (2) -- (4);
\end{tikzpicture} & \begin{tikzpicture}[scale = 0.65, baseline=($(3.base)!.5!(2.base)$)]
\begin{scope}[every node/.style={scale = 0.65, circle,thick,draw}]
    \node (1) at (0,0) {1};
    \node (2) at (2,1) {2};
    \node (3) at (2,-1) {3};
    \node (4) at (4,0) {4};
\end{scope}

\draw[-, red, ultra thick] (1) -- (2);
\draw[-, red, ultra thick] (1) -- (3);
\draw[-] (3) -- (4) node[midway, below right, black] {$T_2$};
\draw[-] (2) -- (3);
\draw[-, red, ultra thick] (2) -- (4);
\end{tikzpicture} & \begin{tikzpicture}[scale = 0.65, baseline=($(3.base)!.5!(2.base)$)]
\begin{scope}[every node/.style={scale = 0.65, circle,thick,draw}]
    \node (1) at (0,0) {1};
    \node (2) at (2,1) {2};
    \node (3) at (2,-1) {3};
    \node (4) at (4,0) {4};
\end{scope}

\draw[-, red, ultra thick] (1) -- (2);
\draw[-] (1) -- (3);
\draw[-, red, ultra thick] (3) -- (4) node[midway, below right, black] {$T_3$};
\draw[-] (2) -- (3);
\draw[-, red, ultra thick] (2) -- (4);
\end{tikzpicture} & \begin{tikzpicture}[scale = 0.65, baseline=($(3.base)!.5!(2.base)$)]
\begin{scope}[every node/.style={scale = 0.65, circle,thick,draw}]
    \node (1) at (0,0) {1};
    \node (2) at (2,1) {2};
    \node (3) at (2,-1) {3};
    \node (4) at (4,0) {4};
\end{scope}

\draw[-] (1) -- (2);
\draw[-, red, ultra thick] (1) -- (3);
\draw[-, red, ultra thick] (3) -- (4) node[midway, below right, black] {$T_4$};;
\draw[-] (2) -- (3);
\draw[-, red, ultra thick] (2) -- (4);
\end{tikzpicture}\\[30pt]
\begin{tikzpicture}[scale = 0.65, baseline=($(3.base)!.5!(2.base)$)]
\begin{scope}[every node/.style={scale = 0.65, circle,thick,draw}]
    \node (1) at (0,0) {1};
    \node (2) at (2,1) {2};
    \node (3) at (2,-1) {3};
    \node (4) at (4,0) {4};
\end{scope}

\draw[-, red, ultra thick] (1) -- (2);
\draw[-] (1) -- (3);
\draw[-, red, ultra thick] (3) -- (4) node[midway, below right, black] {$T_5$};;
\draw[-, red, ultra thick] (2) -- (3);
\draw[-] (2) -- (4);
\end{tikzpicture} & \begin{tikzpicture}[scale = 0.65, baseline=($(3.base)!.5!(2.base)$)]
\begin{scope}[every node/.style={scale = 0.65, circle,thick,draw}]
    \node (1) at (0,0) {1};
    \node (2) at (2,1) {2};
    \node (3) at (2,-1) {3};
    \node (4) at (4,0) {4};
\end{scope}

\draw[-] (1) -- (2);
\draw[-, red, ultra thick] (1) -- (3);
\draw[-] (3) -- (4) node[midway, below right, black] {$T_6$};
\draw[-, red, ultra thick] (2) -- (3);
\draw[-, red, ultra thick] (2) -- (4);
\end{tikzpicture} & \begin{tikzpicture}[scale = 0.65, baseline=($(3.base)!.5!(2.base)$)]
\begin{scope}[every node/.style={scale = 0.65, circle,thick,draw}]
    \node (1) at (0,0) {1};
    \node (2) at (2,1) {2};
    \node (3) at (2,-1) {3};
    \node (4) at (4,0) {4};
\end{scope}

\draw[-, red, ultra thick] (1) -- (2);
\draw[-] (1) -- (3);
\draw[-] (3) -- (4) node[midway, below right, black] {$T_7$};
\draw[-, red, ultra thick] (2) -- (3);
\draw[-, red, ultra thick] (2) -- (4);
\end{tikzpicture} & \begin{tikzpicture}[scale = 0.65, baseline=($(3.base)!.5!(2.base)$)]
\begin{scope}[every node/.style={scale = 0.65, circle,thick,draw}]
    \node (1) at (0,0) {1};
    \node (2) at (2,1) {2};
    \node (3) at (2,-1) {3};
    \node (4) at (4,0) {4};
\end{scope}

\draw[-] (1) -- (2);
\draw[-, red, ultra thick] (1) -- (3);
\draw[-, red, ultra thick] (3) -- (4) node[midway, below right, black] {$T_8$};
\draw[-, red, ultra thick] (2) -- (3);
\draw[-] (2) -- (4);
\end{tikzpicture}
    \end{tabular}
\end{center}
\vspace{10pt}

Furthermore, the automorphism group of $D$ is given by
\begin{align*} \text{Aut}(D) &= \Bigg  \{\, 
\underbrace{\begin{pmatrix}
    1 & 2 & 3 & 4 \\
    1 & 2 & 3 & 4   
\end{pmatrix}}_e,
\underbrace{\begin{pmatrix}
  1 & 2 & 3 & 4 \\
  4 & 2 & 3 & 1   
\end{pmatrix}}_h,
\underbrace{\begin{pmatrix}
  1 & 2 & 3 & 4 \\
  1 & 3 & 2 & 4   
\end{pmatrix}}_v,
\underbrace{\begin{pmatrix}
  1 & 2 & 3 & 4 \\
  4 & 3 & 2 & 1   
\end{pmatrix}}_r
\,\Bigg\},\end{align*}where $e$ is the identity, $h$ is the horizontal flip, $v$ is the vertical flip, and $r = hv$ is the $180^\circ$ rotation. With this, we can generate the orbits of $\textsf{ST}(D)$ over $\text{Aut}(D)$, the sets of mutually equivalent spanning trees. In particular, we have that \[T_1 = v T_2 = r T_3 = h T_4,\qquad T_5 = r T_6, \qquad
T_7 = v T_8,\] so our three orbits of equivalent trees are $\{T_1, T_2, T_3, T_4\}$, $\{T_5, T_6\}$, and $\{T_7, T_8\}$.

Intuitively speaking, the three orbits represent the three fundamentally different ``kinds\rq\rq\ of spanning trees that $D$ has up to its symmetric structure: the first orbit represents paths that circle the outside of $D$, the second represents paths that use the central edge of $D$, and the third represents so-called ``claw\rq\rq\ graphs (with three leaves that hinge on a central vertex).

The problem of generating the spanning trees of a graph $G$ up to automorphism is the same as returning exactly one representative from each orbit. We can therefore say that the nonequivalent spanning trees of $D$ are, for instance, $T_1$, $T_5$, and $T_7$ (among the several ways to choose orbit representatives). Visually, these nonequivalent spanning trees are:
\[\begin{tikzpicture}[scale = 0.65, baseline=($(3.base)!.5!(2.base)$)]
\begin{scope}[every node/.style={scale = 0.65, circle,thick,draw}]
    \node (1) at (0,0) {1};
    \node (2) at (2,1) {2};
    \node (3) at (2,-1) {3};
    \node (4) at (4,0) {4};
\end{scope}

\draw[-, red, very thick] (1) -- (2);
\draw[-, red, very thick] (1) -- (3);
\draw[-, red, very thick] (3) -- (4) node[midway, below right, black] {$T_1$};
\draw[-] (2) -- (3);
\draw[-] (2) -- (4);
\end{tikzpicture}\qquad \begin{tikzpicture}[scale = 0.65, baseline=($(3.base)!.5!(2.base)$)]
\begin{scope}[every node/.style={scale = 0.65, circle,thick,draw}]
    \node (1) at (0,0) {1};
    \node (2) at (2,1) {2};
    \node (3) at (2,-1) {3};
    \node (4) at (4,0) {4};
\end{scope}

\draw[-, red, very thick] (1) -- (2);
\draw[-] (1) -- (3);
\draw[-, red, very thick] (3) -- (4) node[midway, below right, black] {$T_5$};
\draw[-, red, very thick] (2) -- (3);
\draw[-] (2) -- (4);
\end{tikzpicture} \qquad \begin{tikzpicture}[scale = 0.65, baseline=($(3.base)!.5!(2.base)$)]
\begin{scope}[every node/.style={scale = 0.65, circle,thick,draw}]
    \node (1) at (0,0) {1};
    \node (2) at (2,1) {2};
    \node (3) at (2,-1) {3};
    \node (4) at (4,0) {4};
\end{scope}

\draw[-, red, very thick] (1) -- (2);
\draw[-] (1) -- (3);
\draw[-] (3) -- (4) node[midway, below right, black] {$T_7$};
\draw[-, red, very thick] (2) -- (3);
\draw[-, red, very thick] (2) -- (4);
\end{tikzpicture}\]

\subsection{Summary of Paper}
In Section \ref{sec:preliminaries} below, we introduce the prerequisite concepts and notation that will be used throughout the paper. In Section \ref{sec:bruteforce}, we then analyze the complexity of a brute-force algorithm (Algorithm \ref{bruteforce}) that solves this problem, and we briefly discuss special cases in which we can substantially improve upon Algorithm \ref{bruteforce}. In Section \ref{sec:seriesparallel}, we formally define series-parallel graphs and what makes such a graph ``oriented,\rq\rq\ ``semioriented,\rq\rq\ or ``unoriented.\rq\rq\ In Section \ref{sec:oriented}, we fully solve the case of oriented series-parallel graphs and analyze the complexity of the algorithms we propose for them. We then explore semioriented graphs in Section \ref{sec:semioriented}, showing how they largely reduce the oriented case with a few lexicographical modifications. In Section \ref{unoriented}, we make a few observations about unoriented series-parallel graphs, and we highlight the added challenges that they pose over oriented and semioriented graphs. Finally, in Section \ref{sec:conclusion}, we summarize the main results of our paper and offer open questions about generating the nonequivalent spanning trees of series-parallel graphs and beyond.

\section{Preliminaries}
\label{sec:preliminaries}
In this section, we introduce concepts and notation that will be used throughout the paper.

A \emph{spanning tree} $T$ of the graph $G$ is any connected subgraph of $G$ that satisfies two properties: (1) $T$ is a tree (i.e.\ contains no cycles), and (2) $T$ includes or ``spans\rq\rq\,\,every vertex of $G.$ The number of spanning trees of $G$ is denoted $\tau(G)$. For convenience, we also refer to the set of all spanning trees of $G$ as $\textsf{ST}(G).$

A \emph{near tree} $N$ of the graph $G$ is any spanning tree of $G$ minus a single edge. Equivalently, a near tree is a set of $n - 2$ cycle-free edges of $G$ \cite{10.5555/1984890}. We refer to the set of all near trees of $G$ as $\textsf{NT}(G)$.

\indent Intuitively speaking, an \emph{automorphism} is a ``symmetry\rq\rq\,\,of $G$, a way of relabeling $G$'s vertices that preserves adjacency. More formally, let $V(G)$ and $E(G)$ denote the vertex and edge sets of $G$, respectively. An automorphism is an isomorphism from $G$ to itself, a function $\sigma : V(G) \rightarrow V(G)$ such that $\{u, v\} \in E(G)$ if and only if $\{\sigma(u), \sigma(v)\} \in E(G).$  The collection of all such automorphisms of $G$ is known as $G$'s \emph{automorphism group}, written as $\text{Aut}(G)$.

The problem of deciding whether a graph has a nontrivial automorphism group is known as the \emph{graph automorphism problem}, and it is known to be in NP; however, it is unknown whether the problem belongs to P, is NP-complete, or lies in between \cite{lubiw}. 
% \note{citation needed}
Among the fastest known algorithms to generate a graph's automorphism group are Darga et al.'s \texttt{saucy}, Junttila and Kaski's \texttt{bliss}, L\'opez-Presa et al.'s \texttt{conauto}, and McKay and Piperno's \texttt{nauty}, which each have strengths in special input cases. None of these algorithms run in worst-case polynomial time but are still quite efficient in practice \cite{lopezpresa}.

For the remainder of the paper, we will assume that $G$ is a simple, labeled, and undirected graph. Given $G,$ we say that the spanning trees $T_1, T_2 \in \textsf{ST}(G)$ are \emph{equivalent} when $T_1 = \sigma T_2$ for some $\sigma \in \text{Aut}(G).$ The \emph{orbit} of any spanning tree $T \in \textsf{ST}(G)$ over $\text{Aut}(G)$ is the set of all spanning trees equivalent to $T$: 
\[\text{Aut}(G)_T = \{\sigma T : \sigma \in \text{Aut}(G)\}.\]
The set of all such orbits is denoted $\textsf{ST}(G)/\text{Aut}(G)$. As mentioned earlier, the problem of generating the nonequivalent spanning trees of $G$ is simply that of finding a representative tree from each orbit. (To reconstruct the orbit of a representative tree, we apply all automorphisms of $G$ to it.)

% \newpage
\section{Naive Brute-Force Analysis}
\label{sec:bruteforce}

If $G$ has a nontrivial automorphism group (i.e.\ contains more than the identity), then the task of determining the orbits of $\textsf{ST}(G)$ over $\text{Aut}(G)$, or even just one representative from each orbit, can be costly. One straightforward brute-force solution is to consider each spanning tree $T$ of $G$ and determine which orbit $T$ belongs to. If $T$ belongs to an orbit we have already discovered, we add $T$ to that orbit; otherwise, we create a new orbit containing $T.$

\indent To check whether $T$ belongs to a given orbit, we assign each orbit $O_i$ for $i \geq 1$ a representative element $R_i$. Then, if there is some automorphism $\sigma \in \text{Aut}(G)$ such that $\sigma T = R_i,$ our spanning tree $T$ must belong in orbit $O_i.$ This gives us the brute-force algorithm below, which returns all orbits---each a set of spanning trees---and their representatives:
\begin{algorithm}[ht]
    \caption{$\texttt{brute\_force}(G)$}
    \label{bruteforce}
    \vspace{3pt}
    \begin{algorithmic}
        \REQUIRE Graph $G$
        \STATE $\text{Aut}(G) \gets \texttt{compute\_automorphism\_group}(G)$
        \STATE $\texttt{orbits} \gets \{\}$
        \FOR{tree $T$ in $\textsf{ST}(G)$}
        \STATE $\texttt{matched} \gets \texttt{false}$
        \FOR{$(R_i, O_i) \in \texttt{orbits}$}
        \IF{$\texttt{not matched}$}
        \FOR{$\sigma \in \text{Aut}(G)$}
        \IF{$\sigma T = R_i$}
        \STATE $O_i \gets O_i \cup \{T\}$
        \STATE $\texttt{matched} \gets \texttt{true}$
        \ENDIF
        \ENDFOR
        \ENDIF
        \ENDFOR
        \IF{$\texttt{not matched}$}
        \STATE $\texttt{orbits} \gets \texttt{orbits} \cup \{(T, \{T\})\}$
        \ENDIF
        \ENDFOR
        \STATE \RETURN \texttt{orbits}
    \end{algorithmic}
    \vspace{3pt}
\end{algorithm}

Although Algorithm \ref{bruteforce} is intuitive, its time complexity can be significant when $G$ has a nontrivial automorphism group. For each spanning tree $T,$ we compare $T$ to a certain number of orbit representatives; for each representative $R_i$, we have to multiply $T$ by $\vert \text{Aut}(G)\vert$ different automorphisms to see if any produce $R_i.$ \footnote{However, for the last representative we compare $T$ to (the ``successful\rq\rq\,\,representative), we may not need to apply all automorphisms before we find a match.} Therefore, the number of automorphism applications our brute-force algorithm performs is given by
\begin{gather*}
    O\left(\sum_{i = 1}^{\tau(G)} k_i \cdot \vert \text{Aut}(G)\vert\right),
\end{gather*}
where $k_i$ is the number of orbits checked for the tree $T_i.$ This in turn simplifies to \[O(\tau(G) \cdot \vert \text{Aut}(G)\vert \cdot \text{[Mean Steps]}),\] where [Mean Steps] is the average over all $k_i$. [Mean Steps] is difficult to quantify precisely, since it depends on several factors, including the distribution of spanning trees across the orbits, the order the orbits are accessed, and the specific sequence in which spanning trees are generated (and how uniform that is relative to the orbits).

According to a 1963 result by Erd\H{o}s and R\'enyi, almost all graphs are \emph{asymmetric}, meaning that they only possess the identity automorphism \cite{erdHos1963asymmetric}. When this is the case, every spanning tree of $G$ belongs to a different orbit and no automorphism multiplications are necessary, so the time complexity of implicitly generating the nonequivalent spanning trees of $G$ is simply $\Theta(\tau(G))$. However, there are still infinitely many graphs $G$ with nontrivial automorphism groups, and these are the graphs for which the problem of generating $\textsf{ST}(G)$'s orbits over $\text{Aut}(G)$ becomes truly relevant.

In the worst case, we apply Algorithm \ref{bruteforce} to a complete graph $K_n$ (with $n$ vertices) without taking the graph's convenient automorphism structure into account ($K_n$ has \emph{every} possible automorphism). In this case, we have that $\tau(G) = n^{n - 2}$ and $\vert \text{Aut}(G) \vert = n!$, which are the largest such values for a graph with $n$ vertices.

For a complete graph, the orbits simply correspond to the different unlabeled trees on $n$ vertices (that is, the different order-$n$ trees up to isomorphism). The number of such trees is given by sequence A000055 in the On-Line Encyclopedia of Integer Sequences (OEIS) \cite{OEIS}; asymptotically, we have that \[\text{A000055}(n) = \Theta\left(\frac{\alpha^n}{n^{2.5}}\right),\]
where $\alpha$ is ``Otter's rooted tree constant,\rq\rq\,\,given by $\alpha = 2.9557652 \ldots$ \cite{finch2003mathematical}\cite{OEIS} Although [Mean Steps] certainly grows with $O(\alpha^n/n^{2.5})$, this bound might not be tight since most of the spanning trees could be clustered in the first few orbits that we check. Substituting our values for $\tau(K_n),$ $\vert \text{Aut}(K_n)\vert,$ and [Mean Steps], the runtime complexity of our brute-force algorithm on $K_n$ is \[O\left(n^{n - 2}\cdot n! \cdot \frac{\alpha^n}{n^{2.5}}\right) = O\big(n^{n - 4.5}\cdot n! \cdot \alpha^n\big),\]where our cost metric is the number of automorphism applications (note that the cost of applying an automorphism scales with the size of our graph, so it too can be significant). This runtime is superexponential in the number of vertices, and it illustrates the sheer complexity of generating $\textsf{ST}(G)$'s orbits over $\text{Aut}(G)$ by resorting to a naive brute-force approach.

However, for many particular cases of graphs, we can significantly improve on this brute-force algorithm by finding ``analytic,\rq\rq\ structure-exploiting approaches tailored to specific graph families. If $G$ is the 1-skeleton of a polyhedron, for instance, then the same problem is equivalent to that of generating the polyhedron's geometrically distinct \emph{unfoldings} \cite{polyhedra, prisms}. If $G$ is the complete bipartite graph $K_{n ,n}$, this problem is the same as generating the unlabeled \emph{equicolorable} trees on $n$ vertices \cite{pippenger}. And, as noted earlier, if $G$ is the complete graph $K_n$, we have the problem of generating the unlabeled trees on $n$ vertices. However, many graphs might be very \emph{close} to a complete graph but lack its full symmetric structure (i.e.\ might equal $K_n$ minus a few edges), meaning that they might not receive as much benefit as a complete graph does over a brute-force approach.

\section{Series-Parallel Graphs}
\label{sec:seriesparallel}
Perhaps the most well-behaved family of graphs with respect to their spanning trees are \emph{series-parallel graphs}, which often represent easy yet conceptually nontrivial cases of graph problems that are difficult in general. In their 1982 paper, Takamizawa, Nishizeki, and Saito show that several NP-complete combinatorial problems have linear-time solutions on series-parallel graphs \cite{DBLP:journals/jacm/TakamizawaNS82}. Building on the work of Smith, Knuth also demonstrates how the problem of generating all spanning trees becomes especially simple on series-parallel graphs due to their constrained recursive structure \cite{10.5555/1984890, smith}.

Intuitively speaking, a series-parallel graph is the underlying graph of a conventional electrical circuit with two terminals. Here, we borrow Knuth's recursive definition from \emph{The Art of Computer Programming}. A graph $G$ is series-parallel when it has two \emph{terminal} vertices, say $s$ and $t,$ and one of the following is true: (1) $G$ is a single edge $s$\,\,--{}--\,\,$t$, (2) $G$ is a \emph{serial superedge} formed by linking together $k \geq 2$ series-parallel subgraphs $G_i$ with terminals $s_i$ and $t_i$, such that $s = s_1,$ $t = t_k,$ and $t_i = s_{i + 1}$ for $1 \leq i < k,$ or (3) $G$ is a \emph{parallel superedge} formed by linking together $k \geq 2$ series-parallel subgraphs $G_i$, such that all $G_i$ have the same terminals $s$ and $t$ \cite{10.5555/1984890}. We call the $G_i$'s the \emph{principal} subgraphs of $G$.

Due to this recursive structure---where serial superedges are built from parallel ones, and vice versa---one especially natural way to represent a series-parallel graph is a \emph{decomposition tree}. If $G$ is a single edge $s$\,\,--{}--\,\,$t$, this tree is a singleton node whose label records the endpoints of $G$. If $G$ is a serial or parallel superedge, the root node of the decomposition tree is an \emph{S-node} or \emph{P-node}, respectively; the child subtrees of our root node are the decomposition trees of each $G_i$. In this decomposition tree, we can enforce that the node types alternate from level to level, so that the children of S-nodes are P-nodes, and the children of P-nodes are S-nodes (we can treat our leaves---the edge nodes---as both S- and P-nodes).
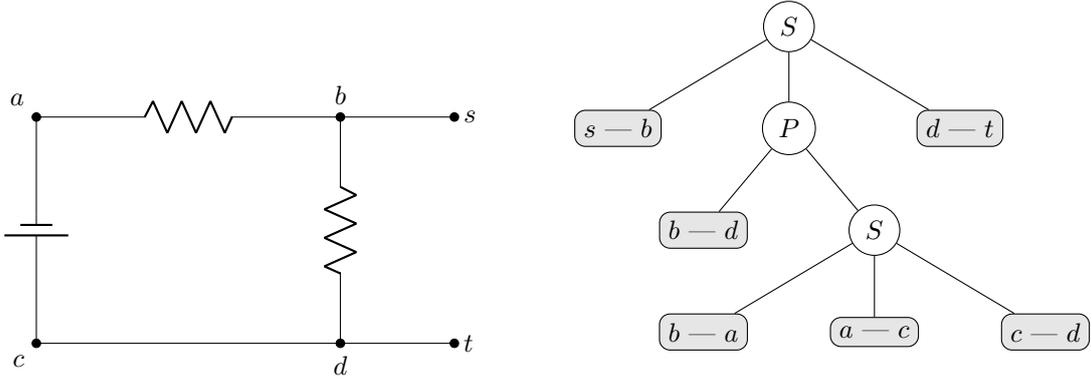
\begin{figure}[h]
    \centering
\begin{circuitikz}[baseline=($(A.base)!.5!(E.base)$)]
		\coordinate (Earth) at (0,-1);
		\coordinate (A) at (0,0);
		\coordinate (B) at (0,3);
		\coordinate (C) at (4,3);
		\coordinate (D) at (4,0);
            \coordinate (E) at (5.5, 3);
            \coordinate (F) at (5.5, 0);
            
		\draw (Earth) (A) node[below left = 0.5mm] {$c$} to[battery1, v>, name=vs, *-] (B) node[above left=0.5mm] {$a$}
				to[R, v^<, name=vr1, *-]  (C) node[above=0.5mm] {$b$}
				to [R, v^<, name=vr2, *-*] (D) node[below=0.5mm] {$d$}
                to (A);

            \draw (Earth) (C) to[short, -*] (E) node[right]{$s$};
            
            \draw (Earth) (D) to[short, -*] (F) node[right]{$t$};
	\end{circuitikz}
    \qquad\quad
    \begin{tikzpicture}[baseline=($(P.center)!.5!(B.center)$), scale = 0.9, draw, edge from parent/.style={draw,-},
    level 1/.style={sibling distance=25mm},
    level 2/.style={sibling distance=25mm},
]
	\node [draw, circle] (S) {$S$}
		child {node [draw, rounded corners, rectangle, fill = gray!20]{$s$ --- $b$}}
            child {
                node [draw, circle] (P) {$P$}
                child { node [draw, rounded corners, rectangle, fill=gray!20] (B1) {$b$ --- $d$} }
                child {node [draw, circle]{$S$}
                         child {node [draw, rounded corners, rectangle, fill=gray!20] {$b$ --- $a$}}
                         child {node [draw, rounded corners, rectangle, fill=gray!20] {$a$ --- $c$}}
                         child {node [draw, rounded corners, rectangle, fill=gray!20] (B) {$c$ --- $d$}}
               } 
            }
            child {node [draw, rounded corners, rectangle, fill=gray!20] {$d$ --- $t$}};
\end{tikzpicture}
    \caption{A voltage divider circuit with two terminals $s$ and $t$ (left) and the decomposition tree of its underlying graph (right).}
\end{figure}

In deciding whether a graph $G$ is series-parallel, however, the set of terminals $\{s, t\}$ of $G$ is not necessarily unique. Furthermore, sources differ on whether $s$ and $t$ are distinguished vertices of $G$, and they often delineate between the concept of a ``series-parallel graph\rq\rq\ and a ``two-terminal series-parallel graph\rq\rq\ accordingly. For our purposes, we will study three natural versions of series-parallel graphs and define equivalence relations on the spanning trees of each, as described below. We call these three versions ``oriented,\rq\rq\ ``semioriented,\rq\rq\  and ``unoriented,\rq\rq\ based on the way the terminals add a natural direction to $G$.
\begin{definition}
    An \emph{oriented} series-parallel graph $(G, s, t)$ consists of a series-parallel graph $G$ (as defined above), a \emph{source} terminal $s$, and a \emph{sink} terminal $t$. We say that $(G, s, t) = (G^\prime, s^\prime, t^\prime)$ when $G = G^\prime$, $s = s^\prime,$ and $t = t^\prime.$
\end{definition}
Note that $s$ is \emph{individually} distinguished as a source and $t$ as a sink: $(G, s, t)$ is a different oriented series-parallel graph from $(G, t, s)$, even though the underlying graph $G$ as well as the set of terminals $\{s, t\}$ are kept the same. As will become important later, an oriented series-parallel graph $(G, s, t)$ is essentially equivalent to a \emph{directed} series-parallel graph, where all edges flow from the source $s$ to the sink $t$ (see Section \ref{isomorphism}).

\begin{definition}A \emph{semioriented} series-parallel graph $(G, \{s, t\})$ consists of a series-parallel graph $G$ and a set of terminals $\{s, t\}$. We say that $(G, \{s, t\}) = (G^\prime, \{s^\prime, t^\prime\})$ when $G = G^\prime$ and $\{s, t\} = \{s^\prime, t^\prime\}$.
\end{definition}
Although the graph still has two terminals $s$ and $t$, neither is designated as a source or sink. Therefore, the semioriented graph $(G, \{s, t\})$ remains the same even when we apply an automorphism that exchanges $s$ and $t$.

In an \emph{unoriented} series-parallel graph $G$, neither of the terminals has been distinguished (i.e.\ $G$ is a vanilla series-parallel graph). That is, $G$ is series-parallel, but there may be several ways of choosing two terminals $s$ and $t$ among $G$'s vertices to form an oriented series-parallel graph. Since unoriented series-parallel graphs have the least amount of imposed structure of the three variants, they prove to be the most challenging case (see Section \ref{unoriented}).

With these definitions in mind, we can now explore the problem of generating the nonequivalent spanning trees of $G$, where $G$ is oriented, semioriented, or unoriented. Let us begin with the oriented case.

\section{Oriented Graphs}
\label{sec:oriented}
In order to generate the nonequivalent spanning trees of oriented series-parallel graphs, we will begin by proving a few helpful results about the graphs' automorphism structure. First, we will define what it means for two spanning trees to be \emph{equivalent} under this said structure, accounting for the fact that our source and sink are each distinguished. Let $G$ be an oriented series-parallel graph with source $s$ and sink $t$, and let $\text{Aut}_\text{or}(G)$ denote the subgroup of $G$'s automorphisms that fix $s$ and $t$:
\begin{definition}
    Given an oriented series-parallel graph $(G, s, t)$, we write that $\emph{Aut}_\emph{or}(G, s, t) = \{\sigma \in \emph{Aut}(G) : \sigma(s) = s\text{ and }\sigma(t) = t\}.$ For brevity, we refer to $\emph{Aut}_\emph{or}(G, s, t)$ as simply $\emph{Aut}_\emph{or}(G)$.
\end{definition}
We then define our equivalence relation on the spanning trees as follows, accounting for the fact that the terminals are designated.
\begin{definition}
  Let $T_1$ and $T_2$ be spanning trees of the oriented series-parallel graph $G$. Then, $T_1$ and $T_2$ are equivalent if and only if $T_1 = \sigma T_2$ for some $\sigma \in \emph{Aut}_\emph{or}(G).$
\end{definition}
Next, we prove results about the structure of $\text{Aut}_\text{or}(G)$ when $G$ is a serial or parallel superedge. In particular, we show that $G$'s automorphisms are built recursively, and this recursive structure mirrors that of $G$'s spanning trees.

\subsection{Serial Superedges} If $G$ is a serial superedge, then the spanning trees of $G$ are simply the union of the spanning trees of each $G_i$. That is, we can form the spanning trees of $G$ by considering every way to choose a spanning tree on each $G_i$; as a result, $\vert \textsf{ST}(G)\vert = \prod_{i = 1}^k \vert \textsf{ST}(G_i)\vert$. The oriented automorphisms of a serial superedge $G$ have exactly the same recursive structure: the oriented automorphisms of $G$ are the union of the oriented automorphisms of each $G_i$, so $\text{Aut}_\text{or}(G) \cong \prod_{i = 1}^k \text{Aut}_\text{or}(G_i)$. We prove the latter result below.

Since the spanning trees and automorphisms of oriented serial superedges $G$ share this structure, we might expect that the spanning trees \emph{up to automorphism} also do. In Theorem \ref{serialtrees}, we show that this is indeed the case.

\begin{theorem}
    \label{serialaut}
    Let $G$ be an oriented serial superedge with terminals $s = s_1, s_2, \ldots, s_{k + 1} = t$ such that each $G_i$ is non-serial. Then, $\emph{Aut}_\emph{or}(G) \cong \prod_{i = 1}^k \emph{Aut}_\emph{or}(G_i).$ 
\end{theorem}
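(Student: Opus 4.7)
The plan is to realize the isomorphism through the restriction map $\Phi:\text{Aut}_\text{or}(G) \to \prod_{i=1}^{k} \text{Aut}_\text{or}(G_i)$ defined by $\Phi(\sigma) = (\sigma|_{V(G_1)}, \ldots, \sigma|_{V(G_k)})$. The heart of the argument is showing that every $\sigma \in \text{Aut}_\text{or}(G)$ individually fixes each gluing vertex $s_i$; once this is established, checking that $\Phi$ is a well-defined group isomorphism is essentially bookkeeping.

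The first step I would take is an intrinsic characterization of the gluing vertices: a non-terminal vertex $v$ is an $s$-$t$ cut vertex of $G$ (meaning $s$ and $t$ lie in different components of $G - v$) if and only if $v \in \{s_2, \ldots, s_k\}$. The reverse direction is immediate from the decomposition. The forward direction is exactly where the non-serial hypothesis enters: if $v$ were internal to some $G_j$ and still separated $s$ from $t$, it would separate $s_j$ from $s_{j+1}$ within $G_j$; but a non-serial series-parallel graph (a single edge or a parallel composition with at least two branches) has no such terminal-separating cut vertex. Since $\sigma$ fixes both $s$ and $t$, $\sigma$ must permute $\{s_2, \ldots, s_k\}$.

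I would then upgrade this to pointwise fixing via a monotonicity argument. Letting $C_i$ be the component of $G - s_i$ containing $s$, one checks $C_i = \bigl(V(G_1) \cup \cdots \cup V(G_{i-1})\bigr) \setminus \{s_i\}$, and because each $|V(G_j)| \geq 2$, the sequence $|C_2| < |C_3| < \cdots < |C_k|$ is strictly increasing. Since $\sigma(s) = s$ and $\sigma(s_i)$ is some $s_j$, the image $\sigma(C_i)$ is precisely $C_j$, forcing $|C_i| = |C_j|$ and hence $i = j$.

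With every $s_i$ individually fixed, each component of $G - \{s_1, \ldots, s_{k+1}\}$ that lies inside $G_i$ has neighbor set exactly $\{s_i, s_{i+1}\}$ in $G$: both terminals are hit because every vertex of a series-parallel graph lies on a terminal-to-terminal path, and no other gluing vertex can be a neighbor since distinct principal subgraphs meet only at the $s_l$. The pairs $\{s_i, s_{i+1}\}$ being distinct across $i$, $\sigma$ preserves the partition into the $V(G_i)$, yielding $\sigma(V(G_i)) = V(G_i)$. Hence $\Phi$ lands in the product and is a homomorphism; it is injective because $\bigcup_i V(G_i) = V(G)$, and surjective by gluing any tuple $(\tau_1, \ldots, \tau_k)$ into a single bijection on $V(G)$ — the pieces agree at gluing vertices because each $\tau_j \in \text{Aut}_\text{or}(G_j)$ fixes $s_j$ and $s_{j+1}$, and the glued map preserves adjacency because the edge sets $E(G_i)$ partition $E(G)$. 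The main obstacle throughout is the first step: the non-serial hypothesis is what makes the $s_i$ canonical, and without it a principal subgraph could decompose further, rendering the intermediate gluing vertices no longer identifiable from $G$ alone.
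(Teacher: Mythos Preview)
Your proof is correct and follows the same overall strategy as the paper: show that every $\sigma \in \text{Aut}_\text{or}(G)$ fixes each gluing vertex $s_i$, deduce that $\sigma$ preserves each $V(G_i)$, and conclude via the restriction map. The tactical details differ somewhat. The paper characterizes the $s_i$ as the vertices lying on \emph{every} $s$--$t$ path and then uses the fact that any such path visits them in the fixed order $s_1, s_2, \ldots, s_{k+1}$ to get pointwise fixing directly; you instead characterize them as $s$--$t$ cut vertices and use a strict monotonicity of component sizes $|C_2| < \cdots < |C_k|$ to pin down the indices. Similarly, the paper identifies $V(G_i)$ as the vertices lying on some $s_i$--$s_{i+1}$ path, whereas you identify the internal components of $G_i$ by their neighbor set $\{s_i, s_{i+1}\}$. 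Your version is a bit more explicit about exactly where the non-serial hypothesis is consumed (ruling out internal terminal-separating cut vertices), which the paper leaves implicit; the paper's path-order argument is slightly more streamlined. Both are fine and essentially interchangeable.
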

\begin{proof}
    The terminals $s_1, s_2, \ldots, s_{k + 1}$ are the unique points of $G$ that are included in every path $P$ from $s_1$ to $s_{k + 1}$. Moreover, $P$ always visits these vertices in this order: $s_1, s_2, \ldots, s_{k + 1}$.
    
    Now, let $\sigma \in \text{Aut}_\text{or}(G).$ Due to the bijective properties of automorphisms, $\sigma(s_1),$ $\sigma(s_2),$ \ldots, $\sigma(s_{k + 1})$ are the unique vertices of $G$ included in every path $P^\prime$ from $\sigma(s_1)$ to $\sigma(s_{k + 1})$, and $P^\prime$ visits these vertices in this order. Since $\sigma$ preserves the path order of our terminal vertices, $\sigma(s_i)$ must be the $i$th terminal along the path from $\sigma(s_1)$ to $\sigma(s_{k + 1})$. But $\sigma(s_1) = s_1$ and $\sigma(s_{k + 1}) = s_{k + 1}$, so $\sigma(s_i)$ is simply $s_i$. Hence, we have that $\sigma(s_i) = s_i$ for $1 \leq i \leq k$.
    
    We observe that $v$ belongs to subgraph $G_i$ if and only if there exists a path from $s_i$ to $t_i = s_{i + 1}$ containing $v$.
    It is not hard to see, then, that $G_i = \sigma G_i$ for all $\sigma \in \text{Aut}_\text{or}(G)$. It is a matter of formality to show that the function $f : \text{Aut}_\text{or}(G) \rightarrow \prod_{i = 1}^k \text{Aut}_\text{or}(G_i)$ given by \[f(\sigma) = (\sigma_1, \sigma_2, \ldots, \sigma_k),\] where $\sigma_i$ is the restriction of $\sigma$ to $G_i$, is an isomorphism between $\text{Aut}_\text{or}(G)$ and $\prod_{i = 1}^k \text{Aut}_\text{or}(G_i)$. Thus, we conclude that $\text{Aut}_\text{or}(G) \cong \prod_{i = 1}^k \text{Aut}_\text{or}(G_i)$, as desired.
\end{proof}
Since the spanning trees and automorphisms of oriented serial superedges share this structure, we might expect that the spanning trees up to automorphism do as well. In the following theorem, we show that this is indeed the case.

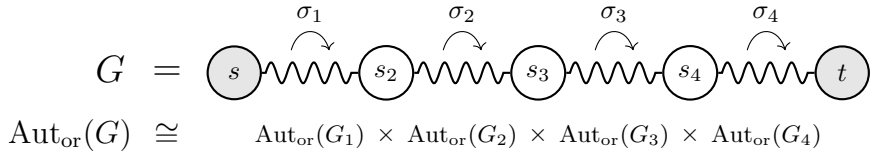
\begin{figure}[h]
\centering

\begin{tikzpicture}[scale=1, every node/.style={scale=1}]

\node[anchor=west] (G) at (-1.95, 0.05) {\Large $G$};
\node[anchor=west] (Equals) at (-1.2, 0) {\Large $=$};
\node[anchor=west] (cong) at (-1.175, -0.82) {\large $\cong$};
\node[anchor=west] (AutOrG) at (-3.1, -0.82) {\large $\text{Aut}_\text{or}(G)$};
\node[circle,thick,draw,fill=gray!20, minimum size=0.7cm] (v1) at (0, 0) {$s$};
\node[circle, thick,draw] (v2) at (2, 0) {$s_2$};
\node[below = 0.22cm of v2] {\small $\times$};
\node[circle,thick,draw] (v3) at (4, 0) {$s_3$};
\node[below = 0.22cm of v3] {\small $\times$};
\node[circle,thick,draw] (v4) at (6, 0) {$s_4$};
\node[below = 0.22cm of v4] {\small $\times$};
\node[circle,thick,draw,fill=gray!20,minimum size=0.7cm] (v5) at (8, 0) {$t$};

\draw[decorate, decoration={complete sines, segment length=0.15cm,  start down, end up, amplitude=2.75mm, pre length = 0.5mm, post length = 0.5mm}, thick] (v1) -- (v2) node[midway, below=0.55cm] {\small $\text{Aut}_\text{or}(G_1)$};
\draw[decorate, decoration={complete sines, segment length=0.15cm,  start down, end up, amplitude=2.75mm, pre length = 0.5mm, post length = 0.5mm}, thick] (v2) -- (v3) node[midway, below=0.55cm] {\small $\text{Aut}_\text{or}(G_2)$};
\draw[decorate, decoration={complete sines, segment length=0.15cm,  start down, end up, amplitude=2.75mm, pre length = 0.5mm, post length = 0.5mm}, thick] (v3) -- (v4) node[midway, below=0.55cm] {\small $\text{Aut}_\text{or}(G_3)$};
\draw[decorate, decoration={complete sines, segment length=0.15cm,  start down, end up, amplitude=2.75mm, pre length = 0.5mm, post length = 0.5mm}, thick] (v4) -- (v5) node[midway, below=0.55cm] {\small $\text{Aut}_\text{or}(G_4)$};

\draw[->] (0.75, 0.3) to[bend left, out=75,in=105,looseness=1.75] node[above] {$\sigma_1$} (1.25, 0.3);
\draw[->] (2.75, 0.3) to[bend left, out=75,in=105,looseness=1.75] node[above] {$\sigma_2$} (3.25, 0.3);
\draw[->] (4.75, 0.3) to[bend left, out=75,in=105,looseness=1.75] node[above] {$\sigma_3$} (5.25, 0.3);
\draw[->] (6.75, 0.3) to[bend left, out=75,in=105,looseness=1.75] node[above] {$\sigma_4$} (7.25, 0.3);
\end{tikzpicture}
\caption{The oriented automorphism group $\emph{Aut}_\emph{or}(G)$ of a serial superedge $G$, with non-serial principal subgraphs $G_1$, $G_2$, $G_3$, and $G_4$ (see Theorem \ref{serialaut}). Every automorphism $\sigma \in \emph{Aut}_\emph{or}(G)$ is the product of automorphisms $\sigma_i \in \emph{Aut}_\emph{or}(G_i)$.}
\end{figure}
\begin{theorem}
    \label{serialtrees}
    Let $G$ be an oriented serial superedge with terminals $s$ and $t$, such that each $G_i$ is non-serial.  Then, the nonequivalent spanning trees of $G$ are formed from the union of the nonequivalent spanning trees of each $G_i$. 
\end{theorem}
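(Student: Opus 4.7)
The plan is to leverage the two recursive decompositions we already have, one for spanning trees and one for oriented automorphisms, and show that the action of the latter on the former respects this recursive structure componentwise. This will reduce the orbit structure of $\textsf{ST}(G)$ under $\text{Aut}_\text{or}(G)$ to a product of the orbit structures on each $G_i$.

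First, I would establish the bijection $\textsf{ST}(G) \leftrightarrow \prod_{i=1}^k \textsf{ST}(G_i)$. Since a serial superedge is connected only through the sequence $s_1, s_2, \ldots, s_{k+1}$, any spanning tree $T$ of $G$ restricts to a spanning tree $T_i$ of each $G_i$, and conversely any choice of $T_i \in \textsf{ST}(G_i)$ glues together along the shared terminals to form a spanning tree $T = T_1 \cup T_2 \cup \cdots \cup T_k$ of $G$ (this is the same observation used to derive $\tau(G) = \prod_i \tau(G_i)$ in the discussion preceding Theorem \ref{serialaut}).

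Next, I would combine this bijection with the isomorphism $\text{Aut}_\text{or}(G) \cong \prod_{i=1}^k \text{Aut}_\text{or}(G_i)$ furnished by Theorem \ref{serialaut} and, crucially, show that the action is componentwise. Given $\sigma \in \text{Aut}_\text{or}(G)$ with image $(\sigma_1, \ldots, \sigma_k)$ under the isomorphism $f$ of Theorem \ref{serialaut}, the proof of that theorem established $\sigma G_i = G_i$ and $\sigma|_{G_i} = \sigma_i$, so $\sigma T = (\sigma_1 T_1) \cup (\sigma_2 T_2) \cup \cdots \cup (\sigma_k T_k)$. Hence two spanning trees $T = \bigcup_i T_i$ and $T' = \bigcup_i T'_i$ are equivalent under $\text{Aut}_\text{or}(G)$ if and only if $T_i$ is equivalent to $T'_i$ under $\text{Aut}_\text{or}(G_i)$ for every $i$.

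From this componentwise characterization, I would conclude that the orbits of $\textsf{ST}(G)$ under $\text{Aut}_\text{or}(G)$ are in bijection with tuples of orbits $(\text{Aut}_\text{or}(G_1)_{T_1}, \ldots, \text{Aut}_\text{or}(G_k)_{T_k})$, so selecting one representative from each orbit of each $\textsf{ST}(G_i)$ and forming all Cartesian-product combinations yields exactly one representative for each orbit in $\textsf{ST}(G)/\text{Aut}_\text{or}(G)$. There is no hard obstacle here; the only subtlety is being explicit that the isomorphism $f$ of Theorem \ref{serialaut} is not merely a group isomorphism but one that intertwines the natural actions on spanning trees, and hence induces a bijection between orbit sets as claimed.
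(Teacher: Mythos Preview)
Your proposal is correct and follows essentially the same approach as the paper: decompose spanning trees as $T = (T_1, \ldots, T_k)$, invoke Theorem \ref{serialaut} to decompose each $\sigma \in \text{Aut}_\text{or}(G)$ as $(\sigma_1, \ldots, \sigma_k)$ acting componentwise, and conclude that equivalence of spanning trees holds if and only if it holds coordinatewise. If anything, you are more explicit than the paper in noting that the group isomorphism of Theorem \ref{serialaut} must be compatible with the action on spanning trees, a point the paper's short proof takes for granted.
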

\begin{proof}
    Every spanning tree of $G$ is the union of spanning trees on each $G_i$, and by Theorem \ref{serialaut}, each automorphism of $G$ induces an automorphism on each $G_i$. Hence, two spanning trees $T = (T_1, T_2, \ldots, T_k)$ and $T^\prime = (T_1^\prime, T_2^\prime, \ldots, T_k^\prime)$ are equivalent if and only if $T_i = \sigma_i T_i^\prime$ for $1 \leq i \leq k,$ where $\sigma_i \in \text{Aut}_\text{or}(G_i)$. To form the nonequivalent spanning trees of $G$, we therefore only need to compute the nonequivalent spanning trees on each $G_i$ (under $\text{Aut}_\text{or}(G_i)$) are take their union.
\end{proof}
For reasons that emerge later (see Section \ref{sec:parallelsuperedges}), it is also important for us to generate the nonequivalent \emph{near} trees of serial superedges. If $G$ is a serial superedge, then its vanilla near trees (without automorphism reduction) are formed by choosing spanning trees on all but one principal subgraph, on which we choose a near tree instead. As we see below, the nonequivalent near trees up to $\text{Aut}_\text{or}(G)$ are formed the same way, by taking the union of the nonequivalent near trees of one $G_j$ with the nonequivalent spanning trees of the others (across all $k$ choices of $G_j$).
\begin{theorem}
    \label{serialneartrees}
    Let $G$ be an oriented serial superedge with terminals $s$ and $t$, such that each $G_i$ is non-parallel. Then, the nonequivalent near trees of $G$ are the union of the nonequivalent near trees of some $G_j$ with the nonequivalent spanning trees of all the other $G_i$, for $1 \leq j \leq k$. 
\end{theorem}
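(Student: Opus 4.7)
The plan is to follow the template of Theorem \ref{serialtrees} while exploiting the product decomposition from Theorem \ref{serialaut}. The first step would be to establish a combinatorial bijection between the near trees of $G$ and tuples of the form $(T_1, \ldots, N_j, \ldots, T_k)$, in which exactly one coordinate $N_j$ is a near tree of the subgraph $G_j$ and each remaining coordinate $T_i$ is a spanning tree of $G_i$. Since the principal subgraphs of a serial superedge share only the terminals $s_1, \ldots, s_{k+1}$, every edge of $G$ lies in a unique $G_i$; a counting argument (using that an acyclic subgraph of $G_i$ has at most $n_i - 1$ edges, and that the total of $n-2$ edges of a near tree forces exactly one $G_j$ to fall short of its maximum) then pins down the ``missing edge'' to a single principal subgraph $G_j$.

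Next I would use Theorem \ref{serialaut} to describe how $\text{Aut}_\text{or}(G)$ acts on these tuples. Because every $\sigma \in \text{Aut}_\text{or}(G)$ fixes each intermediate terminal $s_i$, it preserves each $G_i$ setwise, so the action is diagonal: $\sigma$ sends $(T_1, \ldots, N_j, \ldots, T_k)$ to $(\sigma_1 T_1, \ldots, \sigma_j N_j, \ldots, \sigma_k T_k)$, where $\sigma_i$ denotes the restriction of $\sigma$ to $G_i$. In particular, the index $j$ that records which principal subgraph contains the near tree is invariant under $\text{Aut}_\text{or}(G)$, so near trees with gaps in distinct $G_j$'s cannot be equivalent.

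The theorem then follows from a coordinate-wise argument. Two near trees with gaps in the same $G_j$ are equivalent under $\text{Aut}_\text{or}(G)$ if and only if their $j$th coordinates are equivalent in $\text{Aut}_\text{or}(G_j)$ and, for every $i \neq j$, their $i$th coordinates are equivalent in $\text{Aut}_\text{or}(G_i)$. Hence a complete system of orbit representatives for $\textsf{NT}(G)/\text{Aut}_\text{or}(G)$ is obtained by, for each $j \in \{1, \ldots, k\}$, pairing orbit representatives of $\textsf{NT}(G_j)/\text{Aut}_\text{or}(G_j)$ with orbit representatives of $\textsf{ST}(G_i)/\text{Aut}_\text{or}(G_i)$ for every $i \neq j$, and then taking the union over $j$.

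The main obstacle I expect is the bijective characterization of $\textsf{NT}(G)$ in the first step, since one must verify both that every such tuple does glue together to a valid near tree of $G$ (a quick check shows one gets a spanning forest with exactly two components regardless of whether the two components of $N_j$ separate the terminals $s_j$ and $s_{j+1}$) and that every near tree of $G$ arises this way from a unique index $j$. Once this characterization and the diagonal action are in place, the remainder is essentially a repetition of the argument used for Theorem \ref{serialtrees}.
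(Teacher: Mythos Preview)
Your proposal is correct and follows essentially the same approach the paper takes: the paper's proof is a one-liner that says to repeat the reasoning of Theorem~\ref{serialtrees} with one principal subgraph carrying a near tree instead of a spanning tree, and what you have written is precisely that argument spelled out in full detail. Your care in verifying the bijection $\textsf{NT}(G) \leftrightarrow \bigsqcup_j \big(\textsf{NT}(G_j) \times \prod_{i\neq j}\textsf{ST}(G_i)\big)$ and the invariance of the index $j$ under the diagonal action goes beyond what the paper records, but it is exactly the content the paper is gesturing at.
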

\begin{proof}
    To show this result, we can follow similar reasoning to Theorem \ref{serialtrees}, accounting for the fact that one of our principal subgraphs has a near tree rather than a spanning tree.
\end{proof}
Based on these theorems, we can introduce the following algorithms to generate (1) the nonequivalent spanning trees and (2) \emph{both} the nonequivalent spanning and near trees of a serial superedge $G$.
\begin{algorithm}[h]
    \caption{$\texttt{serial\_spanning\_trees}(G)$}
    \label{alg:serialspanning}
    \begin{algorithmic}
         \REQUIRE{Serial superedge $G$, with subgraphs $G_1, G_2, \ldots, G_k$.}
         \REQUIRE{Each $G_i$ is non-serial, for $1 \leq i \leq k$.}\newline
         
         \IF{$G$ is an edge} \RETURN{$G$} \hfill\COMMENT{Base case.}
         \ENDIF\newline
         
         \STATE{$\texttt{noneq\_trees} \gets \prod_{i = 1}^k \texttt{parallel\_spanning\_trees}(G_i)$}
         \hfill\COMMENT{Compose all choices of nonequivalent trees.}
         \RETURN{\texttt{noneq\_trees}}
    \end{algorithmic}
\end{algorithm}

\begin{algorithm}[h]
    \caption{$\texttt{serial\_both\_trees}(G)$}
    \label{alg:serialboth}
    \begin{algorithmic}
         \REQUIRE{Serial superedge $G$, with subgraphs $G_1, G_2, \ldots, G_k$.}
         \REQUIRE{Each $G_i$ is non-serial, for $1 \leq i \leq k$.}\newline
         
         \IF{$G$ is an edge} \RETURN{$G$, $\overline{G}$} \hfill\COMMENT{Base case.}
         \ENDIF\newline
         
         \STATE $\texttt{spanning\_trees} \gets \{\}$
         \STATE $\texttt{near\_trees} \gets \{\}$\newline
         \FOR{$1 \leq i \leq k$}
         
            \STATE $\texttt{both\_trees} \gets \texttt{parallel\_spanning\_and\_near\_trees}(G_i)$
            
            \STATE $\texttt{spanning\_trees} \gets \texttt{both\_trees[0]}$ 
            \STATE $\texttt{near\_trees} \gets \texttt{both\_trees[1]}$ 
         \ENDFOR\newline

         \STATE $\texttt{noneq\_spanning\_trees} \gets \texttt{spanning\_trees[0]}$\newline
         
         \FOR{$2 \leq i \leq k$}
            \STATE $\texttt{noneq\_spanning\_trees} \gets \texttt{compose}(\texttt{noneq\_spanning\_trees}, \texttt{spanning\_trees[i]})$ 
         \ENDFOR\newline

         \STATE{$\texttt{noneq\_near\_trees} \gets \{\}$}\newline
         
         \FOR[Generate nonequivalent near trees.]{$1 \leq i \leq k$}
         
            \STATE $\texttt{current} \gets \texttt{near\_trees[i]}$\newline
            
            \FOR{$1 \leq j \leq k$, $j \neq i$}
                \STATE $\texttt{current} \gets \texttt{compose}(\texttt{current},\ \texttt{spanning\_trees[j]})$
            \ENDFOR\newline
            
            \STATE $\texttt{noneq\_near\_trees} \gets \texttt{noneq\_near\_trees} \cup \texttt{current}$
         \ENDFOR\newline
         
         \RETURN{\texttt{noneq\_spanning\_trees}, \texttt{noneq\_near\_trees}}
    \end{algorithmic}
\end{algorithm}

\subsection{Parallel Superedges}
\label{sec:parallelsuperedges}
Parallel superedges have an appealing duality with serial superedges, where the roles of the spanning trees and near trees have been reversed. In particular, the near trees of a parallel superedge $G$ are formed by taking a near tree on each $G_i$, and the spanning trees are formed by taking a spanning tree on one such $G_i$ and near trees on the remaining $G_j$ \cite{10.5555/1984890}.

However, parallel superedges have a more nuanced automorphism structure than serial superedges. In a sense, Theorem \ref{serialaut} tells us that serial superedges $G$ are ``reductionist\rq\rq: the oriented automorphisms of $G$ are exactly equal to the product of the oriented automorphisms of each $G_i$. Parallel superedges, on the other hand, also have ``emergent\rq\rq\ oriented automorphisms that let us \emph{swap} isomorphic $G_i$, as we see below.

\begin{definition}
    Let $G$ be an oriented parallel superedge, such that each $G_i$ is non-parallel. If $G_a$ is a principal subgraph of $G$, then we call an isomorphism $\tau$ of $G_a$ \emph{oriented} if $\tau(s) = s$ and $\tau(t) = t$. 
\end{definition}

\begin{definition}
    Let $G$ be an oriented parallel superedge, such that each $G_i$ is non-parallel, and let $G_a$ and $G_b$ be two principal subgraphs of $G$. If $G_a = \tau G_b$ for some oriented isomorphism $\tau$,  we say that $G_a$ and $G_b$ are \emph{oriented-isomorphic}. For brevity, we write that $G_a \equiv_\emph{or} G_b$.
\end{definition}

\begin{theorem}
    \label{parallelaut}
    Let $G$ be an oriented parallel superedge with terminals $s$ and $t$, such that each $G_i$ is non-parallel. Then, $\emph{Aut}_\emph{or}(G) \cong \emph{swaps}(G) \times \prod_{i = 1}^k \emph{Aut}_\emph{or}(G_i)$, where $\emph{swaps}(G) = \{\sigma \in S_n : G_i \equiv_\emph{or} G_{\sigma(i)}\emph{ for $1 \leq i \leq k$}\}.$
\end{theorem}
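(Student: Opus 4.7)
The plan is to mirror Theorem \ref{serialaut}: first show that the principal subgraphs $G_1, \ldots, G_k$ are canonically determined by $G$ (up to reordering), so any $\sigma \in \text{Aut}_\text{or}(G)$ must permute them; then exhibit an explicit bijection onto $\text{swaps}(G) \times \prod_{i=1}^k \text{Aut}_\text{or}(G_i)$.

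Since each $G_i$ is non-parallel, it is either a direct edge $s$--$t$ or a serial superedge. I would first show that the $G_i$ are in canonical correspondence with (a) the individual edges $\{s,t\} \in E(G)$ (the edge-type $G_i$) and (b) the connected components of $G \setminus \{s,t\}$ (each serial-type $G_i$ being recovered from one component together with its edges to $s$ and $t$). Distinct principal subgraphs share only the vertices $s, t$, so internal vertices of different $G_i$ lie in different components of $G \setminus \{s,t\}$; conversely, a non-edge $G_i$ is serial by the non-parallel hypothesis, and its internal structure (the intermediate serial terminals plus the non-terminal vertices of each serial piece) remains connected after removing $\{s,t\}$. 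Now any $\sigma \in \text{Aut}_\text{or}(G)$ fixes $s, t$, and hence permutes these edges and components, inducing a permutation $\pi \in S_k$ with $\sigma(G_i) = G_{\pi(i)}$. Since $\sigma|_{G_i} : G_i \to G_{\pi(i)}$ fixes $s$ and $t$, we have $G_i \equiv_\text{or} G_{\pi(i)}$, so $\pi \in \text{swaps}(G)$.

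To construct the bijection $f$, for every pair $(i,j)$ with $G_i \equiv_\text{or} G_j$ I would fix a canonical oriented isomorphism $\phi_{i,j} : G_j \to G_i$ (with $\phi_{i,i}$ the identity). Given $\sigma \in \text{Aut}_\text{or}(G)$ inducing $\pi$, define
\[f(\sigma) = \bigl(\pi,\; (\phi_{i,\pi(i)} \circ \sigma|_{G_i})_{i=1}^k\bigr),\]
where each coordinate lies in $\text{Aut}_\text{or}(G_i)$. The inverse reconstructs $\sigma$ piecewise via $\sigma|_{G_i} = \phi_{i,\pi(i)}^{-1} \circ \rho_i$, giving a bijection of sets.

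The main obstacle is the intrinsic characterization of the $G_i$, which depends crucially on the non-parallel hypothesis: without it, the decomposition of a P-node into principal subgraphs is not unique, and a swap could regroup internal edges across different $G_i$. A secondary subtlety is that composition in $\text{Aut}_\text{or}(G)$ twists the internal coordinates along $\pi$, so strictly speaking one obtains the wreath-type group $\prod_{i=1}^k \text{Aut}_\text{or}(G_i) \rtimes \text{swaps}(G)$; the statement's ``$\cong \text{swaps}(G) \times \prod_{i=1}^k \text{Aut}_\text{or}(G_i)$'' should be read as this canonical bijection $f$, which in particular yields the cardinality identity $\vert \text{Aut}_\text{or}(G)\vert = \vert \text{swaps}(G)\vert \cdot \prod_{i=1}^k \vert \text{Aut}_\text{or}(G_i)\vert$ that suffices for later spanning-tree-counting arguments.
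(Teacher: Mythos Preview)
Your proof is correct and follows essentially the same strategy as the paper's: show that any $\sigma \in \text{Aut}_\text{or}(G)$ must permute the principal subgraphs $G_i$ (the paper characterizes the $G_i$ via paths avoiding $s,t$ as interior nodes, you use the equivalent description via connected components of $G\setminus\{s,t\}$ plus the possible $s$--$t$ edge), and then decompose $\sigma$ into a swap and a tuple of local automorphisms. You go beyond the paper in one respect worth keeping: you correctly flag that the group structure is really a wreath-type semidirect product $\prod_i \text{Aut}_\text{or}(G_i) \rtimes \text{swaps}(G)$ rather than a direct product, an imprecision the paper glosses over but which is harmless since only the underlying bijection and the cardinality identity are used downstream.
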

\begin{proof}
     We observe that the vertices $u, v \in V(G)$ belong to the same principal subgraph $G_i$ if and only if there exists a path from $u$ to $v$ that does not contain $s$ or $t$ as interior nodes. Now, let $\sigma \in \text{Aut}_\text{or}(G)$, and let $u, v \in V(G_i)$ for some $G_i$. Due to the bijective properties of automorphisms, there exists a path from $\sigma(u)$ to $\sigma(v)$ that does not contain $\sigma(s) = s$ or $\sigma(t) = t$ as interior nodes, so $\sigma(u)$ and $\sigma(v)$ belong to the same principal subgraph $G_j$. The converse is also true: if $\sigma(u), \sigma(v)$ belong to the same $G_j$, then $u$ and $v$ belong to the same $G_i$.
     
     Therefore, $\sigma G_i = G_j$, where $G_j$ is another principal subgraph of $G$. It is clearly true that $G_i$ is oriented-isomorphic to $\sigma G_i$; we have just showed that $\sigma G_a$ is  another principal subgraph of $G$ (i.e.\ $\sigma G_i$ is not spread between different principal subgraphs of $G$).
     
     As a result, $\sigma$ ``swaps\rq\rq\ principal subgraphs of $G$ that are isomorphic up to oriented isomorphisms. Hence, every $\sigma \in \text{Aut}_\text{or}(G)$ is of the form $s\tau$, where $s \in \text{swaps}(G)$, $\sigma_i \in \text{Aut}_\text{or}(G)$ for $1 \leq i \leq k$, and  $\tau  \in \prod_{i = 1}^k \text{Aut}_\text{or}(G_i)$ is given by $\tau = \sigma_1\sigma_2\cdots \sigma_k$. Therefore, $\text{Aut}_\text{or}(G)$ is isomorphic to a subgraph of $\text{swaps}(G) \times \prod_{i = 1}^k \text{Aut}_\text{or}(G_i)$. The other direction, that every automorphism of the form $s\tau$ is an automorphism of $G$, is straightforward. This proves the desired result.
\end{proof}
\begin{figure}[h]
\centering

\begin{tikzpicture}[scale=1, every node/.style={scale=1}]

\node[anchor=west] (G) at (-1.95, 0.05) {\Large $G$};
\node[anchor=west] (Equals) at (-1.2, 0) {\Large $=$};
\node[circle,thick,draw,minimum size=0.7cm,fill=gray!20] (v1) at (0, 0) {$s$};
\node[circle,thick,draw,minimum size=0.7cm,fill=gray!20] (v5) at (7, 0) {$t$};

\draw[bend left=50,-, thick, black,decorate,decoration={complete sines,start down, end up, aspect=0,amplitude=2mm, segment
length=2.5mm, pre length = 1mm}]  (v1) to node [above = 0.2cm] {\footnotesize $\text{Aut}_\text{or}(G_1)$}(v5);

\draw[->] (1.8, 1.8) to[bend left, out=75,in=105,looseness=1.75] node[above=-0.5mm] {\scriptsize $\sigma_1$} (2, 1.8);

\draw[->] (1.8, 0.9) to[bend left, out=75,in=105,looseness=1.75] node[above=-0.5mm] {\scriptsize $\sigma_2$} (2, 0.9);

\draw[->] (1.8, 0.15) to[bend left, out=75,in=105,looseness=1.75] node[above=-0.5mm] {\scriptsize $\sigma_3$} (2, 0.15);

\draw[->] (1.8, -0.9) to[bend right, out=-75,in=-105,looseness=1.75] node[below=-0.5mm] {\scriptsize $\sigma_4$} (2, -0.9);

\draw[->] (1.8, -1.75) to[bend right, out=-75,in=-105,looseness=1.75] node[below=-0.5mm] {\scriptsize $\sigma_5$} (2, -1.75);

\draw[bend left=23,-, thick, black,decorate,decoration={complete sines,start down, end up, aspect=0,amplitude=2mm, segment
length=2.5mm, pre length = 1mm}]  (v1) to node [above = 0.15cm] {\footnotesize $\text{Aut}_\text{or}(G_2)$} (v5);
\draw[bend left=0,-, thick, black,decorate,decoration={complete sines,start down, end up, aspect=0,amplitude=2mm, segment
length=2.5mm, pre length = 1mm}]  (v1) to node [above = 0.1cm] {\footnotesize $\text{Aut}_\text{or}(G_3)$} (v5);
\draw[bend left=-23,-, thick, black,decorate,decoration={complete sines,start down, end up, aspect=0,amplitude=2mm, segment
length=1mm, pre length = 1mm}]  (v1) to node [below = 0.1cm] {\footnotesize $\text{Aut}_\text{or}(G_4)$} (v5);
\draw[bend left=-50,-, thick, black,decorate,decoration={complete sines,start down, end up, aspect=0,amplitude=2mm, segment
length=1mm, pre length = 1mm}]  (v1) to node [below = 0.15cm] {\footnotesize $\text{Aut}_\text{or}(G_5)$} (v5);

\draw[draw=black, fill=gray, fill opacity=0.075] (1.5,-0.25) rectangle ++(4,2.75);
\draw[draw=black, fill=gray, fill opacity=0.15] (1.5,-0.5) rectangle ++(4,-2.0);
\node[anchor=west] (E1) at (5.6,2.1) {\Large $E_1$};
\node[anchor=west] (E2) at (5.6,-2.1) {\Large $E_2$};
\end{tikzpicture}
\caption{The oriented automorphism group $\emph{Aut}_\emph{or}(G)$ of a parallel superedge $G$, with non-parallel principal subgraphs $G_1$, $G_2$, $G_3$, $G_4$, and $G_5$ (see Theorem \ref{parallelaut}). The subgraphs $G_1$, $G_2$, and $G_3$ belong to the same oriented isomorphism class $E_1$, whereas $G_4$ and $G_5$ belong to $E_2$. Every automorphism $\sigma \in \emph{Aut}_\emph{or}(G)$ is the product of automorphisms $\sigma_i \in \emph{Aut}_\emph{or}(G_i)$ and swaps $s \in \emph{swaps}(G)$ between subgraphs belonging to the same class.}
\end{figure}
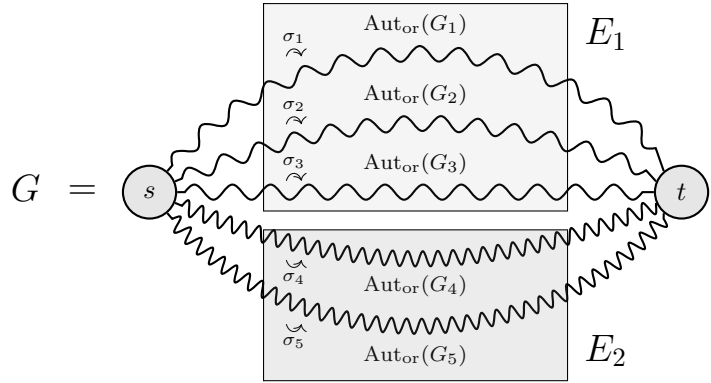
\subsubsection{Near Trees on Parallel Superedges} In essence, Theorem \ref{parallelaut} says that we can swap the vertices of any two principal subgraphs (and preserve $G$) if and only if they are oriented-isomorphic. Therefore, if we partition the $G_i$ into oriented isomorphism classes $E_1, E_2, \ldots, E_\ell$, any swap operations that make one spanning/near tree equal to another must occur between subgraphs in the same $E_i$. We formally define these classes below.
\begin{definition}
    Let $G$ be an oriented parallel superedge, such that each $G_i$ is non-parallel. Then, let us partition $\textsf{\emph{ST}}(G)$ into $\ell$ \emph{oriented isomorphism classes} $E_1, E_2, \ldots, E_\ell$ such that $G_a$ and $G_b$ are in the same $E_i$ if and only if $G_a \equiv_\emph{or} G_b$. For each $E_i$, let the subgraph $R_i \in E_i$ be its representative element.
\end{definition}
With Theorem \ref{parallelaut} and these definitions in hand, we can now investigate the problem of generating the nonequivalent spanning and near trees of an oriented parallel superedge $G$. Due to their simplicity, we begin with the near trees.

\begin{lemma}
    Let $(G, s, t)$ be an oriented parallel superedge, such that each $G_i$ is non-parallel. Then, $G_a$ and $G_b$ are oriented-isomorphic if and only if $\sigma G_a$ and $\sigma G_b$ are oriented-isomorphic.
\end{lemma}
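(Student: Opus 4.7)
The plan is to prove both directions by a conjugation argument, relying on the fact that any $\sigma \in \text{Aut}_\text{or}(G)$ fixes the terminals $s$ and $t$ and, by the argument of Theorem \ref{parallelaut}, permutes principal subgraphs of $G$. Thus the statements ``$\sigma G_a$'' and ``$\sigma G_b$'' actually refer to principal subgraphs of $G$, and we have bijections $\sigma|_{G_a} : G_a \to \sigma G_a$ and $\sigma|_{G_b} : G_b \to \sigma G_b$ that preserve all graph structure while fixing the terminals.

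For the forward direction, I would suppose $G_a \equiv_\text{or} G_b$ via an oriented isomorphism $\tau : G_a \to G_b$, and then exhibit the candidate isomorphism $\tau' := \sigma|_{G_b} \circ \tau \circ (\sigma|_{G_a})^{-1}$ from $\sigma G_a$ to $\sigma G_b$. Since each of the three factors is a bijection that preserves adjacency (the outer two because $\sigma \in \text{Aut}_\text{or}(G)$, the middle one by assumption), $\tau'$ is an isomorphism of graphs. To verify it is oriented, I would trace the terminals: $\sigma G_a$ and $\sigma G_b$ still have $s$ and $t$ as their terminals (because $\sigma$ fixes both), and $\tau'(s) = \sigma(\tau(\sigma^{-1}(s))) = \sigma(\tau(s)) = \sigma(s) = s$, and similarly $\tau'(t) = t$. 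Hence $\sigma G_a \equiv_\text{or} \sigma G_b$.

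The reverse direction is symmetric: given an oriented isomorphism $\tau' : \sigma G_a \to \sigma G_b$, I conjugate by $\sigma^{-1}$ to obtain an oriented isomorphism $G_a \to G_b$. Equivalently, since $\sigma^{-1}$ also belongs to $\text{Aut}_\text{or}(G)$, applying the forward direction with $\sigma^{-1}$ to the pair $\sigma G_a, \sigma G_b$ immediately yields $G_a \equiv_\text{or} G_b$.

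There is no real obstacle here beyond bookkeeping; the one spot that requires care is confirming that $\sigma$ restricts to a bijection between the vertex sets $V(G_a)$ and $V(\sigma G_a)$ in a structure-preserving way, which is precisely what Theorem \ref{parallelaut} (and the ``non-parallel $G_i$'' hypothesis used there) guarantees. Once this restriction is in place, the whole lemma reduces to the observation that conjugation by a terminal-fixing automorphism preserves the property of being a terminal-fixing isomorphism.
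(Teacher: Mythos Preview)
Your proposal is correct and matches the paper's approach: both directions are handled by the same conjugation idea, showing that if $\tau$ is an oriented isomorphism between $G_a$ and $G_b$ then $\sigma\tau\sigma^{-1}$ (i.e.\ your $\sigma|_{G_b}\circ\tau\circ(\sigma|_{G_a})^{-1}$) is an oriented isomorphism between $\sigma G_a$ and $\sigma G_b$. The only cosmetic difference is that the paper phrases the reverse direction as a contradiction rather than invoking symmetry via $\sigma^{-1}$, but the underlying computation is identical.
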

\begin{proof}
      Let $N = (N_1, N_2, \ldots, N_k)$ and $N^\prime = (N_1^\prime, N_2^\prime, \ldots, N_k^\prime)$ be near trees of $G$, such that $N = \sigma N^\prime$ for some $\sigma \in \text{Aut}_\text{or}(G)$. If $G_a, G_b \in E_i$, then $G_a = \tau G_b$ for some oriented isomorphism $\tau$ of $G_b$.
    As a result,
    \[\sigma G_a = \sigma (\tau G_b) = \sigma \tau \sigma^{-1}(\sigma G_b).\]
    But $\sigma \tau \sigma^{-1}$ is an oriented isomorphism of $\sigma G_b$, since $\sigma,$ $\tau$, and $\sigma^{-1}$ are. Thus, $\sigma G_a$ and $\sigma G_b$ belong to the same oriented isomorphism class. If $G_a$ and $G_b$ do not belong to the same class, then let us assume, for the purpose of contradiction, that $\sigma G_a$ and $\sigma G_b$ do belong to the same class. Then, $\sigma G_a = \tau \sigma G_b$ for some oriented isomorphism $\tau$ of $\sigma G_b$, so left-multiplying by $\sigma^{-1}$ gives us that
    \[G_a = \sigma^{-1}\tau\sigma(G_b).\]
    But $\sigma^{-1}\tau\sigma$ is an oriented isomorphism of $G_b$, so $G_a$ and $G_b$ are oriented-isomorphic, a contradiction. We can thus conclude that $G_a, G_b \in E_i$ if and only if $\sigma G_a, \sigma G_b \in \sigma E_i.$
\end{proof}

\begin{theorem}
    \label{parallelneartrees}
    Let $G$ be an oriented parallel superedge with terminals $s$ and $t$, such that each $G_i$ is non-parallel. Then, the nonequivalent near trees of $G$ can be formed by computing the length-$\vert E_j\vert $ multisets of the nonequivalent near trees of each $R_j$ and assigning the near trees in $R_j$'s multiset to the subgraphs in $E_j$ arbitrarily.
\end{theorem}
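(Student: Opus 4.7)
The plan is to combine Theorem \ref{parallelaut}'s decomposition $\text{Aut}_\text{or}(G) \cong \text{swaps}(G) \times \prod_{i=1}^k \text{Aut}_\text{or}(G_i)$ with the structural fact, recalled just above the theorem, that every near tree of a parallel superedge $G$ is of the form $N = (N_1, N_2, \ldots, N_k)$ with each $N_i$ a near tree of $G_i$. Writing a general $\sigma \in \text{Aut}_\text{or}(G)$ as $s\tau$ with $s \in \text{swaps}(G)$ and $\tau = \sigma_1 \sigma_2 \cdots \sigma_k$, two near trees $N, N'$ are equivalent iff there is a permutation $\pi \in \text{swaps}(G)$ together with automorphisms $\sigma_i \in \text{Aut}_\text{or}(G_i)$ such that, after transporting each $N'_{\pi^{-1}(i)}$ into position $i$ via the oriented isomorphism witnessing $G_{\pi^{-1}(i)} \equiv_\text{or} G_i$ and then applying $\sigma_i$, we recover $N_i$. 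This is the compact unpacking of equivalence I would open with.

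Next, I would observe that $\pi \in \text{swaps}(G)$ iff $\pi$ fixes each oriented isomorphism class $E_j$ setwise, and moreover $\text{swaps}(G)$ restricts to the \emph{full} symmetric group on each $E_j$. Consequently, the equivalence problem decouples across the classes $E_1, \ldots, E_\ell$. Fixing a representative $R_j$ of each $E_j$ and, for each $G_i \in E_j$, an oriented isomorphism $\phi_i : R_j \to G_i$, each $N_i$ pulls back to a near tree $\phi_i^{-1} N_i$ of $R_j$; the conjugation map $\sigma \mapsto \phi_i \sigma \phi_i^{-1}$ gives a canonical isomorphism $\text{Aut}_\text{or}(R_j) \cong \text{Aut}_\text{or}(G_i)$ that respects these pullbacks. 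Thus all $|E_j|$ positions inside $E_j$ admit a uniform comparison as near trees of $R_j$ modulo $\text{Aut}_\text{or}(R_j)$.

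Finally, since the swap action realizes every permutation within each $E_j$, the way in which a tuple of nonequivalent near trees of $R_j$ is assigned to the members of $E_j$ is irrelevant up to $\text{Aut}_\text{or}(G)$-equivalence; only the underlying unordered multiset matters. Combining the per-class decoupling with this multiset observation, the equivalence class of $N$ in $\textsf{NT}(G)/\text{Aut}_\text{or}(G)$ is determined by, and uniquely determines, one length-$\vert E_j\vert$ multiset of nonequivalent near trees of $R_j$ per class. Generating the nonequivalent near trees of $G$ therefore reduces to enumerating all such multiset-tuples and making any arbitrary assignment within each $E_j$, as claimed.

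I expect the main obstacle to be the bookkeeping in the decoupling step: making precise that the conjugation identifications $\text{Aut}_\text{or}(R_j) \cong \text{Aut}_\text{or}(G_i)$ via the various $\phi_i$ cohere with the swap action, and that a swap between $G_a, G_b \in E_j$ can genuinely be realized inside $\text{Aut}_\text{or}(G)$ rather than by some map that leaks out of $E_j$. The lemma immediately preceding the theorem is the crucial tool: it ensures that every $\sigma \in \text{Aut}_\text{or}(G)$ preserves oriented isomorphism classes, so no automorphism can mix near trees between distinct $E_j$'s — which is exactly what makes the per-class multiset description both sound and complete.
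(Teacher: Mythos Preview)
Your proposal is correct and follows essentially the same route as the paper: decompose $\text{Aut}_\text{or}(G)$ via Theorem~\ref{parallelaut} into swaps and per-subgraph automorphisms, use the preceding lemma to decouple the problem across the classes $E_j$, pull everything back to the representative $R_j$ via fixed oriented isomorphisms, and conclude that only the multiset of $\text{Aut}_\text{or}(R_j)$-classes matters because the swap action realizes the full symmetric group on each $E_j$. Your write-up is, if anything, a bit more explicit than the paper's about the conjugation bookkeeping, but the argument is the same.
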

\begin{proof}
    First, we observe that an automorphism of $G$ can swap principal subgraphs if and only if they belong to the same class $E_i$. Therefore, if $N_1$ and $N_2$ are near trees of $G$,
    then any sequence of swaps that makes $N_1$ equal to $N_2$ (on top of any automorphisms coming from each $\text{Aut}_\text{or}(G_i)$) must occur between principal subgraphs \emph{within} the same $E_i$. Accounting for these internal swaps, we can compute the nonequivalent near trees over each class $E_i$ separately and then compose these near trees together.

    When we partition the $G_i$'s into their oriented isomorphism classes $E_j$, we save the oriented isomorphism $\tau_i$ that makes each $G_i$ equivalent to its class representative $R_j$. It is not hard to see that the nonequivalent near trees of each $G_i$ are the nonequivalent near trees of $R_j$ multiplied by $\tau_i^{-1}$. Therefore, we can encode each near tree $N$ over $E_j$ using the nonequivalent near trees of $R_j$ that is is formed from.
    Using our earlier tuple notation, we can write the near tree $N$ as follows: \[N = (\tau_1^{-1}N_1, \tau_2^{-1}N_2, \ldots, \tau_\ell^{-1}N_\ell).\]
    Let $N$ and $N^\prime$ (defined similarly) be two near trees of the graph $\bigcup_{G_i \in E_j} G_i$, the union of the principal subgraphs in $E_i$. By Theorem \ref{parallelaut}, $N$ is equivalent to $N^\prime$ if and only if
    \[(\tau_1^{-1}N_1, \tau_2^{-1}N_2, \ldots, \tau_\ell^{-1}N_\ell) = s\big(\sigma_1 ({\tau_1}^\prime)^{-1} N_1, \sigma_2 ({\tau_2}^\prime)^{-1}N_2, \ldots, \sigma_k ({\tau_m}^\prime)^{-1}N_m\big),\] where $s \in \text{swaps}(G)$ and each $\sigma_i \in \text{Aut}_\text{or}(G)$. But since all the $G_i$'s belong to the same oriented isomorphism class, they can be swapped in any way. Therefore, $N$ is equivalent to $N^\prime$ if and only if the \emph{multisets} of their near trees are equal (up to the action of each $\tau_i^{-1}$):
    \[[\tau_1^{-1}N_1, \tau_2^{-1}N_2, \ldots, \tau_m^{-1}N_m] = [\sigma_1 ({\tau_1}^\prime)^{-1} N_1, \sigma_2 ({\tau_2}^\prime)^{-1}N_2, \ldots, \sigma_k ({\tau_m}^\prime)^{-1}N_m].\]
    But this is the same as saying that $N$ and $N^\prime$ are formed using the multiset of trees $[N_1, N_2, \ldots, N_m]$ of $R_j$ and multiplying them by the appropriate isomorphisms $\tau_i^{-1}$.
    Each multiset encodes an \emph{essentially unique} way of assigning near trees to the $G_i$ within $E_j$, up to automorphism.
    
    Hence, we can compute the nonequivalent near trees of $G$ as follows. For each oriented isomorphism class $E_j$, we compute the nonequivalent near trees of $R_j$. Then, we compute all length-$\vert E_j\vert$  multisets of $R_j$'s near trees and assign the near trees in each multiset to the $G_i$'s within $E_j$ arbitrarily, multiplying them by $\tau_i^{-1}$ accordingly. This is the desired result.
\end{proof}

\subsubsection{Spanning Trees on Parallel Superedges}
\label{sec:spanningtreesparallelsuperedges}The task of computing the nonequivalent spanning trees of parallel superedges is similar to that of computing the nonequivalent near trees---but instead of assigning near trees to each $G_i$, we assign near trees to all but \emph{one} of them, to which we assign a spanning tree instead.

In particular, let us assume that the principal subgraph that has a spanning tree belongs to the class $E_a$; without loss of generality, we can also assume that this subgraph is $E_a$'s representative, $R_a$. Every other isomorphism class $E_i$ (where $i \neq a$) consists solely of principal subgraphs with near trees, so we can proceed as before with them: for each $R_i$, we generate the length-$\vert E_i\vert$ multisets of $R_i$'s nonequivalent near trees and assign them to the subgraphs in $E_i$ arbitrarily (since the order does not matter up to the swap automorphisms).

We only have to modify this procedure slightly for the class $E_a$. Up to automorphism, the class $E_a$ is distinguished by having a unique choice of spanning tree on one of its subgraphs and a unique multiset of near trees on the rest (but as described above, it does not matter which subgraph gets what tree). Hence, we first consider every possible way of assigning nonequivalent spanning trees to $R_a$; we then compute the length-$(\vert E_a\vert - 1)$ multisets of $R_a$'s nonequivalent near trees and assign them to the remaining subgraphs in $E_a$ arbitrarily. Repeating this procedure for each choice of the class representative $R_a$---the one possessing the spanning tree---produces the desired output, the nonequivalent spanning trees of $G$ up to automorphism.

The theorem below asserts the correctness of this scheme; but since the proof is very similar to Theorem \ref{parallelneartrees} (with the slight modification described above), we omit it here.
\begin{theorem}
    \label{parallelspanningtrees}
    Let $G$ be an oriented parallel superedge with terminals $s$ and $t$, such that each $G_i$ is non-parallel. Then, the nonequivalent spanning trees of $G$ can be formed by modifying the near tree scheme, as described above.
\end{theorem}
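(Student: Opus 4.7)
The plan is to model the proof closely on Theorem \ref{parallelneartrees}, treating the single ``spanning slot'' as a distinguished coordinate layered on top of the multiset argument already established for near trees. Every spanning tree $T$ of $G$ decomposes as a tuple $(X_1, \ldots, X_k)$ in which exactly one coordinate $X_a$ is a spanning tree on $G_a$ and each other $X_i$ is a near tree on $G_i$. By Theorem \ref{parallelaut}, two such tuples are equivalent under $\text{Aut}_\text{or}(G)$ if and only if they are related by a swap $s \in \text{swaps}(G)$ together with an element of $\prod_{i=1}^k \text{Aut}_\text{or}(G_i)$.

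First I would show that the oriented isomorphism class $E_a$ in which the distinguished coordinate sits is an orbit invariant. Because a spanning tree on $G_a$ has exactly one more edge than a near tree on $G_i$, any swap $s \in \text{swaps}(G)$ that carries $T$ to another spanning tree must send the distinguished slot $a$ to another slot $b$; swaps only move between principal subgraphs in the same oriented isomorphism class, so $G_b \in E_a$. Thus the distinguished slot never leaves $E_a$, and partitioning the enumeration by the choice of $E_a$ creates no cross-class identifications. This justifies the outer ``for each class $E_a$'' loop of the scheme.

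Next, with $E_a$ fixed, I would use a preliminary swap within $E_a$ to normalize the distinguished coordinate onto the class representative $R_a$, and enumerate nonequivalent spanning trees on $R_a$ independently via a recursive call. The residual swap freedom then factors as swaps among the other $\vert E_a\vert - 1$ subgraphs of $E_a$ together with swaps within each other class $E_i$ for $i \neq a$, each combining with the corresponding $\text{Aut}_\text{or}(G_i)$ factors exactly as in the proof of Theorem \ref{parallelneartrees}. Hence, on the $\vert E_a\vert - 1$ remaining subgraphs of $E_a$ and on each subgraph of every $E_i$ with $i \neq a$, the nonequivalent choices are precisely the length-$(\vert E_a\vert - 1)$ and length-$\vert E_i \vert$ multisets of nonequivalent near trees of the relevant representative, each pushed back through the saved oriented isomorphisms $\tau_i^{-1}$.

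The step I expect to be the main obstacle is verifying the normalization cleanly: that no two distinct scheme outputs land in the same orbit, and that no orbit is missed. The first direction holds because fixing the spanning tree at $R_a$ and recording the remaining near trees as unordered multisets selects exactly one representative per coset of the full swap-product stabilizer; the second direction follows by applying swaps in reverse to recover an arbitrary tuple in the orbit from its normalized form. Once this bijection between scheme outputs and orbits is in hand, the theorem follows by the same case-by-case composition used in Theorem \ref{parallelneartrees}.
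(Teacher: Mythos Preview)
Your proposal is correct and follows exactly the approach the paper intends: the paper's own proof is a single sentence pointing back to Theorem~\ref{parallelneartrees} with the remark that one $R_a$ is set aside to carry the spanning tree while the remaining $G_j$ receive near trees, and you have simply filled in the details of that sketch (the invariance of the class $E_a$ carrying the spanning slot, the normalization to $R_a$, and the residual multiset argument on the other classes). There is nothing to add.
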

\begin{proof}
    The proof for this result is very similar to Theorem \ref{parallelneartrees}; the only difference is that we must set aside one $R_a$ to be our spanning tree and the remaining $G_j$ are near trees. 
\end{proof}

\subsubsection{Testing Oriented Isomorphism}
\label{isomorphism}
In his 1978 doctoral thesis, Jacobo Valdes Ayesta gives a linear-time algorithm to test whether two series-parallel multidigraphs are isomorphic (by adapting Aho, Hopcroft, and Ullman's classic rooted tree isomorphism algorithm)  \cite{valdes}. In his definition for a series-parallel multidigraph $G$, Valdes Ayesta requires that $G$ has a single source and sink vertex, with indegree and outdegree 0, respectively. As a result, there is only one choice of the source terminal $s$ and the sink terminal $t$ of our multidigraph---the source and sink vertices themselves.

Now, let $(G_i, s, t)$ and $(G_j, s, t)$ be two oriented series-parallel graphs. If we direct the edges of $G_i$ and $G_j$ so that they all flow from $s$ to $t$, then whichever isomorphism makes these digraphs isomorphic (if any) must fix $s$ and $t$. Thus, if their directed counterparts are isomorphic---which we can test using Valdes Ayesta's algorithm---then $G_i \equiv_\text{or} G_j$. Conversely, if $G_i \equiv_\text{or} G_j$ under some oriented isomorphism $\tau : V(G_i) \rightarrow V(G_j)$, then let $u, v \in V(G_i)$. We know that $\{u, v\} \in E(G_i)$ if and only if $\{\tau(u), \tau(v)\} \in E(G_j)$. Since $\tau$ fixes $s$ and $t$, it is not hard to see that $\tau$ preserves the $s \rightarrow t$ direction of our digraphs. The directed edge $(u, v)$ is in the directed version of $G_i$ if and only if $(\tau(u), \tau(v))$ is in the directed $G_j$. Thus, the directed counterparts of $G_i$ and $G_j$ are also isomorphic, so the converse holds.

We conclude that $G_i \equiv_\text{or} G_j$ if and only if the directed counterparts of $G_i$ and $G_j$ are isomorphic. Therefore, we can use Valdes Ayesta's algorithm to test whether two oriented series-parallel graphs $G_i$ and $G_j$ are isomorphic. Combining this observation with Theorems \ref{parallelneartrees} and \ref{parallelspanningtrees}, the following algorithms will generate the spanning/near trees of an oriented parallel superedge up to its automorphisms.
\begin{algorithm}[h]
    \caption{$\texttt{parallel\_spanning\_trees}(G)$}
    \label{alg:parallelspanning}
    \begin{algorithmic}
         \REQUIRE{Parallel superedge $G$, with subgraphs $G_1, G_2, \ldots, G_k$}
         \REQUIRE{Each $G_i$ is non-parallel, for $1 \leq i \leq k$}\newline
         
         \IF{$G$ is an edge} \RETURN{$G$} \hfill\COMMENT{Base case; return null graph on two vertices.}
         \ENDIF\\
         
         \STATE $\texttt{noneq\_trees} \gets \prod_{i = 1}^k \texttt{parallel\_spanning\_trees}(G_i)$ \hfill\COMMENT{Compose all choices of nonequivalent trees.}
         \RETURN{\texttt{noneq\_trees}}
    \end{algorithmic}
\end{algorithm}

\begin{algorithm}[h]
    \caption{$\texttt{parallel\_both\_trees}(G)$}
    \label{alg:parallelboth}
    \begin{algorithmic}
         \REQUIRE{Parallel superedge $G$, with subgraphs $G_1, G_2, \ldots, G_k$}
         \REQUIRE{Each $G_i$ is non-parallel, for $1 \leq i \leq k$}
         
         \IF{$G$ is an edge} \RETURN{$\overline{G}$}
         \ENDIF\\
        \STATE{$\texttt{oriented\_isomorphism\_classes} \gets \{\}$}
         
         \FOR{$1 \leq i \leq k$}
         \STATE {$\texttt{found\_match} \gets \texttt{False}$}
         \FOR{$E_j$ \textbf{in} \texttt{oriented\_isomorphism\_classes}}
         \IF {$G_i \cong E_j\texttt{[0]}$}
            \STATE $E_j \gets E_j \cup \{G_i\}$
            \STATE {$\texttt{found\_match} \gets \texttt{True}$}
         \ENDIF
         \ENDFOR
         \IF {\textbf{not} $\texttt{found\_match}$}
            \STATE {$\texttt{oriented\_isomorphism\_classes} \gets \texttt{oriented\_isomorphism\_classes} \cup \{\texttt{[}G_i\texttt{]}\}$}
        \ENDIF
         \ENDFOR
    \end{algorithmic}
\end{algorithm}

\subsection{Analysis}
We now analyze Algorithms \ref{alg:serialspanning}, \ref{alg:serialboth}, \ref{alg:parallelspanning}, and \ref{alg:parallelboth}. Let $T_S(G)$ and $T_P(G)$ be the cost of computing the nonequivalent spanning trees of a serial or parallel superedge $G$, respectively (i.e.\ Algorithms \ref{alg:serialspanning} and \ref{alg:parallelspanning}). In addition, let $B_S(G)$ and $B_P(G)$ be the cost of computing both the nonequivalent spanning and near trees of a nontrivial serial or parallel superedge $G$ (i.e.\ Algorithms \ref{alg:serialboth} and \ref{alg:parallelboth}).

Let us first analyze Algorithms \ref{alg:serialboth} and \ref{alg:parallelboth}; Algorithms \ref{alg:serialspanning} and \ref{alg:parallelspanning} follow as a consequence. Algorithm \ref{alg:serialboth} (\texttt{serial\_both\_trees}) starts by computing the spanning trees and near trees of each (parallel) principal subgraph $G_i$, which takes total time $\sum_{i = 1}^k B_P(G_i).$ To build the nonequivalent spanning trees of $G$, we then compute every possible union of the nonequivalent spanning trees of each $G_i$. By Theorem \ref{serialtrees}, the number of such unions is $\vert \textsf{ST}(G)/\text{Aut}_\text{or}(G)\vert$, the number of nonequivalent trees itself. The cost of performing each union is $\Theta(m + n)$, where $n$ is the number of vertices of $G$, and $m$ is the number of edges in each spanning tree; spanning trees have $m = n - 1$ edges, so each union simply takes time $\Theta(n)$. As a result, the total cost of building the nonequivalent spanning trees is $\Theta\big(n\cdot\vert \textsf{ST}(G)/\text{Aut}_\text{or}(G)\vert\big).$ By nearly identical reasoning, the cost of building the nonequivalent near trees is $\Theta\big(n \cdot \vert \textsf{NT}(G)/\text{Aut}_\text{or}(G)\vert\big).$ The cost of returning all the spanning and near trees is also $\Theta\big(n\cdot [\vert \textsf{ST}(G)/\text{Aut}_\text{or}(G)\vert + \vert \textsf{NT}(G)/\text{Aut}_\text{or}(G)\vert]\big)$, so we have that
\begin{equation}
    B_S(G) = \sum_{i = 1}^k B_P(G_i) + \Theta\left(n\left[\left\vert \frac{\textsf{ST}(G)}{\text{Aut}_\text{or}(G)}\right\vert + \left\vert \frac{\textsf{NT}(G)}{\text{Aut}_\text{or}(G)}\right\vert\right]\right).
\end{equation}
In Algorithm \ref{alg:parallelboth} (\texttt{parallel\_both\_trees}), we start by partitioning the $G_i$'s into $\ell$ oriented isomorphism classes. To do so, we iterate through each of the $k$ principal subgraphs $G_1, G_2, \ldots, 
 G_k$ and test whether they are oriented-isomorphic to any of the class representatives we have discovered so far; if not, we create a new oriented isomorphism class with $G_i$ as its representative.
 
In order to test whether two principal subgraphs are oriented-isomorphic, we use Valdes Ayesta's algorithm for isomorphism on directed series-parallel graphs (see Section \ref{isomorphism}). If $G_a$ and $G_b$ are two oriented series-parallel graphs with $n_i$ vertices and $m_i$ edges, then we can test their isomorphism in time $O(m_i + n_i)$ \cite{valdes}. Even if $G_a$ and $G_b$ do not share the same number of vertices or edges, we can clearly still test their isomorphism in time $O(m_i + n_i)$, where either one of $G_a$ or $G_b$ has $n_i$ vertices and $m_i$ edges. If we have $\ell$ oriented-isomorphism classes with representatives $R_1, R_2, \ldots, R_\ell$, then we compare each $G_i$ to at most $\ell$ distinct representatives in the partitioning step. Because we have $k$ principal subgraphs to partition, it is not hard to see that this step takes time $O\big(k\ell(e + v)\big) = O\big(\ell(m + n)\big)$, where $e + v$ is the average value of $m_i + n_i$ across all $k$ principal subgraphs $G_i$.

Since we have $\ell$ classes, the total cost of our calls to $\texttt{serial\_both\_trees}$ is $\sum_{i = 1}^\ell B_S(R_i).$ We then have to build and return each possible spanning and near tree of $G$; by similar reasoning to the serial case, this step takes time $\Theta\big(n\cdot [\vert \textsf{ST}(G)/\text{Aut}_\text{or}(G)\vert + \vert \textsf{NT}(G)/\text{Aut}_\text{or}(G)\vert]\big)$. Therefore,
\[B_P(G) = \sum_{i = 1}^\ell B_S(R_i) + \Theta\left(n\left[\left\vert \frac{\textsf{ST}(G)}{\text{Aut}_\text{or}(G)}\right\vert + \left\vert \frac{\textsf{NT}(G)}{\text{Aut}_\text{or}(G)}\right\vert\right] + \ell(m + n)\right).\]

A series-parallel graph on $n$ vertices has at most $m = 2n - 3$ edges \cite{BODIRSKY20072091}, so $\ell(m + n) = O(\ell n)$. Since we have $\ell$ class representatives $R_i$, there are $\ell$ ways of choosing the principal subgraph that is assigned a near tree. As a result, $\vert \textsf{ST}(G)/\text{Aut}_\text{or}(G)\vert \geq \ell$, so $n \cdot \vert \textsf{ST}(G)/\text{Aut}_\text{or}(G)\vert \geq \ell n$. Hence, $\ell(m + n) = O(\ell n)$ is a lower-order term, so our expression for $B_P(G)$ simplifies to
\begin{equation}
    B_P(G) = \sum_{i = 1}^\ell B_S(R_i) + \Theta\left(n\left[\left\vert \frac{\textsf{ST}(G)}{\text{Aut}_\text{or}(G)}\right\vert + \left\vert \frac{\textsf{NT}(G)}{\text{Aut}_\text{or}(G)}\right\vert\right]\right).
\end{equation}
If $G$ is a single edge, then $B_S(G) = B_P(G) = \Theta(1)$.

Let us now investigate Algorithms \ref{alg:serialspanning} and \ref{alg:parallelspanning}. In Algorithm \ref{alg:serialspanning} (\texttt{serial\_spanning\_trees}), we compute the nonequivalent spanning trees of each principal subgraph $G_i$ and compose them together. Here, we build and return spanning trees but not near trees; this gives us that
\begin{equation}
    T_S(G) = \sum_{i = 1}^k T_P(G_i) + \Theta\left(n\left\vert \frac{\textsf{ST}(G)}{\text{Aut}_\text{or}(G)}\right\vert\right).
\end{equation}
In Algorithm \ref{alg:parallelspanning} (\texttt{parallel\_spanning\_trees}), we partition our principal subgraphs into $\ell$ oriented isomorphism classes and then compute the nonequivalent spanning trees and near trees on each representative $R_i$. By similar reasoning to $B_P(G)$ (but where we only generate the spanning trees), we find that
\begin{equation}
     T_P(G) = \sum_{i = 1}^\ell B_S(R_i) + \Theta\left(n\left\vert \frac{\textsf{ST}(G)}{\text{Aut}_\text{or}(G)}\right\vert\right).
\end{equation}
If $G$ is a single edge, then $T_S(G) = T_P(G) = \Theta(1)$, as before. These four recurrences exactly determine the asymptotic runtime over all input graphs $G$.

For simplicity, let $B(G)$ equal $B_S(G)$ if $G$ is serial and $B_P(G)$ if $G$ is parallel; we define $T(G)$ in a similar way. In Theorem \ref{runtime} below, we will prove that Algorithms \ref{alg:serialspanning}, \ref{alg:serialboth}, \ref{alg:parallelspanning}, and \ref{alg:parallelboth} are optimal if we want to return each spanning tree explicitly.\footnote{Many spanning tree generation algorithms choose to return their trees \emph{implicitly} rather than explicitly. Smith’s algorithm, for instance, produces its spanning trees in a revolving-door Gray code order, where successive trees differ by exactly one edge. As a result, we can return the vertices and edges of the first tree explicitly and then repeatedly return the two edges that we delete and add to go from one tree to the next. This implicitly encodes all the trees, and since we only spend constant time writing the pair of edge updates for each tree, it takes time $\Theta(m + n + \tau(G))$, which is optimal \cite{10.5555/1984890}.} To this end, let us first establish a few helpful lemmas.
\begin{lemma}
    \label{seriallemma1}
    Let $G$ be an oriented serial superedge, such that each $G_i$ is non-serial. Then,
    \[\sum_{i = 1}^k n_i\left\vert \frac{\emph{\textsf{ST}}(G_i)}{\emph{Aut}_\emph{or}(G_i)}\right\vert = O\left(n\left\vert \frac{\emph{\textsf{ST}}(G)}{\emph{Aut}_\emph{or}(G)}\right\vert \right),\]
    where $G_i$ has $n_i$ vertices, for $1 \leq i \leq k$.
\end{lemma}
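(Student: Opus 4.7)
The plan is to leverage Theorem~\ref{serialtrees} directly. Let $s_i = \vert \textsf{ST}(G_i)/\text{Aut}_\text{or}(G_i)\vert$ for each $i$. Theorem~\ref{serialtrees} says the nonequivalent spanning trees of $G$ are formed by picking one nonequivalent spanning tree on each principal subgraph, so they factor as
\[\left\vert \frac{\textsf{ST}(G)}{\text{Aut}_\text{or}(G)}\right\vert = \prod_{i = 1}^k s_i.\]
This reduces the claim to an elementary inequality of the form $\sum_i n_i s_i = O\bigl(n \prod_j s_j\bigr)$.

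The first key step is to observe that every principal subgraph has at least one spanning tree, so $s_j \geq 1$ for all $j$. This means the full product dominates each individual factor: $\prod_{j=1}^k s_j \geq s_i$ for every $i$. Substituting this bound gives
\[\sum_{i = 1}^k n_i s_i \;\leq\; \left(\sum_{i = 1}^k n_i\right)\prod_{j = 1}^k s_j.\]

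The second step is to bound $\sum_i n_i$ in terms of $n$. Consecutive subgraphs $G_i$ and $G_{i+1}$ share exactly one terminal vertex $s_{i+1}$, and these are the only vertices shared between different $G_i$'s (this follows from the same terminal-preservation argument used in the proof of Theorem~\ref{serialaut}). Hence $\sum_{i=1}^k n_i = n + (k - 1)$. Moreover, since each $G_i$ is non-serial and therefore has at least two vertices, the $k+1$ terminal vertices $s_1, \ldots, s_{k+1}$ are all distinct in $G$, giving $n \geq k + 1$. Combining these, $\sum_i n_i \leq 2n$.

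Putting the two steps together yields
\[\sum_{i = 1}^k n_i s_i \;\leq\; 2n \prod_{j = 1}^k s_j \;=\; 2n \left\vert \frac{\textsf{ST}(G)}{\text{Aut}_\text{or}(G)}\right\vert,\]
which is the desired $O$-bound. There is no real obstacle here: the result is essentially a consequence of Theorem~\ref{serialtrees} together with the trivial bound $s_j \geq 1$. The only item that requires a bit of care is the accounting of shared terminals when expressing $\sum_i n_i$ in terms of $n$ and $k$, but this is routine.
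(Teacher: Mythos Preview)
Your proof is correct and, in fact, a bit more direct than the paper's. Both arguments start from the product formula $\left\vert \textsf{ST}(G)/\text{Aut}_\text{or}(G)\right\vert = \prod_i s_i$ supplied by Theorem~\ref{serialtrees}, but they diverge in how they control $\sum_i n_i s_i$. The paper separates the principal subgraphs into ``basic'' ones (single edges, with $n_i = 2$ and $s_i = 1$) and ``nonbasic'' ones (with $s_i \ge 2$), uses the crude bound $n_i \le n$ on the nonbasic part, and then invokes the inequality $\sum s_i < \prod s_i$ for factors at least $2$. You instead use the trivial bound $s_i \le \prod_j s_j$ (valid because every $s_j \ge 1$) to factor out the product immediately, and then handle the vertex counts via the exact accounting $\sum_i n_i = n + (k-1) \le 2n$. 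Your route avoids the basic/nonbasic case split and the ``sum versus product'' step (which, incidentally, needs a small patch when there is only one nonbasic subgraph), at the cost of a short argument about shared terminals. Both yield the same constant $O(n \cdot \prod_i s_i)$.
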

\begin{proof}
    Let $a$ be the number of \emph{basic} principal subgraphs of $G$, meaning that they only have one nonequivalent spanning tree. Each of these subgraphs has $n_i = 2$ vertices and $\vert \textsf{ST}(G_i)/\text{Aut}_\text{or}(G_i)\vert = 1$. Therefore,
    \[\sum_{i = 1}^k n_i\left\vert \frac{\textsf{ST}(G_i)}{\text{Aut}_\text{or}(G_i)}\right\vert \leq 2a + \sum_{G_i\,\text{nonbasic}} n_i\left\vert \frac{\textsf{ST}(G_i)}{\text{Aut}_\text{or}(G_i)}\right\vert \leq 2a + n\sum_{G_i\,\text{nonbasic}} \left\vert \frac{\textsf{ST}(G_i)}{\text{Aut}_\text{or}(G_i)}\right\vert.\]
    But since $\vert \textsf{ST}(G)/\text{Aut}_\text{or}(G)\vert \geq 2$ for each nonbasic $G_i$, the sum over all nonbasic $\vert \textsf{ST}(G_i)/\text{Aut}_\text{or}(G_i)\vert$ is strictly less than the product. By Theorem \ref{serialtrees}, the nonequivalent spanning trees of $G$ are formed by taking every possible union of the nonequivalent spanning trees on each $G_i$, so $\vert \textsf{ST}(G)/\text{Aut}_\text{or}(G)\vert = \prod_{i = 1}^k \vert \textsf{ST}(G_i)/\text{Aut}_\text{or}(G_i)\vert$. Combining these two facts together,
    \begin{align*}
        \sum_{i = 1}^k n_i\left\vert \frac{\textsf{ST}(G_i)}{\text{Aut}_\text{or}(G_i)}\right\vert < 2a + n\prod_{G_i\,\text{nonbasic}} \left\vert \frac{\textsf{ST}(G_i)}{\text{Aut}_\text{or}(G_i)}\right\vert &\leq 2a\prod_{i = 1}^k \left\vert \frac{\textsf{ST}(G_i)}{\text{Aut}_\text{or}(G_i)}\right\vert + n \prod_{i = 1}^k \left\vert \frac{\textsf{ST}(G_i)}{\text{Aut}_\text{or}(G_i)}\right\vert\\[6pt]
        &\leq 2n\prod_{i = 1}^k \left\vert \frac{\textsf{ST}(G_i)}{\text{Aut}_\text{or}(G_i)}\right\vert + n \prod_{i = 1}^k \left\vert \frac{\textsf{ST}(G_i)}{\text{Aut}_\text{or}(G_i)}\right\vert\\[6pt]
        &= \Theta\left(n\left\vert \frac{\textsf{ST}(G_i)}{\text{Aut}_\text{or}(G_i)}\right\vert\right),
    \end{align*}
    so $\sum_{i = 1}^k n_i\cdot \vert \textsf{ST}(G_i)/\text{Aut}_\text{or}(G_i)\vert = O\big(n\cdot \vert \textsf{ST}(G)/\text{Aut}_\text{or}(G)\vert\big)$, as desired.
\end{proof}
\begin{lemma}
    \label{seriallemma2}
    Let $G$ be an oriented serial superedge, such that each $G_i$ is non-serial. Then,
    \[\sum_{i = 1}^k n_i\left\vert \frac{\emph{\textsf{NT}}(G_i)}{\emph{Aut}_\emph{or}(G_i)}\right\vert = O\left(n\left\vert \frac{\emph{\textsf{NT}}(G)}{\emph{Aut}_\emph{or}(G)}\right\vert \right),\]
    where $G_i$ has $n_i$ vertices, for $1 \leq i \leq k$.
\end{lemma}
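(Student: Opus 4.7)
The plan is to mirror the strategy of Lemma~\ref{seriallemma1}, invoking the enumerative content of Theorem~\ref{serialneartrees} to bound the left-hand sum by the target quantity. Because the structural formula for near trees of a serial superedge is \emph{additive} (one principal subgraph carries the missing edge, the others carry full spanning trees), the analysis ends up strictly simpler than the spanning-tree case, where a multiplicative formula required a separate treatment of ``basic'' subgraphs.

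First, I would apply Theorem~\ref{serialneartrees} to obtain the exact identity
\[\left|\frac{\textsf{NT}(G)}{\text{Aut}_\text{or}(G)}\right| \,=\, \sum_{j=1}^{k} \left|\frac{\textsf{NT}(G_j)}{\text{Aut}_\text{or}(G_j)}\right| \prod_{i \neq j} \left|\frac{\textsf{ST}(G_i)}{\text{Aut}_\text{or}(G_i)}\right|.\]
The subtle point here is that the union described in Theorem~\ref{serialneartrees} is a genuine disjoint enumeration: the single missing edge of any near tree of $G$ lies in exactly one principal subgraph $G_j$, so the index $j$ is a canonical invariant of the near tree, and nonequivalent choices across different $j$'s cannot coincide. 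This makes the sum an equality rather than a mere upper bound.

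Next, since every non-trivial oriented series-parallel graph has at least one nonequivalent spanning tree, each factor $|\textsf{ST}(G_i)/\text{Aut}_\text{or}(G_i)|$ is at least $1$, so dropping those factors gives the clean lower bound
\[\left|\frac{\textsf{NT}(G)}{\text{Aut}_\text{or}(G)}\right| \,\geq\, \sum_{j=1}^{k} \left|\frac{\textsf{NT}(G_j)}{\text{Aut}_\text{or}(G_j)}\right|.\]
Bounding $n_i \leq n$ in the target sum then yields
\[\sum_{i=1}^{k} n_i \left|\frac{\textsf{NT}(G_i)}{\text{Aut}_\text{or}(G_i)}\right| \,\leq\, n \sum_{i=1}^{k} \left|\frac{\textsf{NT}(G_i)}{\text{Aut}_\text{or}(G_i)}\right| \,\leq\, n \left|\frac{\textsf{NT}(G)}{\text{Aut}_\text{or}(G)}\right|,\]
which is precisely the desired $O$-bound.

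I do not anticipate any serious obstacle: unlike in Lemma~\ref{seriallemma1}, the additive form of the near-tree count means that basic (single-edge) principal subgraphs are absorbed automatically, with no need to separately charge their $n_i = 2$ contribution against a possibly small product. The only care required is justifying that Theorem~\ref{serialneartrees} delivers an exact, disjoint enumeration of the nonequivalent near trees of $G$, which I would address explicitly as indicated above.
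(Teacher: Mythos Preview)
Your proposal is correct and follows essentially the same argument as the paper: both invoke Theorem~\ref{serialneartrees} to obtain the additive formula for $\vert \textsf{NT}(G)/\text{Aut}_\text{or}(G)\vert$, bound each $n_i$ by $n$, and then use the fact that each spanning-tree factor $\vert \textsf{ST}(G_j)/\text{Aut}_\text{or}(G_j)\vert \geq 1$ to conclude. Your added remark about the disjointness of the enumeration is a welcome clarification but does not deviate from the paper's route.
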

\begin{proof}
    By Theorem \ref{serialneartrees}, the nonequivalent near trees of $G$ are formed by taking the union of (1) the nonequivalent near trees of one principal subgraph $G_i$ and (2) the nonequivalent spanning trees of every other principal subgraph $G_j$, across all $k$ choices of $G_i$. As a result, $\vert \textsf{NT}(G)/\text{Aut}_\text{or}(G)\vert = \sum_{i = 1}^k \vert \textsf{NT}(G_j)/\text{Aut}_\text{or}(G_i)\vert \cdot \prod_{j \neq i} \vert \textsf{ST}(G_j)/\text{Aut}_\text{or}(G_j)\vert$. Using this fact, we find that
    \begin{align*}
        \sum_{i = 1}^k n_i\left\vert \frac{{\textsf{NT}}(G_i)}{\text{Aut}_\text{or}(G_i)}\right\vert  &\leq n \sum_{i = 1}^k \left\vert \frac{\textsf{NT}(G_i)}{\text{Aut}_\text{or}(G_i)}\right\vert\\[6pt]
        &\leq n\sum_{i = 1}^k \left\vert \frac{\textsf{NT}(G_i)}{\text{Aut}_\text{or}(G_i)}\right\vert \prod_{j \neq i}\left\vert \frac{\textsf{NT}(G_i)}{\text{Aut}_\text{or}(G_i)}\right\vert = O\left(n\left\vert \frac{\textsf{NT}(G)}{\text{Aut}_\text{or}(G)}\right\vert \right),
    \end{align*}
    which is the desired result.
\end{proof}
\begin{lemma}
    \label{parallellemma}
    Let $G$ be an oriented parallel superedge, such that each $G_i$ is non-parallel. Then,
    \[\sum_{i = 1}^\ell n_i\left[\left\vert \frac{\emph{\textsf{ST}}(R_i)}{\emph{Aut}_\emph{or}(G_i)}\right\vert + \left\vert \frac{\emph{\textsf{NT}}(R_i)}{\emph{Aut}_\emph{or}(G_i)}\right\vert\right] = O\left(n\left\vert \frac{\emph{\textsf{ST}}(G)}{\emph{Aut}_\emph{or}(G)}\right\vert\right),\]
    where the class representative $R_i$ has $n_i$ vertices, for $1 \leq i \leq k$. 
\end{lemma}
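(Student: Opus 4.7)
The plan is to combine the explicit counting formula for $|\textsf{ST}(G)/\text{Aut}_\text{or}(G)|$ implicit in Theorem \ref{parallelspanningtrees} with two elementary multiset inequalities. Write $s_i := |\textsf{ST}(R_i)/\text{Aut}_\text{or}(R_i)|$, $t_i := |\textsf{NT}(R_i)/\text{Aut}_\text{or}(R_i)|$, $k_i := |E_i|$, and $M(x, k) := \binom{x + k - 1}{k}$ for the number of length-$k$ multisets drawn from an $x$-element set. The theorem then gives
\[\left\vert\frac{\textsf{ST}(G)}{\text{Aut}_\text{or}(G)}\right\vert \;=\; \sum_{a = 1}^{\ell} s_a \cdot M(t_a,\, k_a - 1) \prod_{i \neq a} M(t_i,\, k_i).\]
Since every representative $R_i$ is a principal subgraph of $G$ sharing its terminals, we have $n_i \leq n$, so it suffices to bound $\sum_{i = 1}^\ell (s_i + t_i)$ by a constant multiple of $|\textsf{ST}(G)/\text{Aut}_\text{or}(G)|$.

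For the spanning-tree sum, I would observe that every multiset count is at least $1$, so the $a$th summand above is at least $s_a$; summing over $a$ yields $\sum_a s_a \leq |\textsf{ST}(G)/\text{Aut}_\text{or}(G)|$. For the near-tree sum, note that $M(t_i, k_i) \geq t_i$ whenever $k_i \geq 1$, so when $\ell \geq 2$ we have $\prod_{i \neq a} M(t_i, k_i) \geq \max_{b \neq a} t_b$ for each $a$. Thus
\[\left\vert\frac{\textsf{ST}(G)}{\text{Aut}_\text{or}(G)}\right\vert \;\geq\; \sum_{a = 1}^{\ell} \max_{b \neq a} t_b \;\geq\; (\ell - 1)\max_i t_i \;\geq\; \frac{\ell - 1}{\ell} \sum_i t_i \;\geq\; \tfrac{1}{2} \sum_i t_i,\]
since for all but one choice of $a$ the inner maximum equals the global $\max_i t_i$. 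The edge case $\ell = 1$ must be handled separately: the product over $i \neq a$ is empty, but $k_1 = k \geq 2$ because $G$ is a parallel superedge with at least two principal subgraphs, so $M(t_1, k_1 - 1) \geq M(t_1, 1) = t_1$ and $|\textsf{ST}(G)/\text{Aut}_\text{or}(G)| \geq s_1 t_1 \geq t_1 = \sum_i t_i$.

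Combining, $\sum_i (s_i + t_i) \leq 3\,|\textsf{ST}(G)/\text{Aut}_\text{or}(G)|$, and multiplying through by $n$ delivers the desired bound $\sum_i n_i(s_i + t_i) = O\bigl(n\,|\textsf{ST}(G)/\text{Aut}_\text{or}(G)|\bigr)$. The main subtlety will be keeping the constant uniform in $\ell$: the naive bound $|\textsf{ST}(G)/\text{Aut}_\text{or}(G)| \geq t_i$ for each $i$ only gives $\sum_i t_i \leq \ell \cdot |\textsf{ST}(G)/\text{Aut}_\text{or}(G)|$, which is too weak. The key observation is that each of the $\ell$ summands in the formula for $|\textsf{ST}(G)/\text{Aut}_\text{or}(G)|$ already pays for at least $\max_{b \neq a} t_b$, so $\ell - 1$ of them contribute the global maximum $\max_i t_i$, exactly offsetting the factor of $\ell$ from $\sum_i t_i \leq \ell \max_i t_i$ and forcing a separate but straightforward argument for the degenerate $\ell = 1$ case via the leftover factor $M(t_1, k_1 - 1)$.
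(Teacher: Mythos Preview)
Your proposal is correct and follows essentially the same approach as the paper: write out the explicit formula for $|\textsf{ST}(G)/\text{Aut}_\text{or}(G)|$ from Theorem~\ref{parallelspanningtrees}, observe that it dominates both $\sum_i s_i$ and $\sum_i t_i$, and finish via $n_i \le n$. The paper in fact asserts the near-tree bound $|\textsf{ST}(G)/\text{Aut}_\text{or}(G)| \geq \sum_i t_i$ with constant~$1$ and no case split on $\ell$---your argument is more detailed here and yields constant~$2$ (a cyclic-shift variant, bounding the $a$th summand below by $t_{(a \bmod \ell)+1}$ instead of $\max_{b\neq a} t_b$, would recover the paper's constant), but either suffices for the $O(\cdot)$ claim.
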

\begin{proof}
By Theorem \ref{parallelspanningtrees}, the nonequivalent spanning trees of $G$ are formed by (1) assigning nonequivalent spanning trees to some representative $R_i$, (2) assigning the rest of $E_i$ a length-($\vert E_i\vert - 1$) multiset of $R_i$'s nonequivalent near trees, and (3) assigning each remaining $E_j$ a length-$\vert E_j\vert$ multiset of $R_j$'s nonequivalent near trees. There are (1) $\ell$ ways of picking the representative $R_i$, (2) $\binom{\vert \textsf{ST}(R_i)/\text{Aut}_\text{or}(R_i)\vert + \vert E_i\vert - 2}{\vert E_i\vert - 1}$ multisets we can assign to the rest of $E_i$, and (3) $\prod_{j \neq i} \binom{\vert \textsf{ST}(R_i)/\text{Aut}_\text{or}(R_i)\vert + \vert E_i\vert - 1}{\vert E_i\vert}$ ways to assign multisets to the remaining $\ell - 1$ representatives. Hence,
    \begin{align*}
        \left\vert \frac{\textsf{ST}(G)}{\text{Aut}_\text{or}(G)}\right\vert = \sum_{i = 1}^\ell \left\vert \frac{\textsf{ST}(R_i)}{\text{Aut}_\text{or}(R_i)}\right\vert\binom{\vert \textsf{NT}(R_i)/\text{Aut}_\text{or}(R_i)\vert + \vert E_i\vert - 2}{\vert E_i\vert - 1}\prod_{j \neq i}\binom{\vert \textsf{NT}(R_i)/\text{Aut}_\text{or}(R_i)\vert + \vert E_i\vert - 1}{\vert E_i\vert}.
    \end{align*}
    Based on the form of $\vert \textsf{ST}(G)/\text{Aut}_\text{or}(G)\vert$, we observe that $\vert \textsf{ST}(G)/\text{Aut}_\text{or}(G)\vert \geq \sum_{i = 1}^\ell \vert \textsf{ST}(G_i)/\text{Aut}_\text{or}(G_i)\vert$ and $\vert \textsf{ST}(G)/\text{Aut}_\text{or}(G)\vert \geq \sum_{i = 1}^\ell \vert \textsf{NT}(G_i)/\text{Aut}_\text{or}(G_i)\vert$. Thus,
    \begin{align*}
        \sum_{i = 1}^\ell n_i\left[\left\vert \frac{\textsf{ST}(R_i)}{\text{Aut}_\text{or}(R_i)}\right\vert + \left\vert \frac{\textsf{NT}(R_i)}{\text{Aut}_\text{or}(R_i)}\right\vert\right] &= \sum_{i = 1}^\ell n_i\left\vert \frac{\textsf{ST}(R_i)}{\text{Aut}_\text{or}(R_i)}\right\vert + \sum_{i = 1}^\ell n_i \left\vert \frac{\textsf{NT}(R_i)}{\text{Aut}_\text{or}(R_i)}\right\vert\\[6pt]
        &\leq n\sum_{i = 1}^\ell \left\vert \frac{\textsf{ST}(R_i)}{\text{Aut}_\text{or}(R_i)}\right\vert + n\sum_{i = 1}^\ell  \left\vert \frac{\textsf{NT}(R_i)}{\text{Aut}_\text{or}(R_i)}\right\vert\\[6pt]
        &\leq 2n\left\vert \frac{\textsf{ST}(G)}{\text{Aut}_\text{or}(G)}\right\vert.
    \end{align*}
    We conclude that
    \[\sum_{i = 1}^\ell n_i\left[\left\vert \frac{\textsf{ST}(R_i)}{\text{Aut}_\text{or}(R_i)}\right\vert + \left\vert \frac{\textsf{NT}(R_i)}{\text{Aut}_\text{or}(R_i)}\right\vert\right] = O\left(n\left\vert \frac{\textsf{ST}(G)}{\text{Aut}_\text{or}(G)}\right\vert\right).\]
\end{proof}
\begin{theorem}
    \label{runtime}
    If $G$ is a series-parallel graph with $n$ vertices and $m$ edges, then \[B(G) = \Theta\left(n\left[\left\vert \frac{\emph{\textsf{ST}}(G)}{\emph{Aut}_\emph{or}(G)}\right\vert + \left\vert \frac{\emph{\textsf{NT}}(G)}{\emph{Aut}_\emph{or}(G)}\right\vert\right]\right) \qquad\text{and}\qquad T(G) = \Theta\left(n\left\vert \frac{\emph{\textsf{ST}}(G)}{\emph{Aut}_\emph{or}(G)}\right\vert\right).\]
\end{theorem}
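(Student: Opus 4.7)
The plan is to induct on the number of nodes in the decomposition tree of $G$, proving the bounds for $B_S$, $B_P$, $T_S$, and $T_P$ simultaneously. The base case is when $G$ is a single edge, in which $B(G) = T(G) = \Theta(1)$ and the right-hand sides reduce to $\Theta(1)$ as well since $|\textsf{ST}(G)/\text{Aut}_\text{or}(G)| = 1$ and $|\textsf{NT}(G)/\text{Aut}_\text{or}(G)|$ is a constant. For the inductive step, I would handle the serial case for $B_S$ first: apply the induction hypothesis to each $B_P(G_i)$ appearing in recurrence (1), yielding a sum $\Theta\big(\sum_{i=1}^k n_i[|\textsf{ST}(G_i)/\text{Aut}_\text{or}(G_i)| + |\textsf{NT}(G_i)/\text{Aut}_\text{or}(G_i)|]\big)$. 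Lemmas \ref{seriallemma1} and \ref{seriallemma2} then absorb this sum into the additive $\Theta\big(n[|\textsf{ST}(G)/\text{Aut}_\text{or}(G)| + |\textsf{NT}(G)/\text{Aut}_\text{or}(G)|]\big)$ term already present in the recurrence, giving the desired upper bound for $B_S$.

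The parallel case for $B_P$ proceeds analogously. By the induction hypothesis applied to each $B_S(R_i)$ in recurrence (2), the recursive cost is $\Theta\big(\sum_{i=1}^\ell n_i[|\textsf{ST}(R_i)/\text{Aut}_\text{or}(R_i)| + |\textsf{NT}(R_i)/\text{Aut}_\text{or}(R_i)|]\big)$, which Lemma \ref{parallellemma} collapses to $O\big(n|\textsf{ST}(G)/\text{Aut}_\text{or}(G)|\big)$; this is dominated by the additive term in the recurrence. The argument for $T_S$ uses only Lemma \ref{seriallemma1} in exactly the same manner. For $T_P$, the key observation is that recurrence (4) recursively invokes $B_S$, not $T_S$, on each $R_i$; so I would use the already-established bound on $B_S(R_i)$ and apply Lemma \ref{parallellemma} one more time to confirm that the recursive cost is $O\big(n|\textsf{ST}(G)/\text{Aut}_\text{or}(G)|\big)$, matching the additive term. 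To conclude that each upper bound is in fact tight, I would argue that any correct algorithm must explicitly write down each representative spanning tree (with its $n-1$ edges) and near tree (with its $n-2$ edges), giving matching $\Omega$-bounds.

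The main obstacle, already dispatched by Lemmas \ref{seriallemma1}, \ref{seriallemma2}, and \ref{parallellemma}, is precisely showing that the sum of recursive contributions never exceeds the size of the output at the current level. Once those lemmas are available, the induction closes mechanically in a few lines for each of the four cases; the technical heart of the theorem lies in the earlier lemmas, where the multiplicative (serial spanning) and additive (serial near) counting identities from Theorems \ref{serialtrees} and \ref{serialneartrees}, as well as the binomial-multiset formulas implicit in Theorems \ref{parallelneartrees} and \ref{parallelspanningtrees}, are what make the telescoping work.
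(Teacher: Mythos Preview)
Your proposal is correct and follows essentially the same approach as the paper: strong induction over the decomposition-tree structure, with the recursive sums in recurrences (1)--(4) absorbed into the additive output-size terms via Lemmas \ref{seriallemma1}, \ref{seriallemma2}, and \ref{parallellemma} exactly as you describe. The paper inducts on the depth $d$ of the decomposition tree rather than on the number of nodes, but this is immaterial; your identification of which lemma handles which of the four cases ($B_S$, $B_P$, $T_S$, $T_P$) matches the paper, and your explicit remark that the $\Omega$-bounds come from having to write down each output tree is precisely why the additive terms in the recurrences are $\Theta$ rather than $O$.
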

\begin{proof}
    To prove this result, we will apply strong induction over the depth $d$ of our series-parallel decomposition tree.
    \vspace{5pt}
    \begin{enumerate}
        \item \underline{Base Case}:\\[6pt]
        Let us consider the case where $d = 0$. Then, $G$ is a single edge with endpoints $s$ and $t$. In this case, \texttt{serial\_both\_trees} and \texttt{parallel\_both\_trees} return the single nonequivalent spanning tree of $G$ ($G$ itself) and near tree ($\overline{G}$) in constant time $\Theta(1)$. Since $G$ is a single edge, $n = 2$ and $\vert \textsf{ST}(G)/\text{Aut}_\text{or}(G)\vert = \vert \textsf{NT}(G)/\text{Aut}_\text{or}(G)\vert = 1$, so $\Theta\big(n\cdot [\vert \textsf{ST}(G)/\text{Aut}_\text{or}(G)\vert + \vert \textsf{NT}(G)/\text{Aut}_\text{or}(G)\vert] \big)$ also equals $\Theta(1)$. Hence,
        \[B(G) = \Theta\left(n\left[\left\vert \frac{\text{\textsf{ST}}(G)}{\text{Aut}_\text{or}(G)}\right\vert + \left\vert \frac{\text{\textsf{NT}}(G)}{\text{Aut}_\text{or}(G)}\right\vert\right]\right).\]
        By similar reasoning, $T(G) = \Theta\big(n\cdot \vert \textsf{ST}(G) / \text{Aut}_\text{or}(G)\vert\big)$. Thus, the claim holds when $d = 0$.\\
        \item \underline{Inductive Hypothesis}:\\[6pt]
        Let us assume that the claim is true for all $d \leq h$, for some $h \in \mathbb{N}$. That is, for all series-parallel graphs $G$ with depth  at most $h$, we have that $B(G) = \Theta\big(n\cdot [\vert \textsf{ST}(G)/\text{Aut}_\text{or}(G)\vert + \vert \textsf{NT}(G)/\text{Aut}_\text{or}(G)\vert]\big)$ and $T(G) = \Theta\big(n\cdot \vert \textsf{ST}(G)/\text{Aut}_\text{or}(G)\vert\big)$. We will show that the claim is then true for depth $d = h + 1$.\\

        \item \underline{Induction Step}:\\[6pt]
        Let $G$ be any series-parallel graph with depth $d = h + 1$. Then, the child subtrees of $G$ (which are the decomposition trees of each $G_i$) each have depth at most $h$. If $G$ is serial, then we have that
        \begin{align*}
            B_S(G) &= \sum_{i = 1}^k B(G_i) + \Theta\left(n\left[\left\vert \frac{\textsf{ST}(G)}{\text{Aut}_\text{or}(G)}\right\vert + \left\vert \frac{\textsf{NT}(G)}{\text{Aut}_\text{or}(G)}\right\vert\right]\right)\\[6pt]
            &= \Theta\left(n\left[\left\vert \frac{\textsf{ST}(G)}{\text{Aut}_\text{or}(G)}\right\vert + \left\vert \frac{\textsf{NT}(G)}{\text{Aut}_\text{or}(G)}\right\vert\right]\right) + \sum_{i = 1}^k \Theta\left(n_i\left[\left\vert \frac{\textsf{ST}(G_i)}{\text{Aut}_\text{or}(G_i)}\right\vert + \left\vert \frac{\textsf{NT}(G_i)}{\text{Aut}_\text{or}(G_i)}\right\vert\right]\right)\\[6pt]
            &= \Theta\left(n\left[\left\vert \frac{\textsf{ST}(G)}{\text{Aut}_\text{or}(G)}\right\vert + \left\vert \frac{\textsf{NT}(G)}{\text{Aut}_\text{or}(G)}\right\vert\right] + \sum_{i = 1}^k n_i\left[\left\vert \frac{\textsf{ST}(G_i)}{\text{Aut}_\text{or}(G_i)}\right\vert + \left\vert \frac{\textsf{NT}(G_i)}{\text{Aut}_\text{or}(G_i)}\right\vert\right]\right),
        \end{align*}
        where each $G_i$ has $n_i$ vertices. By Lemmas \ref{seriallemma1} and \ref{seriallemma2}, $\sum_{i = 1}^k n_i\cdot [\vert \textsf{ST}(G_i)/\text{Aut}_\text{or}(G_i)\vert] = O\big(n\cdot\vert \textsf{ST}(G)/\text{Aut}_\text{or}(G)\vert\big)$ and $\sum_{i = 1}^k n_i\cdot [\vert \textsf{NT}(G_i)/\text{Aut}_\text{or}(G_i)\vert] = O\big(n\cdot\vert \textsf{NT}(G)/\text{Aut}_\text{or}(G)\vert\big)$. Therefore, the summation term is at most the same order as the leading term, so we deduce that
        \[B_S(G) = \Theta\left(n\left\vert \frac{\textsf{ST}(G_i)}{\text{Aut}_\text{or}(G_i)}\right\vert\right).\]
        If $G$ is parallel, then we proceed similarly:
        \begin{align*}
            B_P(G) &= \sum_{i = 1}^\ell B(R_i) + \Theta\left(n\left[\left\vert \frac{\textsf{ST}(G)}{\text{Aut}_\text{or}(G)}\right\vert + \left\vert \frac{\textsf{NT}(G)}{\text{Aut}_\text{or}(G)}\right\vert\right]\right)\\[6pt]
            &= \Theta\left(n\left[\left\vert \frac{\textsf{ST}(G)}{\text{Aut}_\text{or}(G)}\right\vert + \left\vert \frac{\textsf{NT}(G)}{\text{Aut}_\text{or}(G)}\right\vert\right]\right) + \sum_{i = 1}^\ell \Theta\left(n_i\left[\left\vert \frac{\textsf{ST}(R_i)}{\text{Aut}_\text{or}(R_i)}\right\vert + \left\vert \frac{\textsf{NT}(R_i)}{\text{Aut}_\text{or}(R_i)}\right\vert\right]\right)\\[6pt]
            &= \Theta\left(n\left[\left\vert \frac{\textsf{ST}(G)}{\text{Aut}_\text{or}(G)}\right\vert + \left\vert \frac{\textsf{NT}(G)}{\text{Aut}_\text{or}(G)}\right\vert\right] + \sum_{i = 1}^\ell n_i\left[\left\vert \frac{\textsf{ST}(R_i)}{\text{Aut}_\text{or}(R_i)}\right\vert + \left\vert \frac{\textsf{NT}(R_i)}{\text{Aut}_\text{or}(R_i)}\right\vert\right]\right),
        \end{align*}
        By Lemma \ref{parallellemma}, the summation term has order at most $O\big(n\cdot\vert \textsf{ST}(G)/\text{Aut}_\text{or}(G)\vert\big)$, so
        \[B_P(G) =  \Theta\left(n\left[\left\vert \frac{\textsf{ST}(G)}{\text{Aut}_\text{or}(G)}\right\vert + \left\vert \frac{\textsf{NT}(G)}{\text{Aut}_\text{or}(G)}\right\vert\right] \right).\] Combining our asymptotic expressions for $B_S(G)$ and $B_P(G)$ gives us that
        \[B(G) = \Theta\left(n\left[\left\vert \frac{\textsf{ST}(G)}{\text{Aut}_\text{or}(G)}\right\vert + \left\vert \frac{\textsf{NT}(G)}{\text{Aut}_\text{or}(G)}\right\vert\right] \right),\] as desired.

        Let us now consider $T(G)$. If $G$ is serial, then we have that
        \begin{align*}
            T_S(G) = \sum_{i = 1}^k T(G_i) + \Theta\left(n\left\vert \frac{\textsf{ST}(G)}{\text{Aut}_\text{or}(G)}\right\vert\right)  &= \Theta\left(n\left\vert \frac{\textsf{ST}(G)}{\text{Aut}_\text{or}(G)}\right\vert\right) + \sum_{i = 1}^\ell \Theta\left(n_i\left\vert \frac{\textsf{ST}(R_i)}{\text{Aut}_\text{or}(R_i)}\right\vert\right)\\[6pt]
             &= \Theta\left(n\left\vert \frac{\textsf{ST}(G)}{\text{Aut}_\text{or}(G)}\right\vert + \sum_{i = 1}^\ell n_i\left\vert \frac{\textsf{ST}(R_i)}{\text{Aut}_\text{or}(R_i)}\right\vert\right).
        \end{align*}
        By Lemma \ref{parallellemma}, the summation term once again has order at most $O\big(n\cdot\vert\textsf{ST}(G)/\text{Aut}_\text{or}(G)\vert\big)$, so as we might expect,
        \[T_S(G) = \Theta\left(n\left\vert \frac{\textsf{ST}(G)}{\text{Aut}_\text{or}(G)}\right\vert\right).\]
        Finally, if $G$ is parallel,
        then
         \begin{align*}
            T_P(G) &= \sum_{i = 1}^\ell B_S(R_i) + \Theta\left(n\left\vert \frac{\textsf{ST}(G)}{\text{Aut}_\text{or}(G)}\right\vert\right).\\[6pt]
             &= \Theta\left(n\left\vert \frac{\textsf{ST}(G)}{\text{Aut}_\text{or}(G)}\right\vert\right) + \sum_{i = 1}^\ell \Theta\left(n_i\left[\left\vert \frac{\textsf{ST}(R_i)}{\text{Aut}_\text{or} (R_i)}\right\vert + \left\vert \frac{\textsf{NT}(R_i)}{\text{Aut}_\text{or} (R_i)}\right\vert\right]\right)\\[6pt]
             &= \Theta\left(n\left\vert \frac{\textsf{ST}(G)}{\text{Aut}_\text{or}(G)}\right\vert + \sum_{i = 1}^\ell n_i\left[\left\vert \frac{\textsf{ST}(R_i)}{\text{Aut}_\text{or} (R_i)}\right\vert + \left\vert \frac{\textsf{NT}(R_i)}{\text{Aut}_\text{or} (R_i)}\right\vert\right]\right).
        \end{align*}
        By Lemma \ref{parallellemma}, the summation term has order at most $O\big(n\cdot \vert \textsf{ST}(G)/\text{Aut}_\text{or}(G)\vert\big)$, so $T_P(G) = \Theta\big(n\cdot \vert \textsf{ST}(G)/\text{Aut}_\text{or}(G)\vert\big)$. Using our asymptotic expressions for $T_S(G)$ and $T_P(G)$, we conclude that
        \[T(G) = \Theta\left(n\left\vert \frac{\textsf{ST}(G)}{\text{Aut}_\text{or}(G)}\right\vert\right),\]
        which is the desired result.
    \end{enumerate}
    By induction, we conclude that \[B(G) = \Theta\left(n\left[\left\vert \frac{\text{\textsf{ST}}(G)}{\text{Aut}_\text{or}(G)}\right\vert + \left\vert \frac{\text{\textsf{NT}}(G)}{\text{Aut}_\text{or}(G)}\right\vert\right]\right) \qquad\text{and}\qquad T(G) = \Theta\left(n\left\vert \frac{\textsf{ST}(G)}{\text{Aut}_\text{or}(G)}\right\vert\right).\]
\end{proof}
By Theorem \ref{runtime}, Algorithms \ref{alg:serialspanning} and \ref{alg:parallelspanning} run with output-linear time $O(n \cdot \vert \textsf{ST}(G)/\text{Aut}_\text{or}(G)\vert\big)$, which is optimal.

\section{Semioriented Graphs}
\label{sec:semioriented}
For the most part, the problem of generating the nonequivalent spanning trees of a semioriented series-parallel graph reduces to the oriented case. However, since semioriented graphs have terminals but neither distinguished sources nor sinks, we now have to consider automorphisms that \emph{exchange} the terminals $s$ and $t$. Let $G$ be a semioriented series-parallel graph with terminals $\{s, t\}$, and let $\text{Aut}_\text{semi}(G)$ denote the subgroup of $G$'s automorphisms that either fix or exchange $s$ and $t$:
\begin{definition}
Given a semioriented series-parallel graph $(G, \{s, t\})$, we write that $\emph{Aut}_\emph{semi}(G, \linebreak \{s, t\}) = \big\{\sigma \in \emph{Aut}(G) : \{s, t\} = \{\sigma(s), \sigma(t)\}\big\}$. For brevity, we simply write $\emph{Aut}_\emph{semi}(G)$.
\end{definition}
The definition for $\text{Aut}_\text{or}(G)$ on semioriented graphs is the same as before. Naturally, we define our equivalence relation on $\textsf{ST}(G)$ as follows.
\begin{definition}
    Let $T_1$ and $T_2$ be spanning trees of the semioriented graph $G$. Then, $T_1$ and $T_2$ are equivalent if and only if $T_1 = \sigma T_2$ for some $\sigma \in \emph{Aut}_\emph{semi}(G)$.
\end{definition}
For the remainder of this section, it will be helpful to think of the different principal subgraphs as having an ``orientation.\rq\rq\ We define a ``forward\rq\rq\ and ``backward\rq\rq\ version of each of our subgraphs, as follows:
\begin{definition}
    For each $G_i$, where $1 \leq i \leq k$, let $\overrightarrow{G_i}$ be the oriented series-parallel graph $(G_i, s, t)$ and $\overleftarrow{G_i}$ be $(G_i, t, s)$.
\end{definition}
The theorems below prove helpful properties about the structure of $\text{Aut}_\text{semi}(G)$ for semioriented serial and parallel superedges, by letting us rewrite $\text{Aut}_\text{semi}(G)$ in terms of $\text{Aut}_\text{or}(G)$.
\begin{theorem}
    \label{semiserialaut}
    Let $G$ be a semioriented serial superedge with terminals $s = s_1, s_2, \ldots, s_{k + 1} = t$, such that each $G_i$ is non-serial. Then, if $\overrightarrow{G_i} = r_i\overleftarrow{G_{k - i + 1}}$ (where $r_i : V(G_i) \rightarrow V(G_{k - i + 1})$) for $1 \leq i \leq k$, then $\emph{Aut}_\emph{semi}(G) = \{\sigma, r\sigma : \sigma \in \emph{Aut}_\emph{or}(G)\}$, where $r = r_1r_2\cdots r_k.$ Otherwise, $\emph{Aut}_\emph{semi}(G) = \emph{Aut}_\emph{or}(G)$.
\end{theorem}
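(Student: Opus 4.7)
The plan is to reduce this to Theorem \ref{serialaut} by decomposing $\text{Aut}_\text{semi}(G)$ as a disjoint union of two cosets: the stabilizer of $\{s,t\}$ pointwise, which is exactly $\text{Aut}_\text{or}(G)$, and the set of automorphisms that swap $s$ and $t$. Every $\sigma \in \text{Aut}_\text{semi}(G)$ either fixes the ordered pair $(s,t)$ or exchanges it, so the structural claim amounts to identifying the second coset.

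First I would replay the path-ordering argument from Theorem \ref{serialaut} in the swap setting. The intermediate terminals $s_1,\ldots,s_{k+1}$ are characterised as the unique vertices lying on every $s$-to-$t$ path, in a fixed order along any such path. If $\sigma \in \text{Aut}_\text{semi}(G)$ satisfies $\sigma(s_1) = s_{k+1}$ and $\sigma(s_{k+1}) = s_1$, then the bijective image of the ordered terminal sequence is itself an ordered terminal sequence of $G$ between $s_{k+1}$ and $s_1$. Since the terminals appear in the reverse order along that reversed path, we must have $\sigma(s_i) = s_{k+2-i}$ for every $i$. This forces $\sigma$ to map the principal subgraph $G_i$ onto $G_{k-i+1}$, sending source $s_i$ to sink $s_{k-i+2} = s_{(k-i+1)+1}$ and sink $s_{i+1}$ to source $s_{k-i+1}$. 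In other words, the restriction $\sigma|_{G_i}$ is an oriented isomorphism from $\overrightarrow{G_i}$ to $\overleftarrow{G_{k-i+1}}$.

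Next I would use this restriction property in both directions. If no such family of isomorphisms exists for some index, then no swap automorphism can exist and $\text{Aut}_\text{semi}(G) = \text{Aut}_\text{or}(G)$. Otherwise, fix the given witnesses $r_1,\ldots,r_k$ and define $r = r_1 r_2 \cdots r_k$ by acting by $r_i$ on the vertices of $G_i$ (consistent at the shared terminals because each $r_i$ sends $s_i \mapsto s_{k-i+2}$ and $s_{i+1} \mapsto s_{k-i+1}$, which agrees with $r_{i-1}$'s image on $s_i$). By construction $r \in \text{Aut}_\text{semi}(G)$ and $r$ swaps $s$ with $t$. For any other swap automorphism $\sigma$, the composition $r^{-1}\sigma$ fixes $s$ and $t$, hence lies in $\text{Aut}_\text{or}(G)$, showing $\sigma \in r\,\text{Aut}_\text{or}(G)$. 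Combined with the trivial inclusion $r\,\text{Aut}_\text{or}(G) \subseteq \text{Aut}_\text{semi}(G)$, this yields the desired description $\text{Aut}_\text{semi}(G) = \text{Aut}_\text{or}(G) \cup r\,\text{Aut}_\text{or}(G)$.

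The main obstacle I anticipate is verifying that the composite $r = r_1 r_2 \cdots r_k$ is a well-defined automorphism of the entire graph $G$, rather than just a piecewise bijection on vertex sets. This requires checking that the individual $r_i$ agree on the shared terminal vertices $s_2,\ldots,s_k$ and that gluing them together respects adjacency across the seams; both facts follow from the source/sink data imposed on each $r_i$, but writing this carefully is the one piece that isn't a direct citation of Theorem \ref{serialaut}. Everything else is essentially coset arithmetic once the path-ordering characterization of terminals is in hand.
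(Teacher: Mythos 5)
Your proposal is correct and follows essentially the same route as the paper: both arguments replay the path-ordering characterization of the terminals from Theorem~\ref{serialaut} to show a swap automorphism must reverse the sequence $s_1,\ldots,s_{k+1}$ and hence restrict to oriented isomorphisms $\overrightarrow{G_i}\to\overleftarrow{G_{k-i+1}}$, then use coset arithmetic with the involution $r$ (the paper writes $\tau = r(r\tau)$, you write $\sigma = r(r^{-1}\sigma)$, which is the same thing since $r$ has order two) to split $\text{Aut}_\text{semi}(G)$ into $\text{Aut}_\text{or}(G)\cup r\,\text{Aut}_\text{or}(G)$. The only place you go beyond the paper is in flagging and sketching why the glued map $r = r_1 r_2\cdots r_k$ is well defined and respects adjacency across the shared terminals --- a check the paper states as immediate --- which is a sound thing to make explicit.
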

\begin{proof}
    Let us assume that $\overrightarrow{G_i} = r_i\overleftarrow{G_{k - i + 1}}$ for $1 \leq i \leq k,$ and let $\tau \in \text{Aut}_\text{semi}(G)$ such that $\tau(s) = t$ and $\tau(t) = s$. Then, the product $r = r_1r_2\cdots r_k \in \text{Aut}_\text{semi}(G)$ and has order 2, so $\tau = r^2\tau = r(r\tau).$ But $r\tau \in \text{Aut}_\text{or}(G)$, since $(r\tau)(s) = r(\tau(s)) =  r(t) = s$ and $(r\tau)(t) = r(\tau(t)) =r(s) = t.$ As a result, if $\tau$ exchanges $s$ and $t$, then $\tau = r\sigma$ for some $\sigma \in \text{Aut}_\text{or}(G)$.
    
    If $\tau$ fixes $s$ and $t$, on the other hand, then clearly $\tau \in \text{Aut}_\text{or}(G)$. We deduce that $\text{Aut}_\text{semi}(G) \subseteq \{\sigma, r\sigma : \sigma \in \text{Aut}_\text{or}(G)\}$. However, each function $\sigma, r\sigma \in \text{Aut}_\text{or}(G)$ is also in $\text{Aut}_\text{semi}(G)$. It follows that $\text{Aut}_\text{semi}(G) = \{\sigma, r\sigma : \sigma \in \text{Aut}_\text{or}(G)\}.$
    
    The remainder of the proof, where we show that $\text{Aut}_\text{semi}(G) = \text{Aut}_\text{or}(G)$ if no such $r$ exists, borrows elements from Theorem \ref{serialaut}. In particular, let us assume that there exists no function $r = r_1r_2\cdots r_k$ such that each $\overrightarrow{G_i} = r_i\overleftarrow{G_{k - i + 1}}$.
    
    The terminals $s_1, s_2, \ldots, s_{k + 1}$ are the unique vertices of $G$ that are included in every path $P$ from $s_1$ to $s_{k + 1}$. Moreover, $P$ always visits these vertices in this order: $s_1, s_2, \ldots, s_{k + 1}$. Now, let $\sigma \in \text{Aut}_\text{semi}(G)$. Due to the bijective properties of automorphisms, $\sigma(s_1), \sigma(s_2), \ldots, \sigma(s_{k + 1})$ are the unique vertices of $G$ included in every path $P^\prime$ from $\sigma(s_1)$ to $\sigma(s_{k + 1})$, and $P^\prime$ visits these vertices in this order. Since $\sigma$ preserves the path order of our terminal vertices, $\sigma(s_i)$ must be the $i$th terminal along the path from $\sigma(s_1)$ to $\sigma(s_{k + 1})$.
    
    If $\sigma(s) = t$ and $\sigma(t) = s$, then $\sigma(s_i) = s_{k - i + 1}$. It is not hard to see, then, that $\sigma G_i = G_{k - i + 1}$. Therefore, $\sigma$ induces an isomorphism from $G_i$ to $G_{k - i + 1}$, so $\sigma = \sigma_1\sigma_2\ldots\sigma_k$ where $\sigma_i : V(G_i) \rightarrow V(G_{k - i + 1})$. But there is no function $r = r_1r_2\ldots r_k$ such that $\overrightarrow{G_i} = r_i\overleftarrow{G_{k - i + 1}}$, so no such $\sigma$ can exist. As a result, $\sigma(s) = s$ and $\sigma(t) = t$.
    
    As a result, every $\sigma \in \text{Aut}_\text{semi}(G)$ has $\sigma(s) = s$ and $\sigma(t) = t$, so $\text{Aut}_\text{semi}(G) \subseteq \text{Aut}_\text{or}(G).$ But $\text{Aut}_\text{or}(G)$ is clearly a subgroup of $\text{Aut}_\text{semi}(G)$, so $\text{Aut}_\text{semi}(G) = \text{Aut}_\text{or}(G)$, as desired.
\end{proof}
\begin{theorem}
    \label{semiparallelaut}
    Let $G$ be a semioriented parallel superedge with terminals $s$ and $t$, such that each $G_i$ is non-parallel. Then, if there exists a permutation $\tau \in S_k$ such that $\overrightarrow{G_i} = r_i\overleftarrow{G_{\tau(i)}}$ (where $r_i : V(G_i) \rightarrow V\big(G_{\tau(i)}\big)$) for $1 \leq i \leq k$, then $\emph{Aut}_\emph{semi}(G) = \{\sigma, r\sigma : \sigma \in \emph{Aut}_\emph{or}(G)\}$, where $r = r_1r_2\cdots r_k.$ Otherwise, $\emph{Aut}_\emph{semi}(G) = \emph{Aut}_\emph{or}(G)$.
\end{theorem}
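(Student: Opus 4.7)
The plan is to adapt the proof of Theorem \ref{semiserialaut} to the parallel setting, leaning on the characterization of principal subgraphs established in the proof of Theorem \ref{parallelaut}: two vertices $u, v \in V(G)$ belong to the same $G_i$ if and only if there is a path from $u$ to $v$ that avoids $s$ and $t$ as interior nodes. I would split $\text{Aut}_\text{semi}(G)$ into those automorphisms that fix each of $s$ and $t$ (which by definition are exactly $\text{Aut}_\text{or}(G)$) and those that swap $s$ with $t$; all of the new work concerns the latter.

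First I would handle the case where the required permutation $\tau \in S_k$ and the orientation-reversing isomorphisms $r_i$ do exist. Here I would verify that $r := r_1 r_2 \cdots r_k$ assembles into a well-defined automorphism of $G$: the $r_i$ are all consistent on the shared vertices $s, t$ (each sends $s \mapsto t$ and $t \mapsto s$), the interior vertex sets of the distinct $G_i$ are pairwise disjoint, and each $r_i$ sends the edges of $G_i$ bijectively to the edges of $G_{\tau(i)}$, so no ambiguity arises on $V(G)$ or $E(G)$. Thus $r \in \text{Aut}_\text{semi}(G)$ and $r$ swaps $s$ with $t$. Given any other swap-automorphism $\pi \in \text{Aut}_\text{semi}(G)$, the composition $r^{-1}\pi$ then fixes both terminals and so lies in $\text{Aut}_\text{or}(G)$, giving $\pi = r(r^{-1}\pi)$ in the desired form $r\sigma$. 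Combined with the trivial observation that every $\sigma \in \text{Aut}_\text{or}(G)$ belongs to $\text{Aut}_\text{semi}(G)$, this yields $\text{Aut}_\text{semi}(G) = \{\sigma, r\sigma : \sigma \in \text{Aut}_\text{or}(G)\}$.

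For the converse, I would suppose no such $\tau$ and $r_i$ exist and argue by contradiction that no swap-automorphism $\pi \in \text{Aut}_\text{semi}(G)$ can exist either. Because $\pi$ preserves the set $\{s, t\}$, the path-characterization above is preserved under $\pi$, so $\pi$ must permute the principal subgraphs, inducing some $\tau \in S_k$ with $\pi G_i = G_{\tau(i)}$. Since $\pi$ additionally sends the source of $\overrightarrow{G_i}$ to the sink of $G_{\tau(i)}$ and vice versa, the restriction $r_i := \pi|_{V(G_i)}$ is an orientation-reversing isomorphism realizing $\overrightarrow{G_i} = r_i \overleftarrow{G_{\tau(i)}}$. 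But this is exactly the data whose existence we excluded, a contradiction. Hence every element of $\text{Aut}_\text{semi}(G)$ fixes each of $s$ and $t$, giving $\text{Aut}_\text{semi}(G) = \text{Aut}_\text{or}(G)$.

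I expect the main obstacle to be this converse direction: one must carefully justify that a single swap-automorphism already gives rise to a global permutation $\tau$ of the principal subgraphs together with compatible orientation-reversing isomorphisms on each one, rather than, say, some piecewise combination that does not realize the hypothesis cleanly. The first half of the proof, once $r$ is built as an automorphism, is essentially identical in structure to Theorem \ref{semiserialaut} and should be routine.
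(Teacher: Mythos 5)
Your proof is correct, and in fact it is cleaner than the paper's in two places. For the forward direction, the paper also decomposes a swap automorphism $\rho$ as $r\sigma$ with $\sigma \in \text{Aut}_\text{or}(G)$, but it tries to build $\sigma$ locally by constructing $\sigma_i$ on each principal subgraph and assembling them, running into some slippery bookkeeping along the way (e.g.\ it implicitly needs the permutation $j$ induced by $\rho$ to interact correctly with $\tau$). You instead verify directly that $r = r_1 r_2 \cdots r_k$ is a well-defined element of $\text{Aut}_\text{semi}(G)$ (the $r_i$ agree on $s,t$ and act on pairwise disjoint interiors), after which the factorization $\pi = r(r^{-1}\pi)$ with $r^{-1}\pi \in \text{Aut}_\text{or}(G)$ is immediate and avoids all of the algebraic juggling. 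Note also that by using $r^{-1}$ rather than $r$ you correctly avoid assuming $r$ has order $2$, which need not hold here since $\tau$ can be an arbitrary permutation in $S_k$ (contrast the serial case, where the paper does rely on $r^2 = e$). For the converse, the paper simply states ``it is a matter of formality,'' whereas you spell it out: a swap automorphism $\pi$ preserves the path-characterization of principal subgraphs, hence permutes them via some $\tau \in S_k$, and its restrictions $\pi|_{V(G_i)}$ realize exactly the $r_i$ data assumed not to exist. That explicit argument is a worthwhile addition and is in fact what the obstacle you flagged requires; the key observation that makes it go through is precisely that $\pi$'s restrictions inherit both the orientation reversal and the subgraph permutation simultaneously, so the global automorphism does decompose into the required local isomorphisms rather than some incompatible patchwork.
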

\begin{proof}
    Let us assume that there exists a permutation $\tau \in S_k$ such that $\overrightarrow{G_i} = r_i\overleftarrow{G_{\tau(i)}}$ for $1 \leq i \leq k$. If $\rho \in \text{Aut}_\text{semi}(G)$ and $\rho$ fixes $s$ and $t$, then clearly $\rho \in \text{Aut}_\text{or}(G)$. If, on the other hand, $\rho$ exchanges $s$ and $t$, then by similar reasoning to Theorem \ref{serialaut}, $\rho G_i = G_{j(i)}$ where $1 \leq j(i) \leq k$. But since each $G_i$ is sent to a unique $G_{j(i)}$ under $\rho$, we realize that $j(i)$ permutes the indices $i = 1, 2 \ldots, k$. That is, $j \in S_k$.
    
    Since $\rho$ exchanges $s$ and $t$, we also have that $\rho \overrightarrow{G_i} = \overleftarrow{G_{j(i)}}$. But since $\overrightarrow{G_i}$ equals $r_i\overleftarrow{G_{\tau(i)}}$, we find that $\overleftarrow{G_{j(i)}} = \rho r_i\overleftarrow{G_{\tau(i)}}$. It is not hard to see, then, that $\sigma_i = \rho r_i \in \text{Aut}_\text{or}(G)$ for $1 \leq i \leq k$. Therefore, $\overleftarrow{G_{j(i)}} = \sigma_i\overleftarrow{G_{\tau(i)}}$, so substituting on the left-hand side gives us that $\rho \overrightarrow{G_i} = \sigma_i \overleftarrow{G_{\tau(i)}}$. But $\overleftarrow{G_{\tau(i)}} = r_i\overrightarrow{G_i}$, so $\rho \overrightarrow{G_i} = (\sigma_i r_i)\overleftarrow{G_{\tau(i)}}$, and \[\rho \overrightarrow{G_i} = (r_i\sigma_i)\overleftarrow{G_{\tau(i)}}\] for $1 \leq i \leq k$. Each of the $r_i$'s and $\sigma_i$'s are disjoint, so we can write that $\rho = r\sigma$, where $r = r_1r_2\cdots r_k$ and $\sigma = \sigma_1\sigma_2\cdots \sigma_k$. Hence, if $\rho$ exchanges $s$ and $t$, then $\rho = r\sigma$ for some $\sigma \in \text{Aut}_\text{semi}(G)$.
    
    Thus, we find that every automorphism $\rho \in \text{Aut}_\text{semi}(G)$ is of the form $\sigma$ or $r\sigma$, where $\sigma \in \text{Aut}_\text{or}(G)$, so 
 $\text{Aut}_\text{semi}(G) \subseteq \{\sigma, r\sigma : \sigma \in \text{Aut}_\text{or}(G)\}$. It is straightforward to verify the reverse containment, that $\{\sigma, r\sigma : \sigma \in \text{Aut}_\text{or}(G)\} \subseteq \text{Aut}_\text{semi}(G)$. We conclude that
 \[\text{Aut}_\text{semi}(G) = \{\sigma, r\sigma : \sigma \in \text{Aut}_\text{or}(G)\}.\]
 It is a matter of formality to show that $\text{Aut}_\text{semi}(G) = \text{Aut}_\text{or}(G)$ if no such permutation $\tau \in S_k$ exists. 
\end{proof}

\subsection{Serial Lexicographical Constraint}
\label{sec:seriallexicographicalconstraint}
Using Theorem \ref{semiserialaut}, let us now investigate the problem of generating the nonequivalent spanning trees of a serial superedge $G$. If $\text{Aut}_\text{semi}(G) = \text{Aut}_\text{or}(G)$, then it suffices to use Algorithm \ref{alg:serialspanning} for oriented serial superedges. On the other hand, if $\text{Aut}_\text{semi}(G)$ allows us to exchange $s$ and $t$, then we can make optimizations based on the symmetry of $G_i$ and $G_{k - i + 1}$. In particular, let us assume that the principal subgraph $(G_i, s_i, t_i)$ of $G$ has the nonequivalent spanning trees $\{T_{i,1}, T_{i,2}, … , T_{i,j(i)}\}$ up to $\text{Aut}_\text{or}(G)$, for $1 \leq i \leq \lceil k/2\rceil$. Then, its ``mirror\rq\rq\  subgraph $(G_{k - i + 1}, s_{k - 1 + 1}, t_{k - i + 1})$ has the nonequivalent trees $\{r_iT_{i, 1}, r_iT_{i, 2}, \ldots, r_iT_{i, j(i)}\}$. Therefore, we only need to compute the nonequivalent trees once for $G_i$ and apply the isomorphism $r_i$ to find them for $G_{k - i + 1}$, instead of computing them separately for both subgraphs. For simplicity, let us assume that we index the nonequivalent spanning trees of $G_i$ and $G_{k - i + 1}$ the same way, so that $T_{k - i + 1, x} = r_iT_{i, x}$ for $1 \leq x \leq j(i)$.

If we simply compute the union of the nonequivalent spanning trees on each $(G_i, s_i, t_i)$---like Algorithm \ref{alg:serialspanning}---then we will produce several spanning trees of $G$ which equivalent under $\text{Aut}_\text{semi}(G)$. In particular, the tree \[T = \big(T_{1,a}, T_{2,b}, T_{3, c},\ldots, T_{k -2, c^\prime}, {T_{k - 1, b^\prime}}, T_{k, a^\prime}\big)\] is equivalent to the tree $T^\prime = (T_{1, a^\prime}, T_{2, b^\prime}, T_{3, c^\prime},\ldots, T_{k - 2, c}, {T_{k - 1, b}}, T_{k,a})$ where we have reversed the order of the tree indices that we assign to each $G_i$ (i.e.\ tree $abc$ is equivalent to tree $cba$). As a result, we produce equivalent trees in pairs. To avoid this problem, we can generate our spanning trees under a lexicographical constraint: we require that the sequence of trees assigned to the left half of $G$---$T_{1, a}, T_{2, b}, \ldots, T_{\lfloor k/2\rfloor, c}$---is lexicographically greater than or equal to the sequence on the right half starting from the end: $T_{k, a^\prime}, T_{k - 1, b^\prime}, \ldots, T_{\lceil k/2\rceil, c^\prime}$. That is, we require that the sequence $a, b, \ldots, c$ is lexicographically greater than or equal to $a^\prime, b^\prime, \ldots, c^\prime$.

For example, if our tree is $\big(T_{1, 1}, T_{2, 3}, T_{3, 7}, T_{4, 4}\big)$, formed by picking tree 1 on $G_1$, tree 2 on $G_2$, tree 7 on $G_3$, and tree 4 on $G_4$, then its partner tree is $\big(T_{1, 4}, T_{2, 7}, T_{3, 3}, T_{4, 1}\big)$, where we pick tree 4 on $G_1$, tree 7 on $G_2$, tree 3 on $G_3$, and tree 1 on $G_4$. The first tree has tree sequence $(1, 3, 7, 4)$, so the reversed right half $(4, 7)$ is lexicographically higher than the left half $(1, 3)$. In the partner tree, though, the left half is $(4, 7)$, which is lexicographically higher than the reversed right half $(1, 3)$. As a result, we only generate the first partner, and we never generate a tree alongside its counterpart.

The following theorem formally proves the correctness of this lexicographical scheme.

\begin{theorem}
    \label{seriallex}
    Let $G$ be a semioriented serial superedge with terminals $s$ and $t$, such that each $G_i$ is non-parallel. Then, the nonequivalent spanning trees of $G$ can be formed by computing every union of the nonequivalent oriented spanning trees of each $(G_i, s_i, t_i)$ under our serial lexicographical constraint.
\end{theorem}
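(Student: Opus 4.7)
The plan is to reduce via Theorem \ref{semiserialaut}. If no mirror isomorphism $r = r_1 r_2 \cdots r_k$ with $\overrightarrow{G_i} = r_i \overleftarrow{G_{k-i+1}}$ exists, then Theorem \ref{semiserialaut} gives $\text{Aut}_\text{semi}(G) = \text{Aut}_\text{or}(G)$, and the result follows directly from Theorem \ref{serialtrees}; the lexicographical constraint is vacuous in this case, since no two $\text{Aut}_\text{or}(G)$-nonequivalent unions can be identified by a flip. The substantive case is when $r$ exists, so that $\text{Aut}_\text{semi}(G) = \{\sigma, r\sigma : \sigma \in \text{Aut}_\text{or}(G)\}$.

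In that case, I would first recast spanning trees of $G$ as index sequences $(x_1, \ldots, x_k)$, where $x_i$ labels the chosen nonequivalent oriented spanning tree $T_{i, x_i}$ of $G_i$. By Theorem \ref{serialtrees}, two such tuples are $\text{Aut}_\text{or}(G)$-equivalent iff their index sequences agree. I would then compute the action of $r$ on these tuples: using the indexing convention $T_{k-i+1, x} = r_i T_{i, x}$ fixed just before the theorem, applying $r$ sends the component $T_{j, x_j}$ on $G_j$ to $T_{k-j+1, x_j}$ on $G_{k-j+1}$, so $r$ simply reverses the index sequence, $(x_1, \ldots, x_k) \mapsto (x_k, \ldots, x_1)$. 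Combined with the componentwise $\text{Aut}_\text{or}(G)$-action, the full $\text{Aut}_\text{semi}(G)$-orbit of an index sequence $x$ is therefore $\{x, \overleftarrow{x}\}$, where $\overleftarrow{x}$ denotes the reversal.

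The final step is to verify that the lexicographical constraint picks exactly one representative from each such pair. For existence, lex trichotomy applied to the ``left half'' of $x$ against the ``reversed right half'' of $x$ shows that at least one of $x$ or $\overleftarrow{x}$ must satisfy the constraint, since swapping $x$ and $\overleftarrow{x}$ swaps these two sequences. For uniqueness, if both $x$ and $\overleftarrow{x}$ satisfy the constraint, then equality must hold in both comparisons, forcing $x = \overleftarrow{x}$ pointwise so the pair collapses to a single element anyway. The main obstacle will be bookkeeping the $r$-action on indexed trees cleanly: this hinges on the indexing convention for mirrored subgraphs and on checking that $r$ is an involution (so that the orbit really has size at most two). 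Once the reduction to reversal of index sequences is in place, the lex-constraint verification is a routine $\mathbb{Z}/2$-orbit selection argument.
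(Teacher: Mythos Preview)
Your proposal is correct and follows essentially the same route as the paper's proof: both invoke Theorem~\ref{semiserialaut} to split into the trivial case $\text{Aut}_\text{semi}(G) = \text{Aut}_\text{or}(G)$ and the case where $r$ exists, encode oriented-nonequivalent spanning trees as index tuples via Theorem~\ref{serialtrees}, compute that $r$ acts by reversing the tuple (using the convention $T_{k-i+1,x} = r_i T_{i,x}$), conclude that semioriented orbits are exactly the pairs $\{x, \overleftarrow{x}\}$, and then verify that the lexicographic constraint selects one representative per pair. Your existence/uniqueness breakdown for the lex step is slightly cleaner than the paper's phrasing, but the content is the same.
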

\begin{proof}
    Let each oriented principal subgraph $(G_i, s_i, t_i)$ of $G$ have the nonequivalent spanning trees $\{T_{i, 1}, T_{i, 2}, \ldots, T_{i,j(i)}\}$ under some arbitrary indexing of its $m_i$ spanning trees. By Theorem \ref{serialtrees}, we can form the set $\mathcal{O}$ of $G$'s nonequivalent \emph{oriented} spanning trees by computing every union \[\big(T_{1, a_1}, T_{2, a_2}, \ldots , T_{k, a_k}\big)\] of nonequivalent trees on each $(G_i, s_i, t_i)$, where $1 \leq a_i \leq m_i$ for each $i$.  Now, let $A = \big(T_{1, a_1}, T_{2, a_2}, \ldots, \linebreak T_{k, a_k}\big)$ and $B = \big(T_{1, b_1}, T_{2, b_2}, \ldots m T_{k, b_k}\big)$ be any two trees in $\mathcal{O}$ such that $A = \sigma B$ for some $\sigma \in \text{Aut}_\text{semi}(G)$. That is,\[\big(T_{1, a_1}, T_{2, a_2}, \ldots T_{k, a_k}\big) = \sigma\big(T_{1, b_1}, T_{2, b_2}, \ldots T_{k, b_k}\big).\] If $\sigma \in \text{Aut}_\text{or}(G)$, then $A$ and $B$ are oriented-equivalent. But by Theorem \ref{serialtrees}, $\mathcal{O}$ contains the oriented-nonequivalent spanning trees of $G$ without repeats. The trees $A$ and $B$ must therefore represent the same spanning tree in $\mathcal{O}$; that is, $a_i = b_i$ for $1 \leq i \leq k$.
    
    On the other hand, suppose that $\sigma \notin \text{Aut}_\text{or}(G)$. Now, let us assume that there is a third (but, as we shall see, not necessarily distinct) tree $C$ in $\mathcal{O}$ such that $A = \tau C$ for some $\tau \in \text{Aut}_\text{semi}(G)$. If $\tau \in \text{Aut}_\text{or}(G)$, then $A$ must equal $C$ by our prior reasoning. But if $\tau \in \text{Aut}_\text{semi}(G)$, then it is not hard to see that $B$ and $C$ are oriented-equivalent, so $B$ equals $C$ by the same token. Therefore, $B$ is the \emph{only} other tree in $\mathcal{O}$ that $A$ can be equivalent to.
    
    By Theorem \ref{semiserialaut}, $\sigma$ must equal $r\sigma^\prime$, where (1) $\sigma^\prime \in \text{Aut}_\text{or}(G)$, and (2) $\overrightarrow{G_i} = r_i\overleftarrow{G_{k - i + 1}}$ for $1 \leq i \leq k$, and (3) $r = r_1r_2 \ldots r_k$. As stated in Section \ref{sec:seriallexicographicalconstraint}, we can assume that the nonequivalent oriented spanning trees of $(G_i, s_i, t_i)$ and $(G_{k - i + 1}, s_{k - i + 1}, t_{k - i + 1})$ are indexed the same way, so $T_{k - i + 1, x} = r_iT_{i, x}$ for $1 \leq x \leq j(i)$. Since $A = \sigma B$, we have that $A = r\sigma^\prime B$, so $B = r({\sigma^\prime})^{-1}A$. But $r$ reverses the order of the $G_i$'s and $({\sigma^\prime})^{-1} \in \text{Aut}_\text{or}(G)$, so $B$ must be oriented-equivalent to the tree $\big(T_{1, a_k}, T_{2, a_{k - 1}}, \ldots, T_{k, a_1}\big)$. But this latter tree belongs to $\mathcal{O}$, which does not repeat any oriented-equivalent trees; thus, $B = \big(T_{1, a_k}, T_{2, a_{k - 1}}, \ldots, T_{k, a_1}\big)$.
    
    Clearly, then, only one of $A$ and $B$ can be included under our lexicographical constraint; if they both satisfy the left half being greater than or equal to the right half reversed, then $A = B$. Hence, our lexicographical constraint does not generate any equivalent (i.e.\ duplicate) spanning trees of $G$ up to $\text{Aut}_\text{semi}(G)$.
    
    However, every tree $T \in \textsf{ST}(G)$ is equivalent to some tree $T^\prime$ in $\mathcal{O}$. If $\text{Aut}_\text{semi}(G) = \text{Aut}_\text{or}(G)$, then $T^\prime$ is included up to the lexicographical constraint; if not, then either $T^\prime$ or $\sigma T^\prime$ (where $\sigma \in \text{Aut}_\text{semi}(G) - \text{Aut}_\text{or}(G)$) is included up to the constraint. As a result, every tree $T \in \textsf{ST}(G)$ is equivalent to some tree in the oriented union under our lexicographical constraint. This scheme does not repeat any equivalent trees, so it correctly generates all the nonequivalent spanning trees of $G$ up to $\text{Aut}_\text{semi}(G)$, as desired.
\end{proof}
\subsection{Parallel Lexicographical Constraint}
\label{sec:parallellexicographicalconstraint}
Once again, parallel superedges are a slightly more complex variation on serial superedges. To see why, let $T$ be a spanning tree of $G$. From Section \ref{sec:spanningtreesparallelsuperedges} (and Theorem \ref{parallelspanningtrees}), every oriented isomorphism class $E_i$ is either assigned (1) a length-$\vert E_i\vert$ multiset of $R_i$'s near trees or (2) a spanning tree on $R_i$ and a length-$\big(\vert E_i\vert - 1\big)$ multiset of $R_i$'s near trees on its remaining subgraphs. As a result, each class $E_i$ has a collection $\mathcal{M}$ of multisets that it could be assigned: let us index these multisets so that the spanning tree multisets (those where $R_i$ is assigned a spanning tree) are all numbered higher than the near tree multisets. As we know, every spanning tree on $G$ is formed by picking a multiset on each class $E_i$, but where precisely one class is assigned a spanning tree multiset.

By Theorem \ref{parallelspanningtrees}, we can form the nonequivalent oriented spanning trees of $G$ by simply taking every possible union of multisets, where exactly one of the multisets in our union has a spanning tree. Up to $\text{Aut}_\text{or}(G)$, we can encode each spanning trees of $G$ with the tuple of the multisets it is assigned; that is, \[T \equiv \big(M_{1, m(1)}, M_{2, m(2)}, \ldots, M_{\ell, m(\ell)}\big),\] where we assign each class $E_i$ its multiset with index $m(i)$.

As before, let us partition our subgraphs into oriented isomorphism classes $E_i$. But now, if we ever have that $R_i = \sigma R_j$ for some $\sigma \in \text{Aut}_\text{semi}(G) - \text{Aut}_\text{or}(G)$ and two class representatives $R_i$ and $R_j$, we say that $E_j = \overleftarrow{E_i}$ (or equivalently, $E_i = \overleftarrow{E_j}$): that is, $E_j$ is the \emph{reverse} of $E_i$. For clarity, we also write that $E_i = \overrightarrow{E_i}$. Note that $E_i$ and $E_j$ are not necessarily distinct---$\overrightarrow{E_i}$ could equal $\overleftarrow{E_i}$---since there might exist a class representative $R_i$ which is horizontally symmetric.

When we allow semioriented automorphisms that reverse the orientation of our subgraphs, the oriented scheme produces equivalent trees of $G$ in pairs. In particular, let us assume that the multisets of $\overrightarrow{E_i}$ and $\overleftarrow{E_i}$ are indexed the same way; then, the tree where we assign multiset $a_i$ to each $\overrightarrow{E_i}$ and $a_i^\prime$ to each $\overleftarrow{E_i}$ is the same as the tree where $\overrightarrow{E_i}$ gets multiset $a_i^\prime$ and $\overleftarrow{E_i}$ gets $a_i$.

To avoid generating duplicates, we once again impose a lexicographical constraint. 
We partition our classes into a set $A$ of forward subgraphs and a set $B$ of backward subgraphs such that for $1 \leq i \leq \ell$, either $\overrightarrow{E_i} \in A$ and $\overleftarrow{E_i} \in B$ or vice versa; we only allow $\overrightarrow{E_i}$ to be in both $A$ and $B$ if $\overrightarrow{E_i} = \overleftarrow{E_i}$. Our constraint requires that the sequence of multisets in $A$ is lexicographically greater than or equal to the sequence in $B$. As we prove in Theorem \ref{parallellex} below, this lexicographical scheme will correctly skip duplicates for reasons similar to the serial case.
\begin{theorem}
\label{parallellex}
Let $G$ be a semioriented parallel superedge with terminals $s$ and $t$, such that each $G_i$ is non-serial. Then, the nonequivalent spanning trees of $G$ can be formed by generating the nonequivalent oriented spanning trees of $(G, s, t)$ under our parallel lexicographical constraint.
\end{theorem}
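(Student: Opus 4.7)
The plan is to mirror the argument used for the serial case in Theorem \ref{seriallex}, but now with the roles played by the oriented isomorphism classes $E_i$ and their assigned multisets rather than by individual principal subgraphs. Let $\mathcal{O}$ denote the set of nonequivalent oriented spanning trees of $(G,s,t)$, which Theorem \ref{parallelspanningtrees} lets us encode as tuples $(M_{1,m(1)}, M_{2,m(2)}, \ldots, M_{\ell,m(\ell)})$, where exactly one entry is a spanning-tree multiset and the rest are near-tree multisets. I would begin by fixing, once and for all, a common indexing of the multisets of $\overrightarrow{E_i}$ and $\overleftarrow{E_i}$ via the oriented isomorphism $r_i$ that takes one to the other (so that assigning multiset $a$ to $\overrightarrow{E_i}$ corresponds, after applying $r_i$, to assigning multiset $a$ to $\overleftarrow{E_i}$). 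This is the parallel analogue of the tree re-indexing used in Section \ref{sec:seriallexicographicalconstraint}.

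Next, suppose $A, B \in \mathcal{O}$ satisfy $A = \sigma B$ for some $\sigma \in \text{Aut}_\text{semi}(G)$. By Theorem \ref{semiparallelaut}, either $\sigma \in \text{Aut}_\text{or}(G)$, in which case Theorem \ref{parallelspanningtrees} and the fact that $\mathcal{O}$ contains no oriented-duplicates force $A = B$ as tuples of multisets, or $\sigma = r\sigma'$ with $\sigma' \in \text{Aut}_\text{or}(G)$ and $r$ a fixed involution induced by some permutation $\tau \in S_k$ pairing $\overrightarrow{G_i}$ with $\overleftarrow{G_{\tau(i)}}$. In the latter case, $r$ descends to an involution on the index set of classes that swaps $\overrightarrow{E_i} \leftrightarrow \overleftarrow{E_i}$, and $B$ must be the unique tuple obtained from $A$ by swapping the multisets assigned to each paired $(\overrightarrow{E_i}, \overleftarrow{E_i})$ (with classes satisfying $\overrightarrow{E_i}=\overleftarrow{E_i}$ simply fixed). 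Hence every $\text{Aut}_\text{semi}(G)$-equivalence class in $\mathcal{O}$ contains at most two elements, and these are related by this canonical swap.

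I would then verify that the partition of classes into $A$ (forward) and $B$ (backward) together with the lexicographic comparison of the corresponding multiset sequences picks out exactly one element of each such pair. If $T$ and $T'$ are a swap-pair, then one has $A$-sequence lexicographically greater than $B$-sequence and the other has the reverse, so exactly one passes the constraint; if $T = T'$ (a ``self-reversal'' fixed tuple, which forces the $A$- and $B$-sequences to be equal), then $T$ trivially satisfies the nonstrict inequality and is kept once. Combining this with surjectivity -- that every spanning tree of $G$ is $\text{Aut}_\text{semi}(G)$-equivalent to some tree in $\mathcal{O}$, and hence to at least one tuple passing the constraint -- yields the claim.

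The main obstacle will be dealing cleanly with classes for which $\overrightarrow{E_i}=\overleftarrow{E_i}$. Here the involution from $r$ may act nontrivially \emph{within} a single class (permuting its own subgraphs) rather than swapping two distinct classes, and one must check that, under the chosen indexing via $r_i$, this internal action agrees with the identity on the multiset index (since multisets are symmetric in their entries). Once this is confirmed, the lexicographic bookkeeping collapses to the two-class swap picture used above, and the rest of the argument goes through exactly as in Theorem \ref{seriallex}.
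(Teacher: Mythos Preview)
Your proposal is correct and follows essentially the same route as the paper's proof: both reduce to showing that each $\text{Aut}_\text{semi}(G)$-class in $\mathcal{O}$ has size at most two, with the two members related by the swap $\overrightarrow{E_i}\leftrightarrow\overleftarrow{E_i}$ induced by $r$, so that the lexicographic comparison of the $A$- and $B$-sequences selects exactly one representative. Your explicit flagging of the $\overrightarrow{E_i}=\overleftarrow{E_i}$ case (where $r$ permutes subgraphs \emph{within} a class and hence fixes the multiset index) is a point the paper handles only implicitly by placing such classes in both $A$ and $B$.
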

\begin{proof}
Let us assume that we have partitioned our oriented isomorphism classes into sets $A$ and $B$ as described above, so that $A = \big\{\overrightarrow{E_{a(1)}}, \overrightarrow{E_{a(2)}}, \ldots, \overrightarrow{E_{a(p)}}\big\}$ and $B = \big\{\overleftarrow{E_{a(1)}}, \overleftarrow{E_{a(2)}}, \ldots, \overleftarrow{E_{a(p)}}\big\}$ for some $p \leq \ell$. In addition, let $\mathcal{O}$ be the set of $G$'s nonequivalent spanning trees up to $\text{Aut}_\text{or}(G)$, computed using the procedure in Section \ref{sec:spanningtreesparallelsuperedges}, and let $\mathcal{S}$ be the trees in $\mathcal{O}$ that pass our lexicographical constraint. We will show that $\mathcal{S}$ is the set of $G$'s nonequivalent spanning trees up to $\text{Aut}_\text{semi}(G)$.

Let $T$ and $T^\prime$ be any two trees in $\mathcal{O}$ such that $T = \sigma T^\prime$ for some $\sigma \in \text{Aut}_\text{semi}(G)$. If $\sigma \in \text{Aut}_\text{or}(G)$, then $T$ clearly equals $T^\prime$, since the trees in $\mathcal{O}$ are nonequivalent up to $\text{Aut}_\text{or}(G)$ by Theorem \ref{serialtrees}. On the other hand, if $\sigma \notin \text{Aut}_\text{semi}(G)$, then we let $T_A$ and $T_B$ be the sequence of $T$'s multisets on $A$ and $B$, respectively:
\[T_A = \big(M_{a(1), b(1)}, M_{a(2), b(2)}, \ldots, M_{a(p), b(p)}\big)\quad \text{and}\quad T_B = r\big(M_{a(1), c(1)}, M_{a(2), c(2)}, \ldots, M_{a(p), c(p)}\big),\]
where class $\overrightarrow{E_i}$ gets multiset $M_{a(i), b(i)}$ and class $\overleftarrow{E_i}$ gets $r_iM_{a(i), b(i)}$. (As per Theorem \ref{semiparallelaut}, $r_i$ is an isomorphism from $\overrightarrow{E_i}$ to $\overleftarrow{E_i}$ that fixes $s$ and $t$, and $r = r_1r_2\cdots r_k$; this effectively lets us index the multisets of $\overrightarrow{E_i}$ and $\overleftarrow{E_i}$ the same way.) We similarly define $T^\prime_A$ and $T^\prime_B$ for tree $T^\prime$:
\[T^\prime_A = \big(M_{a^\prime(1), b^\prime(1)}, M_{a^\prime(2), b^\prime(2)}, \ldots, M_{a^\prime(p), b^\prime(p)}\big)\quad \text{and}\quad T^\prime_B = r\big(M_{a^\prime(1), c^\prime(1)}, M_{a^\prime(2), c^\prime(2)}, \ldots, M_{a^\prime(p), c^\prime(p)}\big),\]
where class $\overrightarrow{E_i}$ gets multiset $M_{a^\prime(i), b^\prime(i)}$ and class $\overleftarrow{E_i}$ gets $r_iM_{a^\prime(i), b^\prime(i)}$. Since $\sigma \notin \text{Aut}_\text{or}(G)$, by Theorem \ref{semiparallelaut}, we have that $\sigma = r\tau$ for some $\tau \in \text{Aut}_\text{or}(G)$. In addition, $r$ swaps the multisets on $A$ and $B$, and $\tau$ does not change the multiset assignments (since $\tau$ is an oriented automorphism), so $T_A = rT_B^\prime$ and $T_B = rT_A^\prime$. Therefore, 
\begin{align*}
    \big(M_{a^\prime(1), b^\prime(1)}, M_{a^\prime(2), b^\prime(2)}, \ldots, M_{a^\prime(p), b^\prime(p)}\big) &= \big(M_{a^\prime(1), c^\prime(1)}, M_{a^\prime(2), c^\prime(2)}, \ldots, M_{a^\prime(p), c^\prime(p)}\big)\\[6pt]
    \big(M_{a(1), c(1)}, M_{a(2), c(2)}, \ldots, M_{a(p), c(p)}\big) &= \big(M_{a^\prime(1), b^\prime(1)}, M_{a^\prime(2), b^\prime(2)}, \ldots, M_{a^\prime(p), b^\prime(p)}\big),
\end{align*} so $b(i) = c^\prime(i)$ and $c(i) = b^\prime(i)$ for $1 \leq i \leq p$. Thus, $T$ has the same sequence of multisets on $A$ that $T^\prime$ has on $B$, and vice versa. If we have the even stronger condition that $T$ and $T^\prime$ have the same multisets on $A$ and the same multisets on $B$, then $T$ and $T^\prime$ are oriented-equivalent by Theorem \ref{parallelspanningtrees} (they have the same multisets on each class $E_i$, even without orientation). If $T$ and $T^\prime$ are not oriented-equivalent, then either $T_A > T_B$ and so $T^\prime_A < T^\prime_B$ lexicographically, or vice versa (we rule out $T_A$ and $T_B$ being lexicographically equal since this implies that $T$ is oriented-equivalent to $T^\prime$). As a result, if $T = \sigma T^\prime$ for some $\sigma \in \text{Aut}_\text{semi}(G)$, then $\mathcal{S}$ contains exactly one of $T$ and $T^\prime$.

And so, our lexicographical scheme does not generate any equivalent trees up to $\text{Aut}_\text{semi}(G)$. Now, we must show that it generates all of the nonequivalent spanning trees of $G$. Every tree $T \in \textsf{ST}(G)$ is oriented-equivalent to some tree $T^\prime \in \mathcal{O}$. If $T^\prime$ satisfies the lexicographical constraint, we are done. If not, then there must exist a tree $\sigma T^\prime \in \mathcal{O}$ which does, for some $\sigma \in \text{Aut}_\text{semi}(G) - \text{Aut}_\text{or}(G)$. The tree $\sigma T^\prime$ is equivalent to $T^\prime$  up to $\text{Aut}_\text{semi}(G)$ and therefore equivalent to $T$ as well. Hence, $T$ is equivalent to either $T^\prime$ or $\sigma T^\prime$, one of which must be in $\mathcal{S}$. We conclude that $\mathcal{S}$ contains the nonequivalent trees of $G$ up to $\text{Aut}_\text{semi}(G)$, as desired.

If there is no way to partition our oriented isomorphism classes into sets $A$ and $B$, then there exists no permutation $\tau \in S_k$ such that $\overrightarrow{G_i} = r_i\overleftarrow{G_{\tau(i)}}$. By Theorem \ref{semiparallelaut}, $\text{Aut}_\text{semi}(G)$ simply equals $\text{Aut}_\text{or}(G)$, so $\mathcal{O}$ itself contains the nonequivalent spanning trees of $G$.
\end{proof}
\subsection{Algorithms for Semioriented Graphs} It is quite straightforward to adapt the existing algorithms for oriented series-parallel graphs for the semioriented case. In fact, the lexicographical constraint is only a ``top-level\rq\rq\ modification to the algorithms for oriented series-parallel graphs. When we compute the nonequivalent spanning trees of semioriented graphs, our recursive calls all operate on \emph{oriented} principal subgraphs $(G_i, s_i, t_i)$: we only apply the constraint at the topmost layer of our algorithm to avoid generating global, semioriented-equivalent trees of $G$. Intuitively, this is because the automorphisms that exchange $s$ and $t$ apply to $G$ as a whole, not each $G_i$ locally.

To generate the nonequivalent spanning trees of semioriented serial superedges, we first check whether $\text{Aut}_\text{semi}(G) = \text{Aut}_\text{or}(G)$ by testing if $(G_i, s_i, t_i)$ is oriented-isomorphic to $(G_{k - i + 1}, t_{k - i + 1}, s_{k - i + 1})$ for $1 \leq i \leq k$ (see Section \ref{isomorphism}). If this is the case, we compute the nonequivalent spanning trees of each oriented $(G_i, s_i, t_i)$ (recalling that the trees for $(G_i, s_i, t_i)$ are the same as $(G_{k - i + 1}, s_{k - i + 1}, t_{k - i + 1})$ once we apply the correct isomorphism $r_i$ from $G_i$ to $G_{k - i + 1}$). By Theorem \ref{seriallex}, the nonequivalent spanning trees of $G$ can be formed by computing the union of the nonequivalent spanning trees of each $(G_i, s_i, t_i)$ under our lexicographical constraint---which we have just computed. Thus, we can generate the nonequivalent spanning trees of $G$ by correctly assigning spanning trees to each $G_i$, making sure that the sequence of trees on the left half is lexicographically greater than or equal to the sequence on the right half reversed (see Section \ref{sec:seriallexicographicalconstraint}).

We make similar modifications for semioriented parallel superedges. When we partition our subgraphs into oriented isomorphism classes and check whether $\overrightarrow{G_i}$ is oriented-isomorphic to the class representative $\overrightarrow{R_j}$, we now also test whether $\overrightarrow{G_i}$ is oriented-isomorphic to $\overleftarrow{R_j}$. This allows us to easily divide our classes into a forward direction $A$ (of $\overrightarrow{E_i}$'s) and a backward direction $B$ (of corresponding $\overleftarrow{E_i}$'s). By Theorem \ref{parallellex}, we can form the nonequivalent spanning trees of $(G, \{s, t\})$ by assigning spanning tree and near tree multisets to each $E_i$, making sure to enforce our parallel lexicographical constraint (see Section \ref{sec:parallellexicographicalconstraint}).

Since these two semioriented algorithms are simple modifications of Algorithms \ref{alg:serialspanning} and \ref{alg:parallelspanning}, we omit their pseudocode here. Like their oriented counterparts, the semioriented algorithms run in worst-case output-linear time, in this case $\Theta\big(\vert \textsf{ST}(G)/\text{Aut}_\text{semi}(G)\vert\big)$.

\section{Unoriented Graphs}
\label{unoriented}
Unoriented series-parallel graphs prove to be the most challenging case, since there are no explicitly designated terminals $s$ and $t$ and therefore no restrictions on how automorphisms act on these terminals. Consequently, unoriented graphs can possess many more symmetries beyond the ones we have thus considered: local automorphisms of each principal subgraph $G_i$, swaps between oriented-equivalent $G_i$, and global reversals that exchange $s$ and $t$.

For example, let $G$ be the cycle graph $C_n$, which is an unoriented series-parallel graph since we can pick any two vertices as our terminals---forming a parallel superedge with two principal subgraphs in the process. We still have all of the original automorphisms contained within $\text{Aut}_\text{semi}(G)$, across all choices of the terminals $s$ and $t$. But once we fix these terminals, $G$ has automorphisms that cannot be explained through swaps, reversals, and local automorphisms alone. For example, we can \emph{rotate} $G$, and most of $G$'s rotations do not fit into our original framework, as shown in Figure \ref{fig:emergentauts} below. In fact, $\text{Aut}(G) \cong D_{2n}$, the dihedral group of order $2n$, since we can make $n$ rotations and also reflect $G$ over one of its $n$ axes.
\begin{figure}[h]
\label{fig:emergentauts}
\centering
\begin{tikzpicture}
    \node[circle,thick,draw,minimum size=0.7cm,fill=gray!20] (s) at (0, 0) {$s$};
    \node[circle,thick,draw,minimum size=0.7cm] (2) at (1.5, 1.5) {2};
    \node[circle,thick,draw,minimum size=0.7cm] (3) at (4.5, 1.5) {3};
    \node[circle,thick,draw,minimum size=0.7cm,fill=gray!20] (t) at (6, 0) {$t$};
    \node[circle,thick,draw,minimum size=0.7cm] (6) at (1.5, -1.5) {6};
    \node[circle,thick,draw,minimum size=0.7cm] (5) at (4.5, -1.5) {5};
    \draw[-, ultra thick, black] (2) -- (s);
    \draw[-, ultra thick, gray!75] (s) -- (6);
        \draw[-, ultra thick, gray!75] (5) -- (t);
    \draw[-, ultra thick ] (t) -- (3)  node[midway, above right = 0.5mm]{\huge $\circlearrowleft$};
    \draw[-, ultra thick] (2) -- (3) node[midway, below = 0.5mm] {$G_1$};
    \draw[-, ultra thick, gray!75] (5) -- (6) node[midway, above = 0.5mm] {$G_2$};

    \node[circle,thick,draw,minimum size=0.7cm] (21) at (9, 0) {2};
    \node[circle,thick,draw,minimum size=0.7cm] (31) at (10.5, 1.5) {3};
    \node[circle,thick,draw,minimum size=0.7cm,fill=gray!20] (t1) at (13.5, 1.5) {$t$};
    \node[circle,thick,draw,minimum size=0.7cm] (51) at (15, 0) {5};
    \node[circle,thick,draw,minimum size=0.7cm,fill=gray!20] (s1) at (10.5, -1.5) {$s$};
    \node[circle,thick,draw,minimum size=0.7cm] (61) at (13.5, -1.5) {6};
    \draw[-, ultra thick, black] (21) -- (s1);
    \draw[-, ultra thick, gray!75] (s1) -- (61);
        \draw[-, ultra thick, gray!75] (51) -- (t1);
    \draw[-, ultra thick ] (t1) -- (31);
    \draw[-, ultra thick] (21) -- (31) node[midway, below right=0.5mm] {$G_1$};
    \draw[-, ultra thick, gray!75] (51) -- (61) node[midway, above left=0.5mm] {$G_2$};

     \node (GEquals) at ($(t.center)!0.5!(21.center)$) {\Large $=$};
\end{tikzpicture}

\caption{When $G$ is unoriented and we pick terminals $s$ and $t$, $\emph{Aut}(G)$ often contains automorphisms beyond swaps, reversals, and the local automorphisms of each $G_i$. Here, $G = C_6$ and has two oriented-equivalent principal subgraphs $G_1$ and $G_2$. If we rotate $G$ counterclockwise by one edge, then we preserve the original structure of $G$ without swapping $G_1$ and $G_2$, exchanging $s$ and $t$, or applying nontrivial automorphisms from $\emph{Aut}_\emph{or}(G_1)$ and $\emph{Aut}_\emph{or}(G_2)$.}
  \label{fig:rotate}

\end{figure}
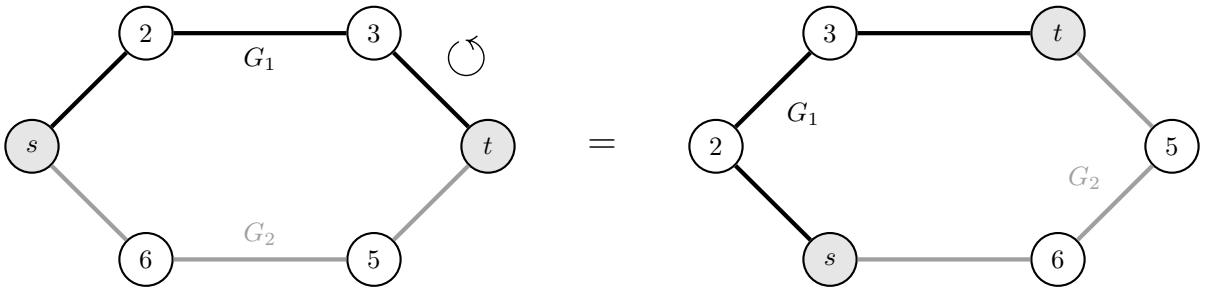

However, we can still impose some structure on an unoriented serial superedge $G$. No matter how we choose our outer terminals $s = s_1$ and $t = s_{k + 1}$, the inner terminals $s_2, s_3, \ldots, s_k$ are precisely the articulation points of $G$. As a result, the inner terminals are determined even if the outer terminals $s$ and $t$ are not, so we can still speak meaningfully about $G$'s inner principal subgraphs $G_2, G_3, \ldots, G_{k - 1}$. Regardless of the specific choice of $s$ and $t$, we also observe that $s_2$ must be the sink terminal of $G_1$ and $s_k$ the source terminal of $G_k$. Therefore, the unoriented graph $G$ by itself gives us many of the properties of oriented and semioriented series-parallel graphs; we can thus recurse as before on the inner subgraphs $G_2, G_3, \ldots, G_{k - 1}$, finding their oriented-nonequivalent spanning trees. The only open question is how we find the nonequivalent spanning trees of $G$'s ``half-oriented\rq\rq\ subgraphs $G_1$ and $G_k$.

Based on our cycle graph example above, unoriented parallel superedges at first appear to be the more complex of the two cases. However, in many parallel superedges, we can find a set of two vertices $\{u, v\}$ that is invariant under automorphism, meaning that every automorphism $\sigma \in \text{Aut}(G)$ either fixes or exchanges $u$ and $v$. When this is the case, we set $s = u$ and $t = v$, and $\text{Aut}(G)$ is simply equal to $\text{Aut}_\text{semi}(G, \{s, t\})$. Consequently, the nonequivalent spanning trees of the unoriented graph $G$ are the same as the nonequivalent spanning trees of the semioriented graph $(G, \{s, t\})$.

For instance, let us take the same unoriented parallel superedge $G$ in Figure \ref{fig:rotate} but add an edge between $s$ and $t$, as shown in \ref{fig:addanedge} below. Then, every automorphism $\sigma \in \text{Aut}(G)$ must either fix or exchange $s$ and $t$ since they are the only vertices of degree 3, so $\text{Aut}(G) = \text{Aut}_\text{semi}(G, \{s, t\})$.
\begin{figure}[h]
\label{fig:addanedge}
\centering
\begin{tikzpicture}
    \node[circle,thick,draw,minimum size=0.7cm,fill=gray!20] (s) at (0, 0) {$s$};
    \node[circle,thick,draw,minimum size=0.7cm] (2) at (1.5, 1.5) {2};
    \node[circle,thick,draw,minimum size=0.7cm] (3) at (4.5, 1.5) {3};
    \node[circle,thick,draw,minimum size=0.7cm,fill=gray!20] (t) at (6, 0) {$t$};
    \node[circle,thick,draw,minimum size=0.7cm] (6) at (1.5, -1.5) {6};
    \node[circle,thick,draw,minimum size=0.7cm] (5) at (4.5, -1.5) {5};
    \draw[-, ultra thick, black] (2) -- (s);
    \draw[-, ultra thick] (s) -- (6);
        \draw[-, ultra thick] (5) -- (t);
    \draw[-, ultra thick ] (t) -- (3);
    \draw[-, ultra thick] (2) -- (3);
    \draw[-, ultra thick] (5) -- (6);
    \draw[-, ultra thick] (s) -- (t);
\end{tikzpicture}

\caption{When we add an edge connecting $s$ and $t$, every automorphism $\sigma \in \emph{Aut}(G)$ must now fix or exchange $s$ and $t$. Hence, $\emph{Aut}(G)$ is equal to $\emph{Aut}_\emph{semi}(G, \{s, t\})$, so we can treat the unoriented graph $G$ as the semioriented graph $(G, \{s, t\})$ instead.}
\end{figure}
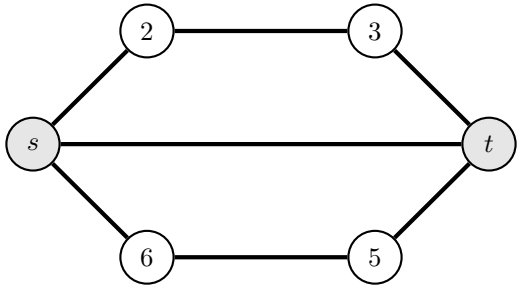

As a result, unoriented parallel superedges often reduce to the semioriented case, provided that we can find two vertices $u$ and $v$ with this property. Based on these observations, we will present a few open questions about unoriented graphs in the next section.

\section{Conclusion}
\label{sec:conclusion}
In this paper, we have shown how to optimally generate the spanning trees of a series-parallel graph $G$---under its two most standard definitions---up to $G$'s automorphisms. After introducing the broader problem of generating spanning trees up to automorphism, we showed how to produce the nonequivalent spanning trees of an oriented series-parallel graph $G$ in output-linear time. We then demonstrated how we can perform lexicographical bookkeeping to adapt these oriented algorithms for semioriented series-parallel graphs instead. Finally, we introduced unoriented series-parallel graphs, which sometimes---but not always---evade the constraints of oriented and semioriented graphs.

To close, we offer the following open questions about generating spanning trees up to automorphism:\\

\begin{enumerate}
    \item $\text{[Unoriented graphs.]}$ Is there an algorithm that generates the nonequivalent spanning trees of an unoriented series-parallel graph $G$ in output-linear time $\Theta\big(n\cdot \vert \textsf{ST}(G)/\text{Aut}(G)\vert\big)$? Can we reduce unoriented graphs to the oriented/semioriented case? If so, do we still need to compute $\text{Aut}(G)$, or the vertex orbits of $G$ up to $\text{Aut}(G)$ (so that we can find two vertices that have to be mapped onto each other and thus act as terminals)?\\
    
    \item $\text{[Implicit generation.]}$ Throughout our discussion, we attempt to generate the nonequivalent spanning trees of a series-parallel graph $G$ explicitly rather than implicitly. Is it possible to generate the nonequivalent spanning trees implicitly instead, so that we can return them in time essentially proportional to $\vert \textsf{ST}(G)/\text{Aut}(G)\vert$, the number of nonequivalent trees itself? Can we implicitly generate these nonequivalent spanning trees under a Gray code order, as per Knuth and Smith? \cite{10.5555/1984890, smith}\\

    \item $\text{[General graphs.]}$ Can we extend the ideas here to generate the nonequivalent spanning trees of a general graph $G$? From the standpoint of complexity theory, how hard is this problem on general graphs? Can we significantly improve upon brute force, and if so, do we still have to compute $\text{Aut}(G)$ explicitly?\\

    \item $\text{[Counting nonequivalent trees.]}$ When is it substantially easier to \emph{count} the nonequivalent spanning trees of a graph $G$  than generate them? Are there cases where generating the trees is effectively our only option to count them all?\\
\end{enumerate}

In the problem of generating spanning trees up to graph automorphism, series-parallel graphs prove to be an especially interesting case. For many graphs, we can find closed-form formulas for $\vert \textsf{ST}(G)/\text{Aut}(G)\vert$ by using tools from abstract algebra, like Burnside's lemma; however, it is another problem entirely to generate these trees, not just count them. Series-parallel graphs are an example where both problems are conceptually nontrivial yet computationally easy: they give us a means to simultaneously count and generate these trees, even though there is not one singular closed-form formula or set of nonequivalent trees that applies to all series-parallel graphs of a given size. 

\section*{Acknowledgements}
We would like to extend our deep thanks to Peter Kagey of California State Polytechnic University, Pomona, as well as Dagan Karp, Michael Orrison, Nicholas Pippenger, and Timothy Randolph of Harvey Mudd College, for many illuminating conversations about enumerative combinatorics, abstract algebra, and graph algorithms.

\bibliographystyle{siamplain}
\bibliography{references}
\end{document}